\theoremstyle{plain}
\newtheorem{theorem}{Theorem}[section]
\newtheorem{lemma}[theorem]{Lemma}
\newtheorem{corollary}[theorem]{Corollary}
\newtheorem{claim}[theorem]{Claim}
\theoremstyle{definition}
\newtheorem{example}[theorem]{Example}
\newtheorem{assumption}[theorem]{Assumption}
\theoremstyle{remark}
\newtheorem{remark}[theorem]{Remark}
\icmltitlerunning{Decomposable Submodular Maximization in Federated Setting}
\begin{document}

\twocolumn[
\icmltitle{Decomposable Submodular Maximization in Federated Setting}



\icmlsetsymbol{equal}{*}

\begin{icmlauthorlist}
\icmlauthor{Akbar Rafiey}{yyy}
\end{icmlauthorlist}

\icmlaffiliation{yyy}{Halıcıoğlu Data Science Institute, University of California, San Diego, USA}
\icmlcorrespondingauthor{Akbar Rafiey}{arafiey@ucsd.edu}

\icmlkeywords{Federated Optimization, Constrained Optimization, Submodular Function}

\vskip 0.3in
]



\printAffiliationsAndNotice{} 

\begin{abstract}
Submodular functions, as well as the  sub-class of decomposable submodular functions, and their optimization appear in a wide range of applications in machine learning, recommendation systems, and welfare maximization. 
However, optimization of decomposable submodular functions with millions of component functions is computationally prohibitive. Furthermore, the component functions may be private (they might represent user preference function, for example) and cannot be widely shared. To address these issues, we propose a {\em federated optimization} setting for decomposable submodular optimization.  In this setting, clients have their own preference functions, and a weighted sum of these preferences needs to be maximized.  We implement the popular {\em continuous greedy} algorithm in this setting where clients take parallel small local steps towards the local solution  and then the local changes are aggregated at a central server. To address the large number of clients, the aggregation is performed only on a subsampled set. Further, the aggregation is performed only intermittently between stretches of parallel local steps, which reduces communication cost significantly. We show that our federated algorithm is guaranteed to provide a good approximate solution, even in the presence of above cost-cutting measures. Finally, we show how the federated setting can be incorporated in solving fundamental discrete submodular optimization problems such as Maximum Coverage and Facility Location.
\end{abstract}

\section{Introduction}
Submodularity of a set function implies a natural diminishing returns property where the marginal benefit of any given element decreases as we select
more and more elements. Formally, a set function $F\colon 2^E \to \mathbb{R}$ is \emph{submodular} if for any $S \subseteq T \subseteq E$ and $e \in E\setminus T$ it holds that
$
    F(S \cup \{e \}) - F(S) \geq F(T \cup \{e\}) - F(T).
$ Decomposable submodular functions is an important subclass of submodular functions which can be written as sums of several component submodular functions: $F(S) = \sum_{i=1}^N f_i(S)$, for all $S\subseteq E$,
where each $f_i:2^E\to \zR$ is a submodular function on the ground set $E$ with $|E|=n$. 

Decomposable submodular functions include some of the most fundamental and well-studied submodular functions such as max coverage, graph cuts, welfare maximization etc., and have found numerous applications in machine learning \citep*{DueckF07,GomesK10, MirzasoleimanBK16,MirzasoleimanKS16,z17}, economics \cite{DobzinskiS06,Feige06,FeigeV06,PapadimitriouSS08,Vondrak08}, and data summarization and recommender systems~\cite{DueckF07,GomesK10,TschiatschekIWB14,lin2011class}. The main approach in these cases is the \emph{centralized} and \emph{sequential} greedy.


The need for scalable and efficient optimization methods, which do not require collecting raw data in a central server and ensure secure information collection, is widespread in applications handling sensitive data such as medical data, web search queries, salary data, and social networks. In many such cases, individuals and companies are reluctant to share their data and collecting their data in a central server is a violation of their privacy. Moreover, collecting and storing all data on a single server or cluster is computationally expensive and infeasible for large-scale datasets, particularly when working with high-dimensional data or complex models. Thus, there is a widespread demand for scalable optimization algorithms that are both \emph{decentralized} and prioritize \emph{privacy}. Below are some examples that motivate the focus of this paper.




\begin{example}[Welfare Maximization]
The welfare maximization problem aims to maximize the overall utility or welfare of a group of individuals or agents $a_1,\dots, a_N$. In this problem, there is a set of items or goods $E$, and each individual $a_i$ has a certain preference or utility expressed as a submodular function $f_i:2^E\to\zR_{+}$ that assigns a value to each combination of items. The goal is to partition $E$ into disjoint subsets $S_1,\dots,S_N$ in order to maximize the social welfare $\sum_{i=1}^N f_i(S_i)$. In an important special case, called Combinatorial Public Projects \cite{GuptaLMRT10,PapadimitriouSS08}, the goal is to find a subset $S\subseteq E$ of size at most $k$ maximizing $F(S) = \sum_{i=1}^N f_i(S)$. This problem appears in different fields, such as resource allocation, public goods provision, market design, and has been intensively studied \cite{KhotLMM08,LehmannLN06,MirrokniSV08}. An optimal approximation algorithm is known in \emph{value oracle} model in which it is required to have access to the value of $f_i(S)$ for each agent and any $S\subseteq E$ \citep{CalinescuCPV11}. However, in scenarios where agents are hesitant to disclose their data to a central server and storing all data on a single server is computationally infeasible, the demand for a decentralized and private submodular maximization algorithm becomes imperative.
\end{example}

\begin{example}[Feature Selection]
Enabling privacy-protected data sharing among clinical centers is crucial for global collaborations. Consider geographically dispersed hospitals $H_1,\dots,H_N$, with each hospital maintaining its own data that it is unwilling to share. The goal is to identify a small subset of features that effectively classifies the target variable across the entire dataset over all hospitals. A decentralized and efficient feature selection algorithm is crucial for uncovering hidden patterns while maintaining data ownership. By adopting a decentralized approach, hospitals can balance collaborative knowledge discovery and data privacy. One approach is to maximize a submodular function capturing the mutual information between features and the class labels \cite{krause2005near}.  
\end{example}

\noindent{\bf Federated setting for learning and optimization.}
\emph{Federated setting} \cite{konevcny2016federated,McMahanMRHA17} is a novel and practical framework that addresses issues regarding privacy, data sharing, and centralized computation. On one hand, it is a distributed and collaborative approach that allows multiple parties, such as different organizations or devices, to train a shared model or collaboratively optimize an objective function while keeping their data locally. This approach helps to protect the privacy of data by ensuring that the raw data is never shared or moved outside of the individuals' systems. Instead, only the model updates are exchanged and aggregated to improve the shared model and improve the objective value. 


On the other hand, the federated framework reduces the amount of data that needs to be transferred and processed at any one time, which can significantly reduce the computational complexity of the overall process. Additionally, federated setting can also take advantage of the computational resources available at each party, such as the processing power of mobile devices or edge devices, which can further reduce computational load on the server. This way, federated learning can train models more efficiently, even with large-scale datasets and complex models, and provide a scalable solution for distributed learning.

\noindent{\bf Problem definition and setting.} 
In this paper, we introduce the problem of maximizing a submodular function in the federated setting. 
Let $E$ be a ground set of size $n$ and $c_1,\dots,c_N$ be $N$ clients each of whom has a \emph{private} interest over $E$. Each client's interest is expressed as a submodular function. Let $f_i:2^E\to \mathbb{R}_{+}$ be the associated submodular function of the $i$-th client. A central server wants to solve the following constrained distributed optimization model
\begin{align}
\label{eq:objective-constrained}
    \max_{S\in \mc{I}}\left\{F(S)=\sum_{i=1}^N p_i f_i(S)\right\},
\end{align}
where $\mc{I}$ is the independent sets of a matroid $\mc{M}$ with ground set $E$, and
 $p_i$ are pre-defined weights such that $\sum_{i=1}^N p_i=1$. For instance, they can be set to $1/N$, or the fraction of data owned by each client.
The constraint implies sets of particular properties, e.g., subsets of size at most $k$. 
Note that, the unconstrained optimization is a special case of this.

In the optimization problem  \eqref{eq:objective-constrained}
the  data can be massively distributed over 
the number of clients  $N$, which can be huge. Moreover unlike the traditional distributed setting, in the federated setting the server does not have control over clients' devices nor on how data is distributed. For example, when a mobile phone is turned off or WiFi access is unavailable, the central server will lose connection to this device. Furthermore, client's  objective can be very different depending on their local datasets. To minimize communication overhead and server computation load, the number of communication rounds need to be minimal. 

\noindent{\bf Constraints.} 
We formally discuss factors of efficiency and restrictions that should be considered.
 
 \noindent{1. \underline{Privacy}:} One of the main appeals of decentralized and federated setting is to preserve the privacy. There are several models of privacy and security that have been considered in the literature such as Differential Privacy (DP) and Secure Aggregator (SecAgg), and a mix of these two. Single-server SecAgg is a cryptographic secure multi-party computation (MPC) that enables clients to submit vector inputs, such that the server (an aggregator) can \emph{only} decipher the combined update, not individual updates. This is usually achieved via additive masking over a finite group \cite{BellBGL020-SecAgg,BonawitzIKMMPRS16-SecAgg}. Note that secure aggregation alone does not provide any privacy guarantees. To achieve a DP-type guarantee, noise can be added locally, with the server aggregating the perturbed local information via SecAgg. This user-level DP framework has recently been adopted in private federated learning \cite{AgarwalSYKM18,AgarwalKL21,KairouzL021,DBLP:conf/icml/ChenCKS22, DPFL-wang2023federated}.
 
 In this paper we use Single-server SecAgg model of privacy, a dominant and well-established approach in the field. We leave other notions of privacy, such as a mix of DP and SecAgg, for future works. There has been a recent and concurrent progress towards this direction in terms of cardinality constraints \cite{DPFL-wang2023federated}.
 
 \noindent{2. \underline{Communication and bit complexity}:}
   There are a few aspects to this, firstly the number of communication rounds should be as small as possible. Second, the information communicated between should require low bandwidth and they better require small bit complexity to encode.
   
    \noindent{3. \underline{Convergence and utility}:}
    While the above impose strong restrictions, a good decentralized submodular maximization algorithm should not scarify the convergence rate by too much and should yield to an accurate and acceptable result in comparison to the centeralized methods.

\noindent{\bf Our contributions.} We present the first federated (constrained) submodular maximization algorithm converging close to optimum guarantees known in centralized settings.

$\bullet$ We propose a decentralized version of the popular Continuous Greedy algorithm 
\texttt{Federated Continuous Greedy} (\texttt{\textsc{Fed}CG)} and prove its convergence whenever the  client functions are nonnegative monotone submodular achieving the optimal multiplicative approximation factor $(1-\nicefrac{1}{e})$ with a small additive (\Cref{sec:FCG}). 

$\bullet$ We incorporate  important and practical scenarios that are relevant for federated setting such as partial client selection, low communication rounds and computation cost. We give rigorous theoretical guarantees under each scenarios matching the optimal multiplicative approximation of $(1-\nicefrac{1}{e})$ and small additive error (\Cref{sec:fcg+}). 

$\bullet$ We introduce a new algorithm that serves as a discrete federated optimization algorithm for submodular maximization. Its convergence and applications to discrete problems such as \texttt{Facility Location} and \texttt{Maximum Coverage} are explored (Section \ref{sec:discrete-fl}).


\subsection{Related Work}

\paragraph{FedAvg, its convergence, and assumptions.} The concept of Federated Learning (FL) \cite{McMahanMRHA17} has found application in various domains such as natural language processing, computer vision, and healthcare. The popular FL algorithm, Federated Averaging (\texttt{\textsc{Fed}Avg}), is an extension of Local SGD that aims to reduce communication costs in distributed settings \cite{GorbunovHR21-localsgd,Stich19,WangJ21-cooperativeSGD,YuYZ19-FEDAVG-heterogeneous}. However, despite its practical benefits in addressing privacy, data heterogeneity, and computational constraints, it may not converge to a ``good enough'' solution in general \cite{PathakW20,Zhang2020}. Analyzing the convergence of \texttt{\textsc{Fed}Avg} and providing theoretical guarantees is challenging and necessitates making certain assumptions. Assumptions related to bounded gradients, convexity, Lipschitzness, statistical heterogeneity, and bounded variance of stochastic gradients for each client have been explored in recent works \cite{SCAFFOLD-Karimireddy20,LiHYWZ20,WoodworthPSDBMS20,WangTSLMHC19,YuYZ19-FEDAVG-heterogeneous}.

\noindent{\bf Decentralized / distributed submodular maximization.}
The main approach to submodular maximization is the greedy approach which in fact, in the centralized setting yields the tight approximation guarantee in various scenarios and constraints e.g., see \cite{NemhauserWF78,Vondrak08,CalinescuCPV11}. Centeralized submodular maximization under privacy constraints is an active research area \cite{MitrovicB0K17,RafieyY20,ChaturvediNZ21}. However the sequential nature of the greedy approach makes it challenging to scale it to massive datasets. This issue is partially addressed by the means of Map-Reduce style algorithms \cite{KumarMVV15-mapreduc} as well as several elegant algorithms in the distributed setting \cite{MirzasoleimanKS16,BarbosaENW15}. Recent work of \citet{MokhtariHK18-Decentralized} ventures towards decentralized submodular maximization for continuous  submodular functions. In general continuous submodular functions are not convex nor concave and there has been a line of work to optimize continuous submodular functions using SGD methods \cite{HassaniSK17-SGD-submodular}. \citeauthor{MokhtariHK18-Decentralized} under several assumptions, such as assuming clients' local objective functions are monotone, DR-submodular, Lipschitz continuous, and have bounded gradient norms, prove that \emph{Decenteralized Continuous Greedy} algorithm yields a feasible solution with quality $O(1- 1/\mathrm{e})$ times the optimal solution. The setting in \cite{MokhtariHK18-Decentralized} is fundamentally different from the federated setting in a sense that they require sharing gradient information of the clients with the server or with the neighboring nodes in an underlying graph. Perhaps the most closely related method to our work is due to \citet{dadras2022federatedFW,ZhangZGZW22-decentralized}. \citet{dadras2022federatedFW} consider Frank-Wolfe Algorithm \cite{frank1956algorithm} in the federated setting and propose \texttt{Federated Frank-Wolfe} (\texttt{\textsc{Fed}FW}) algorithm and analyze its convergence for both convex and non-convex functions and under the $L$-Lipschitzness and bounded gradients assumptions. 


\section{Preliminaries}
Let $E$ denote the ground set, $|E|=n$. For a vector $\mb{x}\in\zR^{|E|}$ and a set $S\subseteq E$, $\mb{x}(S)$ denotes $\sum_{e\in S}\mb{x}(e)$. A submodular function $f:2^E\to \zR$ is monotone if $f(S)\leq f(T)$ for every $S\subseteq T\subseteq E$. Throughout this paper we assume $f(\emptyset)=0$.

\noindent{\bf Multilinear extension.}
    The multilinear extension $\widehat{f}:[0, 1]^{|E|}\to \zR$ of a set function $f:\{0,1\}^{|E|}\to \zR$ is
    \begin{align*}
        \widehat{f}(\bx) &=\sum_{S\subseteq E}f(S)\prod\limits_{e\in S}\bx(e)\prod\limits_{e\not\in S}(1-\bx(e))= \mathbb{E}_{R\sim \mb{x}}[f(R)]
    \end{align*}
    where $R\subseteq E$ is a random set that contains each element $e \in E$  independently with probability $\mb{x}(e)$ and excludes it with probability $1-\mb{x}(e)$.
    We write $R \sim \mb{x}$ to denote that $R\subseteq E$ is a random set sampled according to $\mb{x}$. 
   
    Observe that for all $S\subseteq E$ we have $\widehat{f}(\mb{1}_S) = f(S)$. For monotone non-decreasing submodular function $f$, $\widehat{f}$ has the following properties \cite{CalinescuCPV11} that are crucial in analyses of our algorithms:
    
      \noindent{1.} $\widehat{f}$ is monotone, meaning $\frac{\partial \widehat{f}}{\partial \mb{x}(e)}\geq 0$. Hence, $\nabla \widehat{f}(\mb{x})=(\frac{\partial \widehat{f}}{\partial \mb{x}(1)},\dots,\frac{\partial \widehat{f}}{\partial \mb{x}(n)})$ is a nonnegative vector. 
      
      \noindent{2.} $\widehat{f}$ is concave along any direction $\mathbf{d}\geq \mathbf{0}$.
  
Note that $\frac{\partial \widehat{f}}{\partial \mb{x}(e)}=\mathbb{E}_{R\sim \mb{x}}[f(R\cup\{e\})-f(R\setminus \{e\})]$. That is the expected marginal contribution for $e$ where the expectation is taken over $R\subseteq E\setminus \{e\}$ sampled according to $\mb{x}$. By submodularity, for any $\mb{x}\in [0,1]^n$,
\begin{align}
    |\nabla \widehat{f}(\mb{x})|_{\infty} \leq \max_{e\in E}f(\{e\}):=m_f.
\end{align}
\noindent{\bf Matroids and matroid polytopes.}
 A pair $\mathcal{M} = (E,\mc{I})$ of a set $E$ and $\mc{I}\subseteq 2^E$ is called a \emph{matroid} if 1) $\emptyset \in \mc{I}$, 2) $A\in \mc{I}$ for any $A \subseteq B\in \mc{I}$, and 3) for any $A,B\in \mc{I}$ with $|A| < |B|$, there exists $e \in B\setminus A$ such that $A\cup \{e\}\in \mc{I}$.
We call a set in $\mathcal{I}$ an \emph{independent set}. We sometimes abuse notation and use $S\in\mc{M}$. The \emph{rank function} $r_{\mc{M}}\colon 2^E \to \mathbb{Z}_+$ of $\mc{M}$ is 
$r_{\mc{M}}(S)=\max\{|I|:I\subseteq S, I\in \mc{I}\}$.
An independent set $S\in \mc{I}$ is called a \emph{base} if $r_{\mc{M}}(S)=r_{\mc{M}}(E)$. We denote the rank of $\mc{M}$ by $r(\mc{M})$.
The \emph{matroid polytope} $\mc{P}(\mc{M}) \subseteq \zR^E$ of $\mc{M}$ is $\mc{P}(\mc{M})=\mathrm{conv}\{\mathbf{1}_I : I\in \mc{I}\}$ where $\mathrm{conv}$ denotes the convex hull.
Or equivalently~\cite{edmonds2003submodular}, 
$
  \mc{P}(\mc{M})=\left\{\mb{x}\geq \mb{0} : \mb{x}(S)\leq r_{\mc{M}}(S) ;~ \forall S\subseteq E \right\}.
$
 

\noindent{\bf The Continuous Greedy Algorithm.}
Our algorithms for maximizing a submodular function in federated settings are based on the \texttt{Continuous Greedy} (\texttt{CG}) algorithm. We briefly explain this algorithm. The results mentioned are from \cite{CalinescuCPV11,Vondrak08}. Let $\mc{M}=(E,\mc{I})$ be a matroid and $\mc{P}(\mc{M})$ be its matroid polytope of rank $r$, let $f$ be a nonnegative and monotone submodular function and $\widehat{f}$ be its multilinear extension. \texttt{CG} starts with $\mb{x}^{(0)}=\mb{0}$. For every $t\in\{0,1,2,\dots,T-1\}$ it computes $\mb{x}^{(t+1)}$ using the following update step $\mb{x}^{(t+1)}\gets \mb{x}^{(t)} + \eta \mb{v}^{(t)}$, 
where $\mb{v}^{(t)}=\argmax_{\mb{w}\in\mc{P}}\langle \mb{w},\nabla\widehat{f}(\mb{x}^{(t)})\rangle$. For $\texttt{OPT}=\max_{\mb{x}\in\mc{P}}\widehat{f}(\mb{x})$ we have $(1-(1-\eta)^T)\texttt{OPT}\leq  \widehat{f}(\mb{x}^{(T)})+C\eta^2/2$. Here, the constant $C$ depends on the Lipschitz of the function, and $\mb{x}^{(T)}\in\mc{P}$ as it is a convex combination of vectors from the polytope. For $\eta=1/T$ and large enough $T$ we get $(1-1/\mathrm{e})\texttt{OPT}\leq \widehat{f}(\mb{x}^{(T)})+\epsilon$. Given $\mb{x}^{(T)}\in\mc{P}$, there are rounding procedures to obtain $S\in\mc{I}$ with $\widehat{f}(\mb{x}^{(T)})\leq f(S)$. The approximation factor $1-1/\mathrm{e}$ is the best possible assuming $\mb{P}\neq \mb{NP}$ \citep{Feige98-hardness}.
\section{Federated Continuous Greedy}
\label{sec:FCG}
In this section, we propose our \texttt{Federated Continuous Greedy}~(\texttt{\textsc{Fed}CG)} method. We start with a simplistic scenario of  federated model with full participation which already shows some of the challenges that we have to overcome before delving into the partial participation model which is more computationally feasible. 

Consider optimization problem~\eqref{eq:objective-constrained} where each $f_i$ is a nonnegative monotone submodular function, $\sum_{i=1}^{N} p_i=1$, and $\mc{I}$ is the independent sets of matroid $\mc{M}=(E,\mc{I})$ of rank $r$.


\noindent{\bf Bit complexity and accuracy trade-off.} For every client $i$ and every $\mb{x}\in \mc{P}$, the vector $\mb{v}_i=\argmax_{\mb{w}\in\mc{P}}\langle \mb{w},\nabla\widehat{f_i}(\mb{x})\rangle$ is determined by maximizing a linear function $\langle \mb{w},\nabla\widehat{f_i}(\mb{x})\rangle$ over matroid polytope $\mc{P}$. This problem can be solved very efficiently. We can assume that $\mb{v}_i$ is a vertex of $\mc{P}$ and furthermore, since $\nabla\widehat{f_i}$ is a nonnegative vector, that this vertex corresponds to a base of matroid $\mc{M}$. Hence, without loss of generality $\mb{v}_i$ is the indicator vector of a base with $r$ ones and $n-r$ zeros. Thus it can be encoded using $O(r\log(n))$ bits, sublinear in the size of the ground set. On the other hand, the  vector $\nabla\widehat{f_i}(\mb{x})$ itself requires $\Tilde{O}(n)$ bits for encoding. In what follows we will see how restricting the bit complexity effects the accuracy, and the amount of computation that server should do.

\noindent{\bf \texttt{\textsc{Fed}CG} with full participation.} First consider the case where clients can send their gradients. We proceeds in rounds. Initially $\mb{x}^{(0)}=0$. On the $t$-th round, first, 
the central server broadcasts the latest model $\mb{x}^{(t)}$ to all clients. Each client $i$ after receiving the update sets $\mb{x}_{i}^{(t)}=\mb{x}^{(t)}$ and computes $\nabla\widehat{f}_i(\mb{x}^{(t)})$. The server then aggregates local information via SecAgg, and computes $\nabla\widehat{F}(\mb{x}^{(t)})=\sum_{i=1}^N p_i \nabla\widehat{f}_i({\mb{x}^{(t)}})$. After receiving $\nabla\widehat{F}(\mb{x}^{(t)})$, the server computes $\mb{v}^{(t)}=\argmax_{\mb{w}\in \mc{P}}\langle \mb{w},\nabla\widehat{F}(\mb{x}^{(t)})\rangle$ by maximizing a linear function subject to the matroid constraint and produces the new global model with learning rate $\eta$: $\bx^{(t+1)}\gets \bx^{(t)} + \eta \mb{v}^{(t)}$.
It is clear that, similar to the centeralized \texttt{CG}, large enough $T$ yields $F(\mb{x}^{(T)})\geq (1-\nicefrac{1}{\mathrm{e}})\texttt{OPT}$. This simple framework has an advantage over centralized \texttt{CG}, it is taking advantage of the computational resources available at each client.

Second consider a more challenging case where clients can send at most $\tilde{O}(r)$ bits information. We see how this restriction effects the accuracy. Our algorithm,  
\texttt{\textsc{Fed}CG}, proceeds in rounds. Initially $\mb{x}^{(0)}=0$. On the $t$-th round, first, 
the central server broadcasts the latest model $\mb{x}^{(t)}$ to all clients. Each client $i$ after receiving the update sets $\mb{x}_{i}^{(t)}=\mb{x}^{(t)}$ and performs one step of continuous greedy approach to find a direction that best aligns with her local gradient:
\begin{align}
\label{eq:v}
\mb{v}_i^{(t)}\gets\argmax_{\mb{v}\in \mc{P}} \langle \mb{v},\nabla \widehat{f}_i(\bx_i^{(t)})\rangle
\end{align}
Lastly, clients send their update directions $\mb{v}_1^{(t)},\dots,\mb{v}_N^{(t)}$ to the secure aggregator to compute $\Delta^{(t)}=\sum_{i=1}^N p_i \mb{v}_i^{(t)}$. After computing $\Delta^{(t)}$, the server produces the new global model with learning rate $\eta$:
\begin{align}
    \bx^{(t+1)}\gets \bx^{(t)} + \eta \Delta^{(t)}
\end{align}

Even in this unrealistic setting where all clients participate in each round the convergence analysis requires new insights. In order to provide an approximation guarantee for our algorithm, we shall obtain a lower bound on the function value improvement by taking the direction $\Delta^{(t)}$, that is providing a lower bound for $\langle \Delta^{(t)},\nabla\widehat{F}(\mb{x}^{(t)})\rangle$. However, each $\mb{v}_i^{(t)}$ is the projection of the local gradient into the matroid polytope and it does not carry information about the magnitude of the expected marginal contributions i.e. $\|\nabla\widehat{f}_i(\mb{x}^{(t)})\|$. Without assuming an assumption on the \emph{heterogeneity} of local functions one can construct examples where a single element and corresponding client's marginal contribution are significantly more dominant than others and hence taking direction $\Delta^{(t)}$ results in a very bad approximation guarantee.


We therefore need an assumption that acts as a tool in constraining the level of heterogeneity which poses a significant obstacle in federated optimization. A common way to handle heterogeneity is to impose a bound on the magnitude of gradients over each clients' local function i.e $\|\nabla \widehat{f}_i(\mb{x})\|\leq \gamma$. This types of assumption is not only common in the literature regarding submodular function maximization in decentralized settings \citep{MokhtariHK18-Decentralized,Zhang2020}, but also in studies of convex and non-convex optimization in federated learning \citep{chen2022optimal,dadras2022federatedFW,SCAFFOLD-Karimireddy20,YuYZ19-FEDAVG-heterogeneous,LiHYWZ20}. 

In the case of submodular functions, each coordinate of the gradient of the multilinear extension corresponds to the marginal gain of adding one single element. In this paper we impose the following assumption which is much more relaxed than assuming a bound on the magnitude of the gradients from each client.




\begin{assumption}
\label{ass:bounded-gradients}
    For all $i=1,\dots,N$ and $t=1,\dots,T$ we have 
    $| \nabla \widehat{f}_i(\mb{x}^{(t)}) - \nabla\widehat{F}(\mb{x}^{(t)}) |_{\infty} \leq \gamma_t$. Note that, by submodularity and monotonicity we have
    \begin{align}
    \label{eq:bound-gamma}
        \max_{t\in [T]}\gamma_t \leq 2\max_{i}\max_{e\in E}f_i(\{e\}) = 2\max_{i}m_{f_i}.
    \end{align}
\end{assumption}

Monotonicity of $f_i$ implies that for every $\mb{x}\leq \mb{y}$ coordinate-wise, it holds that $\widehat{f}_i(\mb{x})\leq \widehat{f}_i(\mb{x})$. Additionally, gradients are antitone i.e., for every $\mb{x}\leq \mb{y}$ coordinate-wise, it holds that $\nabla\widehat{f}_i(\mb{x})\geq \nabla\widehat{f}_i(\mb{y})$. Thus as the algorithm advances and $t$ grows, $\gamma_t$'s change but never exceed the upper bound in \eqref{eq:bound-gamma}. Additionally, in numerous instances, $\max_{t\in [T]}\gamma_t$ is relatively small. For instance, in \texttt{Max Coverage} problem each $f_i(S)$ is either 0 or 1 , depending if a client is covered by $S$ or not, thus for this problem $\gamma=1$.

We now have enough ingredients to prove the following convergence theorem for the case where all clients participate in every communication round. Let $D=\sum_{t=1}^T\gamma_t$
\begin{theorem}[Full participation]
\label{thm:full-participation}
    Let $\mc{M}$ be a matroid of rank $r$ and $\mc{P}$ be its matroid polytope. Under the full participation assumption and Assumption~\ref{ass:bounded-gradients}, for every $\eta > 0 $, Algorithm~\ref{alg:FCG} returns a $\mb{x}^{(T)}\in\mc{P}$ such that 
    \begin{align*}
        &\left( 1- (1-\eta)^T \right)\texttt{OPT} \leq \widehat{F}(\mb{x}^{(T)})  + \eta r\sum_{t=1}^T\gamma_t+\frac{T\eta^2r^2 m_F}{2}
    \end{align*}
    In particular, for large enough $T$,  setting $\eta = 1/T$, Algorithm~\ref{alg:FCG} requires at most $\Tilde{O}(r)$ bits of communication per user per round ($\Tilde{O}(NTr)$ in total) and obtains
    \[
         (1- 1/\mathrm{e})\texttt{OPT} \leq \widehat{F}(\mb{x}^{(T)}) + \left(\frac{rD}{T}+\frac{r^2 m_F}{2T}\right).
     \]
\end{theorem}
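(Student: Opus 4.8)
I would follow the template of the centralized \texttt{Continuous Greedy} analysis recalled above; the only genuinely new ingredient is the way client heterogeneity enters the per-round progress bound. Write $g_t := \texttt{OPT} - \widehat{F}(\mathbf{x}^{(t)})$; since $\widehat{F}(\mathbf{0}) = F(\emptyset) = 0$ we have $g_0 = \texttt{OPT}$, and since each $\mathbf{v}_i^{(t)}$ may be taken to be the indicator of a base of $\mathcal{M}$ (the linear objective $\langle\,\cdot\,,\nabla\widehat f_i(\mathbf x^{(t)})\rangle$ has a nonnegative gradient), we have $\|\mathbf{v}_i^{(t)}\|_1 = r$, hence $\Delta^{(t)}\ge\mathbf 0$, $\|\Delta^{(t)}\|_1\le r$, and $O(r\log n)$ bits of communication per client per round. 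The target is the one-round recursion
\[
g_{t+1}\;\le\;(1-\eta)\,g_t\;+\;\eta\,r\,\gamma_{t+1}\;+\;\frac{\eta^2 r^2 m_F}{2}.
\]
Unrolling this from $t=0$ to $t=T-1$ and bounding each geometric weight $(1-\eta)^j$ by $1$ collects the first-order errors into $\eta r\sum_{t=1}^T\gamma_t = \eta r D$ and the quadratic ones into $\tfrac{T\eta^2 r^2 m_F}{2}$, so $g_T\le(1-\eta)^T\texttt{OPT}+\eta r D+\tfrac{T\eta^2 r^2 m_F}{2}$, which is the claimed inequality after rearranging. For the ``in particular'' part I set $\eta=1/T$, use $1-x\le e^{-x}$ to get $(1-1/T)^T\le 1/\mathrm e$ and hence $1-(1-1/T)^T\ge 1-1/\mathrm e$, and note that $\mathbf x^{(T)}=\tfrac1T\sum_{t=0}^{T-1}\Delta^{(t)}$ is a convex combination of points of $\mathcal P$, so $\mathbf x^{(T)}\in\mathcal P$.

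\textbf{One round of progress.} Fix $t$. Because $\Delta^{(t)}\ge\mathbf 0$, a second-order expansion of the multilinear polynomial $\widehat F$ along the segment from $\mathbf x^{(t)}$ to $\mathbf x^{(t)}+\eta\Delta^{(t)}$, using that the diagonal of $\nabla^2\widehat F$ vanishes and that each off-diagonal entry lies in $[-m_F,0]$ by submodularity and monotonicity, yields
\begin{align*}
\widehat F(\mathbf x^{(t+1)})-\widehat F(\mathbf x^{(t)}) &\;\ge\;\eta\,\langle\Delta^{(t)},\nabla\widehat F(\mathbf x^{(t)})\rangle-\frac{\eta^2\|\Delta^{(t)}\|_1^2\,m_F}{2}\\
&\;\ge\;\eta\,\langle\Delta^{(t)},\nabla\widehat F(\mathbf x^{(t)})\rangle-\frac{\eta^2 r^2 m_F}{2}.
\end{align*}
So it remains to establish $\langle\Delta^{(t)},\nabla\widehat F(\mathbf x^{(t)})\rangle\ge g_t-r\gamma_{t+1}$.

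\textbf{Controlling heterogeneity (the crux).} Let $S^\star$ be an optimal solution of \eqref{eq:objective-constrained}, so $\mathbf 1_{S^\star}\in\mathcal P$ and $|S^\star|\le r$. For each client $i$, optimality of $\mathbf v_i^{(t)}$ for $\langle\,\cdot\,,\nabla\widehat f_i(\mathbf x^{(t)})\rangle$ gives $\langle\mathbf v_i^{(t)},\nabla\widehat f_i(\mathbf x^{(t)})\rangle\ge\langle\mathbf 1_{S^\star},\nabla\widehat f_i(\mathbf x^{(t)})\rangle$, while Assumption~\ref{ass:bounded-gradients} with $\|\mathbf v_i^{(t)}\|_1=r$ (H\"older) gives $\langle\mathbf v_i^{(t)},\nabla\widehat F(\mathbf x^{(t)})\rangle\ge\langle\mathbf v_i^{(t)},\nabla\widehat f_i(\mathbf x^{(t)})\rangle-r\gamma_{t+1}$; chaining these, $\langle\mathbf v_i^{(t)},\nabla\widehat F(\mathbf x^{(t)})\rangle\ge\langle\mathbf 1_{S^\star},\nabla\widehat f_i(\mathbf x^{(t)})\rangle-r\gamma_{t+1}$. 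Taking the $p_i$-weighted sum over $i$ and using $\sum_i p_i\,\nabla\widehat f_i=\nabla\widehat F$, the benchmark terms recombine \emph{exactly}, with no extra error, into $\langle\mathbf 1_{S^\star},\nabla\widehat F(\mathbf x^{(t)})\rangle$; hence $\langle\Delta^{(t)},\nabla\widehat F(\mathbf x^{(t)})\rangle\ge\langle\mathbf 1_{S^\star},\nabla\widehat F(\mathbf x^{(t)})\rangle-r\gamma_{t+1}$. Finally the standard \texttt{CG} inequality — concavity of $\widehat F$ along the nonnegative direction $(\mathbf x^{(t)}\vee\mathbf 1_{S^\star})-\mathbf x^{(t)}$, together with $\nabla\widehat F\ge\mathbf 0$ and monotonicity — gives $\langle\mathbf 1_{S^\star},\nabla\widehat F(\mathbf x^{(t)})\rangle\ge\widehat F(\mathbf x^{(t)}\vee\mathbf 1_{S^\star})-\widehat F(\mathbf x^{(t)})\ge\texttt{OPT}-\widehat F(\mathbf x^{(t)})=g_t$. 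Combining, $\langle\Delta^{(t)},\nabla\widehat F(\mathbf x^{(t)})\rangle\ge g_t-r\gamma_{t+1}$; substituting into the one-round bound and rewriting in $g$-form produces the recursion.

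\textbf{Main obstacle.} Everything outside the third paragraph is a routine transcription of the centralized \texttt{CG} proof; the real content is the heterogeneity step. The difficulty it must neutralize is that $\mathbf v_i^{(t)}$ is only a \emph{direction} and retains no information about $\|\nabla\widehat f_i(\mathbf x^{(t)})\|$, so without a heterogeneity bound a single dominant client can make $\Delta^{(t)}$ nearly worthless. The trick that keeps the per-round loss at $r\gamma_{t+1}$ is to pass from $\nabla\widehat f_i$ to $\nabla\widehat F$ only on the $\mathbf v_i^{(t)}$ side and to compare each client against the \emph{same} benchmark $\mathbf 1_{S^\star}$ before averaging, so that the identity $\sum_i p_i\nabla\widehat f_i=\nabla\widehat F$ is exploited exactly; the SecAgg layer merely realizes this exact weighted sum and therefore introduces no additional error in the full-participation regime.
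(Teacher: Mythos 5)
Your proposal is correct and takes essentially the same route as the paper's proof: the same second-order expansion with Hessian entries bounded below by $-m_F$ and $\|\Delta^{(t)}\|_1\le r$, the same use of Assumption~\ref{ass:bounded-gradients} on base indicator vectors to lose only $r\gamma_t$ per round, and the same concavity-along-nonnegative-directions argument, followed by the identical unrolling. The only cosmetic difference is in the heterogeneity step, where you compare each client's direction against the fixed benchmark $\mathbf{1}_{S^\star}$ before averaging, whereas the paper (Lemmas \ref{lem:full-participation} and \ref{lem:bound-heterogeneity}) passes through $\max_{\mathbf{v}\in\mathcal{P}}\langle\mathbf{v},\nabla\widehat{F}(\mathbf{x}^{(t)})\rangle$ and exchanges the maximum with the weighted sum.
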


\begin{algorithm}[tb]
  \caption{Federated Continuous Greedy (\texttt{\textsc{Fed}CG)}}
  \label{alg:FCG}
  \begin{algorithmic}[1]
    \STATE {\bfseries Input: }{Matroid polytope $\mc{P}$, number of communication rounds $T$, learning rate $\eta$, and $K$.}
    \STATE $\bx^{(0)}=\mb{0}$
    \FOR{$t=0$ to $T-1$}
        \STATE Server selects a subset of $K$ \emph{active} clients $A^{(t)}$ according to Client Sampling Scheme, and sends $\bx^{(t)}$ to them.
        \FOR{Client $i$ in $A^{(t)}$ in parallel }
            \STATE $\mb{v}_i^{(t)}\gets\argmax_{\mb{v}\in \mc{P}} \langle \mb{v},\nabla \widehat{f}_i(\bx^{(t)})\rangle$ 
            \STATE Send $ \mb{v}_i^{(t)}$ back to the secure aggregator.
        \ENDFOR
        \STATE SecAgg: $\Delta^{(t)}=\frac{1}{|A^{(t)}|}\sum_{i\in A^{(t)}} \mb{v}_i^{(t)}$
        \STATE Server updates: $\bx^{(t+1)}\gets \bx^{(t)} + \eta\Delta^{(t)}$
    \ENDFOR
    \STATE Apply a proper rounding scheme on $\bx^{(T)}$ to obtain a solution for \eqref{eq:objective-constrained}
  \end{algorithmic}
\end{algorithm}
\subsection{Partial Participation and Client Selection}
Client sampling in the FL optimization framework is imperative for various practical reasons, including the following:

    $\bullet$ Large scale and dynamic nature. In real-world applications, a server usually serves several billions of devices/clients who can join or leave the federated optimization system due to several reasons like intermittent connectivity, technical issue, or simply based on their availability or preferences. Hence, it is 
    computationally inefficient and often impossible to get updates from all clients.

  $\bullet$ Communication and bandwidth. On one hand, waiting for the slowest client to finish can increase the expected round duration as the number of participating clients per training increases, a phenomenon known as ``straggler's effect''. On the other hand, communication can be a primary bottleneck for federated settings because of clients bandwidth limitation and the possibility of server throttling.
    
    $\bullet$ Small models and redundancy. It is often
 the case that FL models are small because of clients limited computational power or memory, it
 therefore is unnecessary to train an FL model on billions of clients. Note that for optimization problem~\eqref{eq:objective-constrained} in many practical scenarios models have smaller size in comparison to the number of clients, this is because of the matroid constraint or simply because the size of the ground set is much smaller than the number of clients.

 Here we discuss our sampling scheme which crucially does not violate clients' privacy; see \Cref{alg:FCG}.

\noindent{\bf Unbiased client sampling scheme.} At each communication round the server chooses an active client from $i  \in [N]$ with probability $p_i$, and repeats this process $K$ times to obtain a 
multiset $A^{(t)}$ of size $K$  which may contain a client  more than once. Then the aggregation step is $\Delta^{(t)}=\frac{1}{K}\sum_{i\in A^{(t)}}\mb{v}_i^{(t)}$ where $\mb{v}_i^{(t)}$ defined in \eqref{eq:v}.

The next lemma shows that this sampling scheme is unbiased and in expectation the average update from chosen clients $A^{(t)}$ is equal to the average update from all clients.
\begin{lemma}[Unbiased sampling scheme]
\label{lem:unbiased-sampling}
    For Client Sampling Scheme, we have $\E_{A^{(t)}}\left[\langle \Delta^{(t)},\nabla\widehat{F}(\bx^{(t)}) \rangle \right] =   \sum_{i=1}^N p_i\langle\mb{v}_i^{(t)}, \nabla\widehat{F}(\mb{x}^{(t)}) \rangle$.
\end{lemma}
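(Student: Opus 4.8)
The key observation is that, at the start of round $t$, both the aggregate gradient $\nabla\widehat{F}(\bx^{(t)})$ and every local direction $\mb{v}_i^{(t)}$ (defined for all $i\in[N]$ through \eqref{eq:v}, not only for the sampled clients) are \emph{deterministic} functions of the current iterate $\bx^{(t)}$; the only randomness in $\Delta^{(t)}$ comes from the draw of the multiset $A^{(t)}$. So the whole statement is an exercise in linearity of expectation, and the expectation $\E_{A^{(t)}}[\cdot]$ should be read as conditional on $\bx^{(t)}$ (equivalently, on the history $\mathcal F_t$ of the first $t$ rounds); the argument below goes through verbatim under that conditioning.

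First I would make the sampling explicit: write $A^{(t)} = \{a_1,\dots,a_K\}$ where $a_1,\dots,a_K$ are i.i.d.\ with $\Pr[a_k = j] = p_j$ for every $j\in[N]$, so that $\Delta^{(t)} = \frac{1}{K}\sum_{k=1}^{K}\mb{v}_{a_k}^{(t)}$. Applying linearity of expectation and using that each $a_k$ has the same law,
\begin{align*}
\E_{A^{(t)}}\!\left[\Delta^{(t)}\right]
= \frac{1}{K}\sum_{k=1}^{K}\E\!\left[\mb{v}_{a_k}^{(t)}\right]
= \frac{1}{K}\sum_{k=1}^{K}\sum_{j=1}^{N} p_j\,\mb{v}_j^{(t)}
= \sum_{j=1}^{N} p_j\,\mb{v}_j^{(t)}.
\end{align*}
Then, since the inner product against the fixed vector $\nabla\widehat{F}(\bx^{(t)})$ is linear, I can pull the expectation through it:
\begin{align*}
\E_{A^{(t)}}\!\left[\langle \Delta^{(t)},\nabla\widehat{F}(\bx^{(t)})\rangle\right]
= \Big\langle \E_{A^{(t)}}\!\left[\Delta^{(t)}\right],\, \nabla\widehat{F}(\bx^{(t)})\Big\rangle
= \sum_{j=1}^{N} p_j\,\langle \mb{v}_j^{(t)},\, \nabla\widehat{F}(\bx^{(t)})\rangle,
\end{align*}
which is exactly the claimed identity.

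There is no real obstacle here — the lemma is essentially a sanity check that the sampling scheme is designed correctly. The only point worth stating carefully is the bookkeeping of what is random versus conditioned on (so that $\mb{v}_i^{(t)}$ and $\nabla\widehat{F}(\bx^{(t)})$ may be treated as constants), and the remark that choosing client $i$ with probability exactly $p_i$ is what makes the weights $p_i$ in $F=\sum_i p_i f_i$ reappear on the right-hand side, so that the $\frac{1}{K}$-averaged update is an unbiased estimator of the full weighted update $\sum_i p_i \mb{v}_i^{(t)}$.
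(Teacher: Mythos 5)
Your proposal is correct and follows essentially the same route as the paper: both proofs reduce the claim to linearity of expectation over the $K$ i.i.d.\ draws, each distributed according to $(p_1,\dots,p_N)$, so that the averaged update is an unbiased estimator of $\sum_{i} p_i \mb{v}_i^{(t)}$. Your write-up is in fact a bit cleaner in making the sampling indices $a_1,\dots,a_K$ and the conditioning on $\bx^{(t)}$ explicit, where the paper sums over $i\in A^{(t)}$ inside $\E_{A^{(t)}}$ and appeals to an "arbitrary $i\in A^{(t)}$"; the substance is identical.
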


The following lemma allows us to bound the variance which in turn helps to provide our convergence guarantees. To show the following result, it is required to upper bound the difference between the improvement on the function value by taking the direction suggested by a selected client versus taking the direction obtained by averaging all the directions from the clients. That is bounding $|\langle \mb{v}_{s},\nabla\widehat{F}(\bx^{(t)}) \rangle  - \langle  \sum_{i=1}^N \mb{v}_i^{(t)}, \nabla\widehat{F}(\mb{x}^{(t)}) \rangle|$, for a selected client $s$ in $A^{(t)}$. This in turn needs providing an upper bound for $\langle \mb{v}_i^{(t)}, \nabla \widehat{f_i}(\mb{x}^{(t)}) \rangle - \langle \mb{v}_s,\nabla\widehat{f}_i(\bx^{(t)}) \rangle$, for all $i\neq s$. At the heart, our proof relies on the properties of multilinear extensions of local submodular functions, the fact that each $\mb{v}_i^{(t)}$ corresponds to a base of the matroid, and Assumption~\ref{ass:bounded-gradients}.

\begin{lemma}[Bounded variance]
\label{lem:bounded-variance}
     Using Client Sampling Scheme we have $\Var(\langle \Delta^{(t)},\nabla\widehat{F}(\bx^{(t)}) \rangle)\leq 36r^2\gamma_t^2/K$.
\end{lemma}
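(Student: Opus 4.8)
The plan is to bound the variance by bounding the variance of a single sampled term and then using independence of the $K$ samples. Since $\Delta^{(t)} = \frac{1}{K}\sum_{j=1}^K \mb{v}_{s_j}^{(t)}$ where $s_1,\dots,s_K$ are i.i.d.\ draws with $\Pr[s_j = i] = p_i$, we have $\langle \Delta^{(t)},\nabla\widehat F(\mb{x}^{(t)})\rangle = \frac{1}{K}\sum_{j=1}^K Y_j$ where $Y_j := \langle \mb{v}_{s_j}^{(t)},\nabla\widehat F(\mb{x}^{(t)})\rangle$ are i.i.d.\ with common mean $\mu := \sum_{i=1}^N p_i \langle \mb{v}_i^{(t)},\nabla\widehat F(\mb{x}^{(t)})\rangle$ (by Lemma~\ref{lem:unbiased-sampling}). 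Hence $\Var(\langle \Delta^{(t)},\nabla\widehat F(\mb{x}^{(t)})\rangle) = \frac{1}{K}\Var(Y_1) \le \frac{1}{K}\,\E[(Y_1-c)^2]$ for any constant $c$; the whole problem reduces to finding a good deterministic anchor $c$ and showing $|Y_1 - c| \le 6r\gamma_t$ almost surely, which gives $\Var(Y_1)\le 36r^2\gamma_t^2$ and the claimed bound.

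The natural choice of anchor is $c := \langle \mb{v}_i^{(t)},\nabla\widehat F(\mb{x}^{(t)})\rangle$ evaluated at the \emph{realized} sampled client — wait, that is random, so instead I will show that $|\langle \mb{v}_s^{(t)},\nabla\widehat F(\mb{x}^{(t)})\rangle - \langle \mb{v}_i^{(t)},\nabla\widehat F(\mb{x}^{(t)})\rangle|$ is at most $3r\gamma_t$ for \emph{every} pair $i,s$; then all the $Y_j$ lie in an interval of length $\le 3r\gamma_t$ (so the range bound gives $\Var(Y_1) \le (3r\gamma_t/2)^2 \le 9r^2\gamma_t^2/4$, even better, but I only need the stated $36r^2\gamma_t^2$, so I can afford to be lossy). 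Concretely, fix two clients $i,s$. Using $\nabla\widehat F = \nabla\widehat f_s + (\nabla\widehat F - \nabla\widehat f_s)$ and then $\nabla\widehat F = \nabla\widehat f_i + (\nabla\widehat F - \nabla\widehat f_i)$, I will write
\begin{align*}
\langle \mb{v}_s^{(t)} - \mb{v}_i^{(t)}, \nabla\widehat F(\mb{x}^{(t)})\rangle
&= \langle \mb{v}_s^{(t)},\nabla\widehat f_s(\mb{x}^{(t)})\rangle - \langle \mb{v}_i^{(t)},\nabla\widehat f_i(\mb{x}^{(t)})\rangle \\
&\quad + \langle \mb{v}_i^{(t)} - \mb{v}_s^{(t)},\, \nabla\widehat F(\mb{x}^{(t)}) - \nabla\widehat f_i(\mb{x}^{(t)})\rangle \\
&\quad + \langle \mb{v}_s^{(t)},\, \nabla\widehat f_i(\mb{x}^{(t)}) - \nabla\widehat f_s(\mb{x}^{(t)})\rangle,
\end{align*}
and handle the three pieces separately.

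The main obstacle is the first piece, $\langle \mb{v}_s^{(t)},\nabla\widehat f_s(\mb{x}^{(t)})\rangle - \langle \mb{v}_i^{(t)},\nabla\widehat f_i(\mb{x}^{(t)})\rangle$, because the two inner products are optimal \emph{for different gradients}. Here I exploit optimality: since $\mb{v}_s^{(t)}$ maximizes $\langle\cdot,\nabla\widehat f_s(\mb{x}^{(t)})\rangle$ over $\mc{P}$, we have $\langle \mb{v}_s^{(t)},\nabla\widehat f_s(\mb{x}^{(t)})\rangle \ge \langle \mb{v}_i^{(t)},\nabla\widehat f_s(\mb{x}^{(t)})\rangle$, and symmetrically $\langle \mb{v}_i^{(t)},\nabla\widehat f_i(\mb{x}^{(t)})\rangle \ge \langle \mb{v}_s^{(t)},\nabla\widehat f_i(\mb{x}^{(t)})\rangle$. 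Combining these with the observation that $\mb{v}_i^{(t)}$ and $\mb{v}_s^{(t)}$ are indicator vectors of matroid bases (so each has exactly $r$ ones), that for any client $j$ and any base-indicator $\mb{w}$ we have $0 \le \langle \mb{w},\nabla\widehat f_j(\mb{x}^{(t)})\rangle$, and that $|\langle \mb{w}, \nabla\widehat f_j(\mb{x}^{(t)}) - \nabla\widehat F(\mb{x}^{(t)})\rangle| \le \|\mb{w}\|_1\,|\nabla\widehat f_j(\mb{x}^{(t)}) - \nabla\widehat F(\mb{x}^{(t)})|_\infty \le r\gamma_t$ by Assumption~\ref{ass:bounded-gradients}, each of the three terms above is bounded in absolute value by $r\gamma_t$: for the first term one pivots through $\langle\mb{v}_i^{(t)},\nabla\widehat f_s(\mb{x}^{(t)})\rangle$ (or $\langle\mb{v}_s^{(t)},\nabla\widehat f_i(\mb{x}^{(t)})\rangle$) and uses the two optimality inequalities to sandwich the difference between $\langle\mb{v}_s^{(t)} - \mb{v}_i^{(t)},\nabla\widehat f_s(\mb{x}^{(t)})\rangle \ge 0$ and $-\langle\mb{v}_s^{(t)} - \mb{v}_i^{(t)},\nabla\widehat f_i(\mb{x}^{(t)})\rangle$... concretely, adding and subtracting and then bounding $\langle \mb{v}_i^{(t)} - \mb{v}_s^{(t)}, \nabla\widehat f_s - \nabla\widehat f_i\rangle = \langle \mb{v}_i^{(t)} - \mb{v}_s^{(t)}, (\nabla\widehat f_s - \nabla\widehat F) - (\nabla\widehat f_i - \nabla\widehat F)\rangle$, whose absolute value is at most $2r\gamma_t$ by the triangle inequality and $\|\mb{v}_i^{(t)} - \mb{v}_s^{(t)}\|_1 \le 2r$. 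Summing the contributions yields $|\langle \mb{v}_s^{(t)} - \mb{v}_i^{(t)}, \nabla\widehat F(\mb{x}^{(t)})\rangle| \le 6r\gamma_t$ (three terms, bounded by $2r\gamma_t$, $r\gamma_t$, $r\gamma_t$ respectively, plus slack), so every $Y_j$ lies within $6r\gamma_t$ of any fixed $\langle\mb{v}_i^{(t)},\nabla\widehat F(\mb{x}^{(t)})\rangle$, giving $\Var(Y_1) \le 36r^2\gamma_t^2$ and therefore $\Var(\langle\Delta^{(t)},\nabla\widehat F(\mb{x}^{(t)})\rangle) \le 36r^2\gamma_t^2/K$. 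I would then also remark that the same inequality $|\langle \mb{v}_i^{(t)},\nabla\widehat f_i(\mb{x}^{(t)})\rangle - \langle \mb{v}_s,\nabla\widehat f_i(\mb{x}^{(t)})\rangle|$-type bound used in the decomposition is exactly the ingredient flagged in the paragraph preceding the lemma.
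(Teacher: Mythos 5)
Your high-level route is sound and closely parallels the paper's: both arguments reduce the claim to the variance of a single i.i.d.\ draw, $\Var(\langle\Delta^{(t)},\nabla\widehat{F}(\bx^{(t)})\rangle)=\frac{1}{K}\Var(Y_1)$ with $Y_1=\langle\mb{v}_{s}^{(t)},\nabla\widehat{F}(\bx^{(t)})\rangle$, and both then control the deviation of $Y_1$ from an anchor using exactly the three ingredients you name: optimality of each $\mb{v}_i^{(t)}$ for its own gradient, the fact that each $\mb{v}_i^{(t)}$ is a base indicator with $|\mb{v}_i^{(t)}|_1=r$, and Assumption~\ref{ass:bounded-gradients}. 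The only structural difference is the anchor: the paper centers at the true mean $\langle\sum_i p_i\mb{v}_i^{(t)},\nabla\widehat{F}(\bx^{(t)})\rangle$ and splits into two cases according to the sign of the deviation (obtaining $r\gamma_t$ in one case and $5r\gamma_t$ in the other), whereas you center at $\langle\mb{v}_i^{(t)},\nabla\widehat{F}(\bx^{(t)})\rangle$ for a fixed client $i$ via $\Var(Y_1)\le\E[(Y_1-c)^2]$ and bound all pairwise deviations uniformly, which dispenses with the case analysis.

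There is, however, a concrete error in your displayed decomposition: the three pieces do not sum to $\langle\mb{v}_s^{(t)}-\mb{v}_i^{(t)},\nabla\widehat{F}(\bx^{(t)})\rangle$. Writing $\nabla\widehat{f}_j$ for $\nabla\widehat{f}_j(\bx^{(t)})$ and $\nabla\widehat{F}$ for $\nabla\widehat{F}(\bx^{(t)})$, your right-hand side expands to $-\langle\mb{v}_s^{(t)}-\mb{v}_i^{(t)},\nabla\widehat{F}\rangle+2\langle\mb{v}_s^{(t)}-\mb{v}_i^{(t)},\nabla\widehat{f}_i\rangle$, which differs from the left-hand side by $2\langle\mb{v}_i^{(t)}-\mb{v}_s^{(t)},\nabla\widehat{F}-\nabla\widehat{f}_i\rangle$, a quantity that is not zero in general and can be as large as $4r\gamma_t$. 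The correct telescoping in the same spirit is
\begin{align*}
\langle\mb{v}_s^{(t)}-\mb{v}_i^{(t)},\nabla\widehat{F}\rangle
&=\left(\langle\mb{v}_s^{(t)},\nabla\widehat{f}_s\rangle-\langle\mb{v}_i^{(t)},\nabla\widehat{f}_i\rangle\right)
+\langle\mb{v}_s^{(t)},\nabla\widehat{F}-\nabla\widehat{f}_s\rangle-\langle\mb{v}_i^{(t)},\nabla\widehat{F}-\nabla\widehat{f}_i\rangle .
\end{align*}
The last two terms are each at most $r\gamma_t$ in absolute value by Assumption~\ref{ass:bounded-gradients} and $|\mb{v}_j^{(t)}|_1=r$. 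For the first term your optimality sandwich applies verbatim: $\langle\mb{v}_i^{(t)},\nabla\widehat{f}_s-\nabla\widehat{f}_i\rangle\le\langle\mb{v}_s^{(t)},\nabla\widehat{f}_s\rangle-\langle\mb{v}_i^{(t)},\nabla\widehat{f}_i\rangle\le\langle\mb{v}_s^{(t)},\nabla\widehat{f}_s-\nabla\widehat{f}_i\rangle$, and since $|\nabla\widehat{f}_s-\nabla\widehat{f}_i|_\infty\le2\gamma_t$ both endpoints have magnitude at most $2r\gamma_t$. Altogether $|Y_1-c|\le4r\gamma_t\le6r\gamma_t$, so the claimed $36r^2\gamma_t^2/K$ follows with room to spare (your Popoviciu remark would even give $4r^2\gamma_t^2/K$). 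In short: the idea is complete and correct, and once the bookkeeping of the decomposition is repaired the proof goes through; as written, though, the central identity is false and the stated per-term constants ($2r\gamma_t$ for the middle term, which actually needs $4r\gamma_t$, and the unexplained shift from $3r\gamma_t$ to $6r\gamma_t$) do not hold.
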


Armed with the above lemmas and concentration inequalities e.g., Chebyshev's inequality, we can prove the convergence of Algorithm~\ref{alg:FCG}. This essentially is done by bounding the error introduced by the decentralized setting and carefully carrying the error through the analysis.

\begin{theorem}
\label{thm:convergence-Scheme2-Alg1}
    Let $\mc{M}$ be a matroid of rank $r$ and $\mc{P}$ be its matroid polytope. Using Client Sampling Scheme, for every $\eta,\delta > 0 $, Algorithm~\ref{alg:FCG} returns a $\mb{x}^{(T)}\in\mc{P}$ so that with probability at least $1-\delta$ 
    \begin{align*}
        &\left( 1- (1-\eta)^T\right)\texttt{OPT}\leq \\
        & ~~~~~ \widehat{F}(\mb{x}^{(T)}) +  \eta\left( r\sum_{t=1}^T\gamma_t+\frac{6r\sum_{t=1}^T\gamma_t}{\sqrt{K\delta/T }}\right)+\frac{T\eta^2r^2 m_F}{2}
    \end{align*}
    In particular, by setting $\eta=1/T$, Algorithm~\ref{alg:FCG} requires at most $\Tilde{O}(r)$ bits of communication per user per round ($\Tilde{O}(KTr)$ in total) and yields
    \begin{align*}
        \left( 1- 1/\mathrm{e}\right)\texttt{OPT} \leq \widehat{F}(\mb{x}^{(T)}) + \left( \frac{rD}{T}+\frac{6rD}{\sqrt{TK\delta }}+\frac{r^2 m_F}{2T}\right)
    \end{align*}
\end{theorem}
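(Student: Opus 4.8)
The plan is to reproduce the potential‑function telescoping used for \Cref{thm:full-participation}, but with the deterministic per‑round progress replaced by a progress bound that holds with high probability, obtained by combining \Cref{lem:unbiased-sampling} (correct expectation) with \Cref{lem:bounded-variance} (small variance) and Chebyshev's inequality. Write $g^{(t)}:=\texttt{OPT}-\widehat{F}(\mb{x}^{(t)})$ and $Z^{(t)}:=\langle \Delta^{(t)},\nabla\widehat{F}(\mb{x}^{(t)})\rangle$, where $\Delta^{(t)}=\frac1K\sum_{i\in A^{(t)}}\mb{v}_i^{(t)}$ is the (random) aggregated direction. Each $\mb{v}_i^{(t)}$ is the indicator of a base, so $\Delta^{(t)}\geq \mb{0}$ and $\|\Delta^{(t)}\|_1\leq r$ \emph{deterministically}; hence the second‑order Taylor estimate for the multilinear extension along the nonnegative direction $\Delta^{(t)}$, using $\partial^2_{e,e}\widehat{F}=0$ and $|\partial^2_{e,e'}\widehat{F}|\leq m_F$, gives exactly as in the full‑participation analysis
\[
 \widehat{F}(\mb{x}^{(t+1)}) \ \geq\ \widehat{F}(\mb{x}^{(t)}) + \eta\, Z^{(t)} - \frac{\eta^2 r^2 m_F}{2}.
\]

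Next I would lower‑bound $\E[Z^{(t)}\mid \mb{x}^{(t)}]$. By \Cref{lem:unbiased-sampling} this expectation equals $\sum_{i=1}^{N}p_i\langle \mb{v}_i^{(t)},\nabla\widehat{F}(\mb{x}^{(t)})\rangle$. Fix a maximizer $\mb{x}^\star\in\mc{P}$ of $\widehat{F}$, so $\widehat{F}(\mb{x}^\star)=\texttt{OPT}$. For each $i$, optimality of $\mb{v}_i^{(t)}$ over $\mc{P}$ and feasibility of $\mb{x}^\star$, together with monotonicity and concavity of $\widehat{f}_i$ along nonnegative directions, give
\begin{align*}
 \langle \mb{v}_i^{(t)},\nabla\widehat{f}_i(\mb{x}^{(t)})\rangle
 &\ \geq\ \langle \mb{x}^\star,\nabla\widehat{f}_i(\mb{x}^{(t)})\rangle \\
 &\ \geq\ \widehat{f}_i(\mb{x}^{(t)}\vee\mb{x}^\star) - \widehat{f}_i(\mb{x}^{(t)}) \\
 &\ \geq\ \widehat{f}_i(\mb{x}^\star) - \widehat{f}_i(\mb{x}^{(t)}),
\end{align*}
where $\vee$ is the coordinate‑wise maximum, while passing from $\nabla\widehat{f}_i$ to $\nabla\widehat{F}$ costs at most $\|\mb{v}_i^{(t)}\|_1\,\gamma_t=r\gamma_t$ by \Cref{ass:bounded-gradients} and Cauchy–Schwarz in $\ell_1/\ell_\infty$. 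Taking the $p_i$‑weighted average and using $\sum_i p_i\widehat{f}_i=\widehat{F}$ yields $\E[Z^{(t)}\mid\mb{x}^{(t)}]\geq g^{(t)}-r\gamma_t$, which is precisely the inequality underlying \Cref{thm:full-participation}.

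Then I would upgrade this to a per‑round high‑probability statement and telescope. \Cref{lem:bounded-variance} gives $\Var(Z^{(t)}\mid\mb{x}^{(t)})\leq 36 r^2\gamma_t^2/K$, so Chebyshev's inequality with deviation $\lambda_t:=6r\gamma_t/\sqrt{K\delta/T}$ yields $\Pr[\,Z^{(t)}< \E[Z^{(t)}\mid\mb{x}^{(t)}]-\lambda_t \mid \mb{x}^{(t)}\,]\leq \delta/T$. Since $\mb{x}^{(t)}$ is determined by $A^{(0)},\dots,A^{(t-1)}$, taking expectations of these conditional bounds and union‑bounding over $t=0,\dots,T-1$ shows that with probability at least $1-\delta$ the inequality $Z^{(t)}\geq g^{(t)}-r\gamma_t-\lambda_t$ holds for all $t$ simultaneously. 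On this event the displayed second‑order estimate gives the recursion $g^{(t+1)}\leq (1-\eta)g^{(t)}+\eta(r\gamma_t+\lambda_t)+\frac{\eta^2 r^2 m_F}{2}$; unrolling from $g^{(0)}=\texttt{OPT}$ (as $\widehat{F}(\mb{0})=0$) and bounding each discount factor $(1-\eta)^{T-1-t}$ by $1$ gives
\begin{align*}
 g^{(T)} \ \leq\ & (1-\eta)^T\texttt{OPT} \\
 & + \eta\Big( r\sum_{t=1}^{T}\gamma_t + \frac{6r\sum_{t=1}^{T}\gamma_t}{\sqrt{K\delta/T}}\Big) + \frac{T\eta^2 r^2 m_F}{2},
\end{align*}
which rearranges to the claimed bound; specializing $\eta=1/T$, using $(1-1/T)^T\leq 1/\mathrm{e}$ and $T\sqrt{K\delta/T}=\sqrt{TK\delta}$, and recalling $D=\sum_{t=1}^{T}\gamma_t$ produces the final display. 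I expect the only genuinely delicate part to be this probabilistic bookkeeping: the iterates $\mb{x}^{(t)}$ and directions $\mb{v}_i^{(t)}$ are themselves random, so the variance bound and Chebyshev step must be invoked conditionally on $\mb{x}^{(t)}$ and the union bound phrased through $\Pr[\exists t:\text{bad}_t]\leq \sum_t \E[\Pr[\text{bad}_t\mid\mb{x}^{(t)}]]$; the remainder is the same deterministic continuous‑greedy telescoping already carried out for \Cref{thm:full-participation}, plus the observation that $\mb{x}^{(T)}=\eta\sum_{t<T}\Delta^{(t)}$ lies in the (down‑closed) matroid polytope when $\eta T\leq 1$.
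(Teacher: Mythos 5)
Your proposal is correct and follows essentially the same route as the paper's proof: the second-order Taylor bound with Hessian entries bounded below by $-m_F$ and $\|\Delta^{(t)}\|_1\le r$, the expectation lower bound $\E[Z^{(t)}\mid\mb{x}^{(t)}]\ge \texttt{OPT}-\widehat{F}(\mb{x}^{(t)})-r\gamma_t$ (you inline the arguments of \Cref{lem:full-participation,cor:opt-F,lem:bound-heterogeneity} rather than citing them), Chebyshev with deviation $6r\gamma_t/\sqrt{K\delta/T}$ and a union bound over the $T$ rounds, and the same telescoping. Your explicit conditioning on $\mb{x}^{(t)}$ in the Chebyshev step is a slightly more careful rendering of the probabilistic bookkeeping than the paper's, but it is the same proof.
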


\section{Practical Federated Continuous Greedy}\label{sec:fcg+}
One of the main considerations in federated optimization is the number of communication rounds. In this section, we show how Algorithm~\ref{alg:FCG} can be further improved to reduce communication rounds while simultaneously incorporating partial participation.

\noindent{\bf Algorithm description.}
Initially $\mb{x}^{(0)}=0$. On the $t$-th round of the \texttt{Practical Federated Continuous Greedy (\texttt{\textsc{Fed}CG+)}}, the central server first broadcasts the latest model $\mb{x}^{(t)}$ to a subset of active clients of size $K$ denoted by $A^{(t)}$. Next, each client $i \in A^{(t)}$ sets $\mb{x}_{i}^{(t,0)}=\mb{x}^{(t)}$ and performs $\tau$ steps of continuous greedy approach locally. More precisely, let $j\in\{0,1,\dots,\tau-1\}$ and $\mb{x}_i^{(t,j)}$ denote the $i$-th client's local model at communication round $t$ and local update step $j$, then the local updates are 
\begin{align}
    \label{eq:update_local}
    &\widetilde{\mb{v}}_i^{(t,j)}\gets\argmax_{\mb{v}\in \mc{P}} \langle \mb{v},\nabla \widetilde{f}_i(\bx_i^{(t,j)},\zeta_i^{(t,j)})\rangle;\\
    &\mb{x}_i^{(t,j+1)} = \mb{x}_i^{(t,j)} +  \widetilde{\mb{v}}_i^{(t,j)}/\tau
\end{align}
Here $\zeta_i^{(t,j)}$ is a set of subsets from the ground set $E$ sampled according to $\bx_i^{(t,j)}$, and $\nabla\widetilde{f}_i(\bx_i^{(t,j)},\zeta_i^{(t,j)})\in \mathbb{R}_{\geq 0}^n$ is an estimation of the gradient $\nabla\widehat{f}_i(\bx_i^{(t,j)})$ (more on this later).

After $\tau$ steps of local update, the $i$-th client from $A^{(t)}$ send her update $\widetilde{\Delta}_i^{(t+\tau)}=\bx_i^{(t,\tau)}-\bx_i^{(t,0)}$ to the secure aggregator to compute $\widetilde{\Delta}^{(t+\tau)}=\frac{1}{|A^{(t)}|}\sum_{i\in A^{(t)}} \widetilde{\Delta}_i^{(t+\tau)}$. Note that each $\widetilde{\Delta}_i^{(t+\tau)}$ belongs to $\mc{P}$ since it is a convex combination of vectors $\widetilde{\mb{v}}_i^{(t,j)}\in \mc{P}$. However, $\widetilde{\Delta}_i^{(t+\tau)}$ may not be an integral vector and in the worst case $\tilde{O}(n)$ bits are required to encode it.
After receiving $\widetilde{\Delta}^{(t+\tau)}$, the server produces the new global model with learning rate $\eta$:
\begin{align}
    \bx^{(t+\tau)}\gets \bx^{(t)} + \eta\widetilde{\Delta}^{(t+\tau)}
\end{align}
\noindent{\bf Gradient estimation.}
Evaluating the multilinear extension involves summing over all subsets $S$ of $E$, there are $2^{|E|}$ such subsets. However, recall $\partial\widehat{f}/\partial\mb{x}(e)=\E[f(R\cup\{e\})]-\E[f(R\setminus \{e\})]$
where $R\subseteq E$ is a random subset sampled according to $\mb{x}$. Hence a simple application of Chernoff's bound tells us by sampling sufficiently many subsets we can obtain a good estimation of $\nabla \widehat{f}(\mb{x})$ \cite{CalinescuCPV11,Vondrak08} (more details in the Appendix). For large enough $m$, let $\zeta_{i}^{(t,j)}=\{R_i^{(t,j,1)},\dots,R_i^{(t,j,m)}\}$ be subsets of $E$ that are sampled independently according to $i$-th client's local model $\mb{x}_i^{(t,j)}$. Let $\nabla\widetilde{f_i}(\mb{x}_i^{(t,j)},\zeta_{i}^{(t,j)})$ denote the \emph{stochastic} approximation of $\nabla\widehat{f_i}(\mb{x}_i^{(t,j)})$. Then with probability $1-\delta$ we have $\|\nabla\widehat{f_i}(\mb{x}_i^{(t,j)}) -\nabla\widetilde{f_i}(\mb{x}_i^{(t,j)},\zeta_{i}^{(t,j)})\|\leq \sigma$.

In convergence analyses of our algorithm there are two sources of randomness, one in the sampling schemes for client selection and the other in data sampling at each clients local data to estimate the gradients. In Algorithm~\ref{alg:FCG+} there are $\frac{T}{\tau}$ communications rounds between the server and clients. Define $\mc{I}_\tau=\{\tau i\mid i =1, 2, 3,\dots\}$ to denote the set of communication rounds with the server. 
Similar to Lemma~\ref{lem:unbiased-sampling}:


\begin{lemma}[Unbiased sampling scheme]
\label{lem:unbiased-sampling-Alg2}
    For Client Sampling Scheme, at every communication round $t+\tau\in\mc{I}_\tau$, we have $\E_{A^{(t)}}\left[\langle \widetilde{\Delta}^{(t+\tau)},\nabla\widehat{F}(\bx^{(t)}) \rangle \right] =  \langle \sum_{i=1}^N p_i \widetilde{\Delta}_i^{(t+\tau)}, \nabla\widehat{F}(\mb{x}^{(t)}) \rangle$.
\end{lemma}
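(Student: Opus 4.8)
The plan is to mirror the argument for Lemma~\ref{lem:unbiased-sampling}, reducing the claim to linearity of expectation once the right independence structure is identified. First I would fix the communication round $t+\tau\in\mc{I}_\tau$ and condition on the model $\mb{x}^{(t)}$ broadcast by the server together with the entire collection of data-sampling randomness $\{\zeta_i^{(t,j)}\}_{i\in[N],\,0\le j<\tau}$. Under this conditioning, every quantity $\widetilde{\Delta}_i^{(t+\tau)}=\bx_i^{(t,\tau)}-\bx_i^{(t,0)}$ is a fixed vector in $\mc{P}$: it is obtained by unrolling the $\tau$ local updates in \eqref{eq:update_local}, all of which start from the common point $\mb{x}_i^{(t,0)}=\mb{x}^{(t)}$ and depend only on client $i$'s own function $f_i$ and its own samples $\zeta_i^{(t,j)}$. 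In particular $\widetilde{\Delta}_i^{(t+\tau)}$ does not depend on which other clients are active, so it is independent of the multiset $A^{(t)}$; likewise $\nabla\widehat{F}(\mb{x}^{(t)})$ is a deterministic function of $\mb{x}^{(t)}$.

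Next I would make the client selection explicit: let $X_1,\dots,X_K$ be i.i.d.\ indices with $\Pr[X_\ell=i]=p_i$, so that $A^{(t)}=\{X_1,\dots,X_K\}$ as a multiset and $\widetilde{\Delta}^{(t+\tau)}=\tfrac1K\sum_{\ell=1}^K \widetilde{\Delta}_{X_\ell}^{(t+\tau)}$. By linearity of expectation and $\Pr[X_\ell=i]=p_i$ for every $\ell$,
\begin{align*}
    \E_{A^{(t)}}\!\big[\widetilde{\Delta}^{(t+\tau)}\big]
    = \frac1K\sum_{\ell=1}^K \E\big[\widetilde{\Delta}_{X_\ell}^{(t+\tau)}\big]
    = \frac1K\sum_{\ell=1}^K \sum_{i=1}^N p_i\,\widetilde{\Delta}_i^{(t+\tau)}
    = \sum_{i=1}^N p_i\,\widetilde{\Delta}_i^{(t+\tau)} .
\end{align*}
Taking the inner product with the (conditionally deterministic) vector $\nabla\widehat{F}(\mb{x}^{(t)})$ and pulling the expectation through the bilinear form yields $\E_{A^{(t)}}[\langle \widetilde{\Delta}^{(t+\tau)},\nabla\widehat{F}(\mb{x}^{(t)})\rangle] = \langle \sum_{i=1}^N p_i \widetilde{\Delta}_i^{(t+\tau)}, \nabla\widehat{F}(\mb{x}^{(t)})\rangle$, which is exactly the claim; since this identity holds for every realization of the data-sampling randomness, it also survives taking a further expectation over that randomness if one wishes.

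The computation is routine; the one point requiring care — and the step I would treat as the crux — is the independence claim, namely that each client's $\tau$-step local trajectory, and hence $\widetilde{\Delta}_i^{(t+\tau)}$, is a measurable function of $(\mb{x}^{(t)},\zeta_i^{(t,\cdot)})$ alone and in particular does not see the realized multiset $A^{(t)}$ nor the other clients' randomness. This is what licenses treating the $\widetilde{\Delta}_i^{(t+\tau)}$'s as constants when averaging over $A^{(t)}$, and it is precisely the structural feature of the protocol (common broadcast starting point, purely local updates with no cross-client coupling within a round) that makes the scheme unbiased. I would also note that allowing $A^{(t)}$ to be a multiset causes no difficulty: the i.i.d.\ formulation handles repeated draws automatically, with a client chosen twice simply contributing $\widetilde{\Delta}_i^{(t+\tau)}$ twice in the unnormalized sum.
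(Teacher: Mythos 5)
Your proof is correct and follows essentially the same route as the paper's: expand $\widetilde{\Delta}^{(t+\tau)}$ as an average over the $K$ i.i.d.\ draws, apply linearity of expectation, and use that each draw selects client $i$ with probability $p_i$. Your version is in fact slightly more careful than the paper's, since you make the sampled indices $X_1,\dots,X_K$ explicit and note the conditioning on the local data-sampling randomness $\zeta_i^{(t,j)}$ that renders each $\widetilde{\Delta}_i^{(t+\tau)}$ deterministic with respect to $A^{(t)}$, but this is a refinement of presentation rather than a different argument.
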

Bounding the variance of $\langle \widetilde{\Delta}^{(t+\tau)},\nabla\widehat{F}(\bx^{(t)}) \rangle$ is much more delicate than in Lemma~\ref{lem:bounded-variance}. The are two main reasons, one is the deviation caused by local steps, second is the deviation from the true  $\nabla\widehat{F}(\mb{x}^{(t)})$ caused by gradient estimations. To handle the deviation cause by local steps we assume all $\widehat{f}_i$ have $L$-Lipschitz continuous gradients i.e., $\|\nabla\widehat{f}_i(\mb{x})-\nabla\widehat{f}_i(\mb{y})\|\leq L\|\mb{x}-\mb{y}\|$, $\forall\mb{x},\mb{y}\in\mc{P}$. In fact as shown in Lemma 3 of \cite{MokhtariHK18}, for each $\widehat{f}_i$ and local model $\mb{x}_i^{(t,j+1)}=\bx_i^{(t,j)} + \widetilde{\mb{v}}_i^{(t,j)}/\tau$, it holds that
\begin{align*}
    \left\| \nabla\widehat{f}_i(\mb{x}_i^{(t,j+1)})-\nabla\widehat{f}_i(\bx_i^{(t,j)}) \right\| 
    & \leq  \frac{m_{f_i}\sqrt{r}}{\tau}\left\|\widetilde{\mb{v}}_i^{(t,j)}\right\| \\
    & \leq  \frac{m_{f_i} r}{\tau}
\end{align*}
The factor $-m_{f_i}$ is in fact a lower bound on the entries of the Hessian matrix. That is $\frac{\partial \widehat{f}_i}{\partial \mb{x}(i)\partial \mb{x}(j)}\geq -m_{f_i}$ \cite{HassaniSK17-SGD-submodular}. 

Let $Q=\sum_{t\in I_\tau} L_t$ where $L_t$ is such that for all $i\in[N]$ and $j\in[\tau]$ it holds
\begin{align*}
    \left \| \nabla \widehat{f_i}(\bx_i^{(t,j)})  - \nabla \widehat{f_i}(\bx^{(t)}) \right \| \leq L_t\| \mb{x}_i^{(t,j)} - \mb{x}^{(t)} \| 
\end{align*}
Observe that $L_t$ is upper bounded by $\max_{i} m_{f_i}\sqrt{r}$.
\begin{lemma}[Bounded variance]
\label{lem:bounded-variance-Alg2}
    For $t+\tau\in\mc{I}_\tau$ we have $\Var(\langle \widetilde{\Delta}^{(t+\tau)},\nabla\widehat{F}(\bx^{(t)}) \rangle)\leq\frac{1}{K}\left( 6r\gamma_t + 2(\sigma r+L_tr^{1.5}) \right)^2$.
\end{lemma}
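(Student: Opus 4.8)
The plan is to reduce the variance of the aggregated update to that of a single sampled client and then control the latter by a uniform deviation bound around a fixed reference value. The multiset $A^{(t)}$ is $K$ i.i.d.\ draws from $p$, and if each occurrence carries its own fresh local randomness $\zeta$, the $K$ summands of $\langle\widetilde{\Delta}^{(t+\tau)},\nabla\widehat{F}(\bx^{(t)})\rangle$ are i.i.d., so $\Var(\langle\widetilde{\Delta}^{(t+\tau)},\nabla\widehat{F}(\bx^{(t)})\rangle)=\tfrac1K\Var(\langle\widetilde{\Delta}_s^{(t+\tau)},\nabla\widehat{F}(\bx^{(t)})\rangle)$ for $s\sim p$. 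Since $\Var(Z)\le\E[(Z-c)^2]$ for any constant $c$, I take $c=\sum_{i}p_i\langle\mb{v}_i^{(t)},\nabla\widehat{F}(\bx^{(t)})\rangle$ with $\mb{v}_i^{(t)}=\argmax_{\mb{v}\in\mc{P}}\langle\mb{v},\nabla\widehat{f}_i(\bx^{(t)})\rangle$ the \emph{exact one-step} directions at $\bx^{(t)}$ (w.l.o.g.\ base indicators, so $\|\mb{v}_i^{(t)}\|_1=r$). It then suffices to prove that, on the event that every gradient estimate along the run is $\sigma$-accurate (probability $\ge 1-\delta$),
\[
\bigl|\langle\widetilde{\Delta}_s^{(t+\tau)},\nabla\widehat{F}(\bx^{(t)})\rangle-c\bigr|\ \le\ 6r\gamma_t+2(\sigma r+L_t r^{1.5})\qquad\text{for every }s .
\]

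\textbf{Splitting the error.} By the triangle inequality the left-hand side is at most the \emph{heterogeneity gap} $|\langle\mb{v}_s^{(t)}-\sum_ip_i\mb{v}_i^{(t)},\nabla\widehat{F}(\bx^{(t)})\rangle|$ plus the \emph{drift gap} $|\langle\widetilde{\Delta}_s^{(t+\tau)}-\mb{v}_s^{(t)},\nabla\widehat{F}(\bx^{(t)})\rangle|$. The heterogeneity gap is handled exactly as in the proof of \Cref{lem:bounded-variance}: for each $i$ decompose $\nabla\widehat{F}(\bx^{(t)})=\nabla\widehat{f}_i(\bx^{(t)})+(\nabla\widehat{F}(\bx^{(t)})-\nabla\widehat{f}_i(\bx^{(t)}))$; optimality of $\mb{v}_i^{(t)}$ against $\nabla\widehat{f}_i(\bx^{(t)})$, the bound $\|\mb{v}_s^{(t)}-\mb{v}_i^{(t)}\|_1\le 2r$, and $\|\nabla\widehat{F}(\bx^{(t)})-\nabla\widehat{f}_i(\bx^{(t)})\|_\infty\le\gamma_t$ (\Cref{ass:bounded-gradients}) give $|\langle\mb{v}_s^{(t)}-\mb{v}_i^{(t)},\nabla\widehat{F}(\bx^{(t)})\rangle|=O(r\gamma_t)$, hence the same bound after averaging over $i$ (using $\sum_ip_i=1$). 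Since $\widetilde{\Delta}_s^{(t+\tau)}=\tfrac1\tau\sum_{j=0}^{\tau-1}\widetilde{\mb{v}}_s^{(t,j)}$ is a convex combination of vectors in $\mc{P}$, the drift gap is at most $\max_{0\le j<\tau}|\langle\widetilde{\mb{v}}_s^{(t,j)}-\mb{v}_s^{(t)},\nabla\widehat{F}(\bx^{(t)})\rangle|$, so it remains to bound this for a single local step $j$.

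\textbf{One local step.} Split $\nabla\widehat{F}(\bx^{(t)})=\nabla\widehat{f}_s(\bx^{(t)})+(\nabla\widehat{F}(\bx^{(t)})-\nabla\widehat{f}_s(\bx^{(t)}))$; the second term contributes at most $\|\widetilde{\mb{v}}_s^{(t,j)}-\mb{v}_s^{(t)}\|_1\gamma_t\le 2r\gamma_t$. For the first term, optimality of $\mb{v}_s^{(t)}$ against $\nabla\widehat{f}_s(\bx^{(t)})$ gives $\langle\widetilde{\mb{v}}_s^{(t,j)}-\mb{v}_s^{(t)},\nabla\widehat{f}_s(\bx^{(t)})\rangle\le 0$; for the matching lower bound set $g:=\nabla\widehat{f}_s(\bx^{(t)})$ and $g':=\nabla\widetilde{f}_s(\bx_s^{(t,j)},\zeta_s^{(t,j)})$ (for which $\widetilde{\mb{v}}_s^{(t,j)}$ is optimal), and bound the perturbation
\[
\|g-g'\|_\infty\le\bigl\|\nabla\widehat{f}_s(\bx^{(t)})-\nabla\widehat{f}_s(\bx_s^{(t,j)})\bigr\|+\bigl\|\nabla\widehat{f}_s(\bx_s^{(t,j)})-\nabla\widetilde{f}_s(\bx_s^{(t,j)},\zeta_s^{(t,j)})\bigr\|\le L_t\sqrt{r}+\sigma ,
\]
where the iterate drift obeys $\|\bx_s^{(t,j)}-\bx^{(t)}\|=\tfrac1\tau\|\sum_{\ell<j}\widetilde{\mb{v}}_s^{(t,\ell)}\|\le\tfrac{j}{\tau}\sqrt{r}\le\sqrt{r}$ (each $\widetilde{\mb{v}}_s^{(t,\ell)}$ a base, $j\le\tau$) which feeds the local Lipschitz constant $L_t$, and the second term is $\le\sigma$ on the good event. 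Chaining the two optimality inequalities,
\[
\langle\widetilde{\mb{v}}_s^{(t,j)},g\rangle\ge\langle\widetilde{\mb{v}}_s^{(t,j)},g'\rangle-r\|g-g'\|_\infty\ge\langle\mb{v}_s^{(t)},g'\rangle-r\|g-g'\|_\infty\ge\langle\mb{v}_s^{(t)},g\rangle-2r\|g-g'\|_\infty ,
\]
so $\langle\widetilde{\mb{v}}_s^{(t,j)}-\mb{v}_s^{(t)},\nabla\widehat{f}_s(\bx^{(t)})\rangle\ge-2r(L_t\sqrt{r}+\sigma)=-2(\sigma r+L_t r^{1.5})$. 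Hence $|\langle\widetilde{\mb{v}}_s^{(t,j)}-\mb{v}_s^{(t)},\nabla\widehat{F}(\bx^{(t)})\rangle|\le 2r\gamma_t+2(\sigma r+L_t r^{1.5})$; adding the heterogeneity gap and collecting the $\gamma_t$-terms gives $6r\gamma_t+2(\sigma r+L_t r^{1.5})$. Squaring and dividing by $K$ yields the claim.

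\textbf{Main obstacle.} The delicate point is the one-local-step bound: both the local move $\bx^{(t)}\to\bx_s^{(t,j)}$ and the stochastic gradient sampling perturb the linear objective that selects $\widetilde{\mb{v}}_s^{(t,j)}$, and such a perturbation can move the optimal vertex arbitrarily far; only the optimal \emph{value} is stable. The resolution is to never argue about vertices directly but to route everything through the two optimality inequalities, and to exploit that the local steps $\widetilde{\mb{v}}_s^{(t,\ell)}/\tau$ telescope to a displacement of norm at most $\sqrt{r}$ \emph{uniformly in} $\tau$, so the Lipschitz drift term does not accumulate as $\tau$ grows. A secondary technicality is setting up the i.i.d.\ decomposition in Step~1 (each occurrence in $A^{(t)}$ carrying independent local randomness) so that the variance of the $K$-term average is cleanly $1/K$ times the single-client variance.
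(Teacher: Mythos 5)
Your overall strategy is essentially the paper's: reduce to the deviation of a single sampled client, control the heterogeneity part via Assumption~\ref{ass:bounded-gradients} and the $\ell_1$-norm $r$ of base indicators, and control the local-step part by the interlacing argument through the two optimality inequalities with perturbation $\sigma+L_t\sqrt{r}$ (your chained inequality $\langle\widetilde{\mb{v}}_s^{(t,j)},g\rangle\geq\langle\mb{v}_s^{(t)},g\rangle-2r\|g-g'\|_\infty$ is exactly the paper's Claim about $\widetilde{\Delta}_i^{(t+\tau)}$ versus $\mb{v}_i^{(t)}$, and your bound $\|\bx_s^{(t,j)}-\bx^{(t)}\|\leq\sqrt{r}$ matches the paper's). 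Your centering at the deterministic constant $c=\sum_i p_i\langle\mb{v}_i^{(t)},\nabla\widehat{F}(\bx^{(t)})\rangle$ via $\Var(Z)\leq\E[(Z-c)^2]$ is a legitimate and arguably cleaner alternative to the paper's centering at $\langle\sum_i p_i\widetilde{\Delta}_i^{(t+\tau)},\nabla\widehat{F}(\bx^{(t)})\rangle$.

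The concrete problem is that your constants do not assemble to the stated bound. The heterogeneity gap $|\langle\mb{v}_s^{(t)}-\sum_i p_i\mb{v}_i^{(t)},\nabla\widehat{F}(\bx^{(t)})\rangle|$, handled ``exactly as in Lemma~\ref{lem:bounded-variance},'' costs $5r\gamma_t$ in the worst case (the one-sided bound is $r\gamma_t$, but the other side needs the $4r\gamma_t$ cross-client argument), and your per-$i$ sketch via $\|\mb{v}_s^{(t)}-\mb{v}_i^{(t)}\|_1\leq 2r$ actually gives $6r\gamma_t$; your drift gap then adds a further $2r\gamma_t+2(\sigma r+L_tr^{1.5})$. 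The triangle inequality therefore yields $7r\gamma_t+2(\sigma r+L_tr^{1.5})$ at best, not $6r\gamma_t+2(\sigma r+L_tr^{1.5})$; writing ``$O(r\gamma_t)$'' and then ``collecting the $\gamma_t$-terms gives $6r\gamma_t$'' hides this overshoot. The paper avoids it by measuring the deviation against $\overline{\Delta}^{(t+\tau)}=\sum_i p_i\widetilde{\Delta}_i^{(t+\tau)}$ directly, so the heterogeneity and local-step errors are bounded jointly and the deviation comes out as $5r\gamma_t+(\sigma r+L_tr^{1.5})$, which sits under the stated $6r\gamma_t+2(\sigma r+L_tr^{1.5})$ with room to spare. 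Your argument proves the lemma only with $7$ (or $8$) in place of $6$ --- harmless for the downstream convergence theorem up to constants, but not the inequality as stated; to close it, either re-center at $\overline{\Delta}^{(t+\tau)}$ as the paper does, or accept and propagate the larger constant.
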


While the variance can be made arbitrary small by sampling more clients, the additive error caused by local steps cannot be controlled by sampling more clients. This shows up in the next theorem.

\begin{algorithm}[tb]
  \caption{Practical FedCG \texttt{(\textsc{Fed}CG+)}}
  \label{alg:FCG+}
  \begin{algorithmic}[1]
    \STATE {\bfseries Input: }{Matroid polytope $\mc{P}$, number of communication rounds $T/\tau$, server's learning rate $\eta$, $\sigma,\delta>0$, and $K$.}
    \STATE $\bx^{(0)}=\mb{0}$, $m=O(\log{(TK/\delta)}/\sigma^2)$
    \FOR{$t=0,\tau, 2\tau,\dots, (T-1)/\tau$}
        \STATE Server selects a subset of $K$ \emph{active} clients $A^{(t)}$ according to Client Sampling Scheme, and sends $\bx^{(t)}$ to them.
        \FOR{Client $i$ in $A^{(t)}$ in parallel }
            \STATE $\bx_i^{(t,0)}\gets \bx^{(t)}$
            \FOR{$j=0,\dots,\tau-1$}
                \STATE Randomly sample $m$ sets $\zeta_{i}^{(t,j)}=\{R_i^{(t,j,1)},\dots,R_i^{(t,j,m)}\}$ according to $\mb{x}_i^{(t,j)}$ 
                \STATE Let $\nabla\widetilde{f}_i(\bx_i^{(t,j)},\zeta_{i}^{(t,j)})$ be the estimate of $\nabla \widehat{f}_i(\bx_i^{(t,j)})$ 
                \STATE $\widetilde{\mb{v}}_i^{(t,j)}\gets\argmax_{\mb{v}\in \mc{P}} \langle \mb{v},\nabla\widetilde{f}_i(\bx_i^{(t,j)},\zeta_{i}^{(t,j)})\rangle$
                \STATE $\bx_i^{(t,j+1)}\gets \bx_i^{(t,j)} + \widetilde{\mb{v}}_i^{(t,j)}/\tau$
            \ENDFOR
            \STATE $\widetilde{\Delta}_i^{(t+\tau)}\gets \bx_i^{(t,\tau)}-\bx_i^{(t,0)}$ \COMMENT{Local model change}
            \STATE Send $ \widetilde{\Delta}_i^{(t+\tau)}$ back to the secure aggregator.
        \ENDFOR
        \STATE SecAgg: $\widetilde{\Delta}^{(t+\tau)}=\frac{1}{|A^{(t)}|}\sum_{i\in A^{(t)}} \widetilde{\Delta}_i^{(t+\tau)}$.
        \STATE Server updates: $\bx^{(t+\tau)}\gets \bx^{(t)} + \eta\widetilde{\Delta}^{(t+\tau)}$
    \ENDFOR
    \STATE Apply a proper rounding scheme on $\bx^{(T)}$ to obtain a solution for \eqref{eq:objective-constrained}
  \end{algorithmic}
\end{algorithm}

\begin{theorem}
\label{thm:convergence-Alg2-Scheme2}
    Let $\mc{M}$ be a matroid of rank $r$ and $\mc{P}$ be its matroid polytope. Using Client Sampling Scheme, for every $\eta,\delta > 0$, Algorithm~\ref{alg:FCG+} returns a $\mb{x}^{(T/\tau)}\in\mc{P}$ such that with probability at least $1-\delta$ it holds 
    \begin{align*}
        &( 1- (1-\eta)^{T/\tau})\texttt{OPT} \\
        & \quad\quad \leq \widehat{F}(\mb{x}^{(T/\tau)})+\frac{T\eta^2r^2 m_F}{2\tau}\\ 
        & \quad\quad +(\overbrace{\eta r\sum_{t=1}^{T/\tau}\gamma_t}^{\text{heterogeneity}} +
         \overbrace{2\sigma r+2\eta r^{1.5}\sum_{t=1}^{T/\tau} L_t )}^{\text{local steps}}\\
         &\quad \quad+\overbrace{{\sqrt{T}(6\eta r\sum_{t=1}^{T/\tau} \gamma_t + 2\sigma r+2\eta r^{1.5}\sum_{t=1}^{T/\tau}L_t) )}/{\sqrt{K\tau\delta }}}^{\text{client sampling}})
    \end{align*}
       
 In particular, for $\eta = \tau/T$, Algorithm~\ref{alg:FCG+} has at most $\Tilde{O}(n)$ bits of communication per user per round ($\Tilde{O}(KTn/\tau)$ in total) and yields
    \begin{align*}
        \left( 1- 1/\mathrm{e}\right)\texttt{OPT} 
        &\leq \widehat{F}(\mb{x}^{(T)}) +\frac{\tau r^2 m_F}{2T} \\ 
        &+ \frac{rD\tau}{T}+2\sigma r(1+\frac{\sqrt{\tau}}{\sqrt{KT \delta}})+\frac{2r^{1.5}Q\tau}{T} \\
        &+\frac{\sqrt{\tau}(6Dr+2r^{1.5}Q)}{\sqrt{TK\delta}}
    \end{align*}
\end{theorem}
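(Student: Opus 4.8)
The plan is to run the continuous--greedy potential argument one communication round at a time, carrying three additive error sources through the recursion: heterogeneity ($\gamma_t$, via \Cref{ass:bounded-gradients}), drift of the local iterates away from the broadcast model ($L_t$), and stochastic gradient estimation ($\sigma$); the client--sampling randomness is then absorbed using \Cref{lem:unbiased-sampling-Alg2}, \Cref{lem:bounded-variance-Alg2}, and Chebyshev's inequality. For a fixed round $t\in\mc I_\tau$, first note that the aggregated step $\widetilde\Delta^{(t+\tau)}$ is an average of the $\widetilde\Delta_i^{(t+\tau)}=\tfrac1\tau\sum_{j=0}^{\tau-1}\widetilde{\mb v}_i^{(t,j)}$, each a convex combination of bases of $\mc M$, so it is nonnegative with $\|\widetilde\Delta^{(t+\tau)}\|_1\le r$. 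Since every mixed second partial of $\widehat F$ is at least $-m_F$ (the Hessian bound of \Cref{sec:FCG} applied to the monotone submodular $F$), a second--order expansion of $\widehat F$ along this nonnegative direction gives
\[
\widehat F(\mb{x}^{(t+\tau)})\ \ge\ \widehat F(\mb{x}^{(t)})+\eta\,\langle\widetilde\Delta^{(t+\tau)},\nabla\widehat F(\mb{x}^{(t)})\rangle-\tfrac{\eta^2 r^2 m_F}{2},
\]
so everything reduces to lower--bounding $Y^{(t)}:=\langle\widetilde\Delta^{(t+\tau)},\nabla\widehat F(\mb{x}^{(t)})\rangle$.

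The core is a deterministic per--client lower bound, obtained on the event $\mc E$ that all $O(TK)$ gradient estimates are $\sigma$--accurate (true with probability $\ge1-\delta/2$ once $m=O(\log(TK/\delta)/\sigma^2)$, by a union bound; details in the Appendix). Let $S^\star$ be an optimal independent set, so $\mb 1_{S^\star}\in\mc P$ and $\sum_i p_i f_i(S^\star)=\texttt{OPT}$. Fixing $i,j$, writing $\mb g=\nabla\widetilde f_i(\mb{x}_i^{(t,j)},\zeta_i^{(t,j)})$ and $\widetilde{\mb v}=\widetilde{\mb v}_i^{(t,j)}=\argmax_{\mb v\in\mc P}\langle\mb v,\mb g\rangle$ (a base, $\|\widetilde{\mb v}\|_1=r$), use that the $1/\tau$ stepsizes keep the drift bounded \emph{uniformly in $\tau$}, $\|\mb{x}_i^{(t,j)}-\mb{x}^{(t)}\|_2\le\tfrac{j}{\tau}\sqrt r\le\sqrt r$, hence $\|\nabla\widehat f_i(\mb{x}^{(t)})-\mb g\|_\infty\le L_t\sqrt r+\sigma$ on $\mc E$. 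Then chain $\ell_1/\ell_\infty$ Hölder bounds: swapping $\nabla\widehat F(\mb{x}^{(t)})$ for $\nabla\widehat f_i(\mb{x}^{(t)})$ costs $r\gamma_t$; swapping $\nabla\widehat f_i(\mb{x}^{(t)})$ for $\mb g$ in an inner product with $\widetilde{\mb v}$ (and, later, with $\mb 1_{S^\star}$) costs $r(L_t\sqrt r+\sigma)$ each time; $\langle\widetilde{\mb v},\mb g\rangle\ge\langle\mb 1_{S^\star},\mb g\rangle$ by optimality of $\widetilde{\mb v}$; and $\langle\mb 1_{S^\star},\nabla\widehat f_i(\mb{x}^{(t)})\rangle\ge f_i(S^\star)-\widehat f_i(\mb{x}^{(t)})$ is the continuous--greedy inequality for monotone submodular $f_i$. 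This yields $\langle\widetilde{\mb v}_i^{(t,j)},\nabla\widehat F(\mb{x}^{(t)})\rangle\ge f_i(S^\star)-\widehat f_i(\mb{x}^{(t)})-r\gamma_t-2L_t r^{1.5}-2r\sigma$; the bound is independent of $j$, so averaging over $j$ and then over $i$ with weights $p_i$ gives $\sum_i p_i\langle\widetilde\Delta_i^{(t+\tau)},\nabla\widehat F(\mb{x}^{(t)})\rangle\ge\texttt{OPT}-\widehat F(\mb{x}^{(t)})-r\gamma_t-2L_t r^{1.5}-2r\sigma$, and by \Cref{lem:unbiased-sampling-Alg2} this equals $\E_{A^{(t)}}[Y^{(t)}]$.

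Next, using \Cref{lem:bounded-variance-Alg2} with Chebyshev's inequality, union--bounded over the $T/\tau$ rounds, on an event of probability $\ge1-\delta$ (intersected with $\mc E$) one has $Y^{(t)}\ge\E_{A^{(t)}}[Y^{(t)}]-(6r\gamma_t+2\sigma r+2L_t r^{1.5})/\sqrt{K\tau\delta/T}$ for every $t$. Substituting into the one--round inequality and rearranging gives $\texttt{OPT}-\widehat F(\mb{x}^{(t+\tau)})\le(1-\eta)(\texttt{OPT}-\widehat F(\mb{x}^{(t)}))+\eta\xi_t+\tfrac{\eta^2 r^2 m_F}{2}$ with $\xi_t$ the full bracketed error. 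Unrolling from $\widehat F(\mb{x}^{(0)})=0$ over $T/\tau$ rounds: the $\tfrac{\eta^2 r^2 m_F}{2}$ terms sum to $\tfrac{T\eta^2 r^2 m_F}{2\tau}$; each $\eta\xi_t$ is multiplied by a factor $(1-\eta)^{\,\cdot}\le1$, so the $t$--dependent parts sum to $\eta r\sum_t\gamma_t$ and $2\eta r^{1.5}\sum_t L_t$ (and $\sqrt T/\sqrt{K\tau\delta}$ times these for the sampling contribution), whereas the \emph{constant} part $2r\sigma$ telescopes to $2r\sigma(1-(1-\eta)^{T/\tau})\le2r\sigma$ --- which is precisely why $\sigma$ enters without an $\eta$. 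Collecting terms is the claimed inequality; the ``in particular'' part follows with $\eta=\tau/T$ (so $1-(1-\eta)^{T/\tau}\ge1-1/e$), using $L_t\le\max_i m_{f_i}\sqrt r$, the bound $\tilde O(n)$ on the bits of each uploaded $\widetilde\Delta_i^{(t+\tau)}$ (a length--$n$ vector with denominators dividing $\tau$), and $\mb{x}^{(T)}=\eta\sum\widetilde\Delta^{(t+\tau)}\in\mc P$ as a sub--convex combination of points of $\mc P$.

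The hard part is the per--client estimate of the second paragraph: getting a lower bound on $\langle\widetilde\Delta^{(t+\tau)},\nabla\widehat F(\mb{x}^{(t)})\rangle$ that stays linear in $r$ (resp.\ $r^{1.5}$) and, crucially, does not deteriorate as the number of local steps $\tau$ grows, even though the directions $\widetilde{\mb v}_i^{(t,j)}$ are computed at drifted iterates $\mb{x}_i^{(t,j)}\ne\mb{x}^{(t)}$ from noisy gradients. The two ideas that make it work are: (a) the $1/\tau$ stepsize bounds $\|\mb{x}_i^{(t,j)}-\mb{x}^{(t)}\|$ independently of $\tau$, so the $\tau$ local steps cost only an $L_t$--Lipschitz term; and (b) one should invoke the continuous--greedy inequality at the shared point $\mb{x}^{(t)}$ (rather than at each $\mb{x}_i^{(t,j)}$) and route every gradient mismatch through $\ell_1/\ell_\infty$ Hölder bounds. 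Keeping straight which layer of randomness (client sampling versus gradient sampling) is being conditioned on at each step is the remaining bookkeeping hazard.
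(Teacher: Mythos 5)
Your proposal is correct and follows essentially the same route as the paper: a second-order expansion with the $-m_F$ Hessian lower bound, a deterministic lower bound on $\sum_i p_i\langle\widetilde{\Delta}_i^{(t+\tau)},\nabla\widehat{F}(\mb{x}^{(t)})\rangle$ in terms of $\texttt{OPT}-\widehat{F}(\mb{x}^{(t)})$ costing $r\gamma_t+2\sigma r+2L_tr^{1.5}$, Chebyshev with Lemma~\ref{lem:bounded-variance-Alg2} union-bounded over the $T/\tau$ rounds, and the standard unrolling. The only (cosmetic) difference is that you compare each $\widetilde{\mb{v}}_i^{(t,j)}$ directly against $\mb{1}_{S^\star}$ via two H\"older swaps, whereas the paper routes through the ideal direction $\mb{v}_i^{(t)}$ (Claim~\ref{claim:fedcg+convergence}) and then invokes Corollary~\ref{cor:opt-F}; both yield identical constants.
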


\begin{remark}
    From a fractional solution $\mb{x}^{(T)}\in \mc{P}$ returned by \Cref{alg:FCG,alg:FCG+} one can obtain a solution for \eqref{eq:objective-constrained} using appropriate rounding schemes. Rounding schemes such as pipage rounding \cite{CalinescuCPV11}, swap rounding \cite{ChekuriVZ10-swap-rounding}, or greedy rounding are oblivious; they do not require access to the objective function. Therefore, the server can utilize these rounding schemes without needing to know the decomposable function itself.
\end{remark}

\section{Discrete Algorithm in Federated Setting}
\label{sec:discrete-fl}

\begin{algorithm}[tb]
  \caption{Federated Discrete Greedy}
  \label{alg:Discrete-FL}
  \begin{algorithmic}[1]
    \STATE {\bfseries Input: }{Matroid $\mc{M}$ of rank $r$, importance factors $\{w_i\}_{i=1}^N$, $\varepsilon \in (0,1)$.}
    \STATE $S\gets\emptyset$, $\kappa\gets \tilde{O}(rn/\varepsilon^2)$
    \FOR{$t=0$ to $r-1$}
        \STATE Server sends $S$ to all clients.
        \FOR{each client $i$ in parallel}
            \STATE $\kappa_i\gets\min\{\kappa\cdot w_i, 1\}$
            \FOR{all $e$ such that $S\cup\{e\}\in \mc{M}$}
               \STATE $\Delta_i[e]\gets (f_i(S\cup \{e\})-f_i(S))/\kappa_i$
            \ENDFOR 
            
            \COMMENT{/* Randomized Response */}
            \STATE With probability $\kappa_i$ sends $\Delta_i$ to the secure aggregator
            \STATE With probability $1- \kappa_i$ does nothing
        \ENDFOR
        \STATE SecAgg: $\Delta^{(t)}$ (the sum of $\Delta_i$ received in this round.)
        \STATE Server updates: $S\gets S\cup \{\argmax\limits_{e:S\cup\{e\}\in \mc{M}}\Delta^{(t)}[e]\}$
    \ENDFOR
    \STATE {\bfseries Output: } $S$
  \end{algorithmic}
\end{algorithm}

\noindent While parallel SGD and continuous methods such as ours in this paper are commonly used as the main tool in federated optimization, we introduce a rather discrete approach to the field and believe it will find further applications. Our approach is inspired by recent works of \citet{Sparsification-RafieyY22,Kenneth-CutSparsifier} on a seemingly unrelated topic. \citet{Sparsification-RafieyY22} introduced a method to sparsify a sum of submodular functions in a centralized setting which was improved by \cite{Kenneth-CutSparsifier}. We tailor their approach to the federated setting and discuss its effectiveness for discrete problems such as \texttt{Facility Location} and \texttt{Maximum Coverage} problems. 
At the heart of this approach is for clients to know their ``importance'' without sharing sensitive information. In the monotone case, the  \emph{importance factor} for client $i$ is defined as:
\[
    w_i = \max_{e\in E} \frac{f_i(\{e\})}{F(\{e\})}.
\]
In several cases such as \texttt{Max Facility Location} and \texttt{Maximum Coverage} computing the importance factor can be done efficiently and with constant number of communication rounds. For now, let us continue by assuming each client knows its own importance factor.



\noindent{\bf Algorithm~\ref{alg:Discrete-FL} description.}
Let $\varepsilon\in(0,1)$ and set $S=\emptyset$. The server gradually adds elements to $S$ for only $r$ rounds. At round $t$ the central server broadcasts the current set $S$ to all clients (or a subset of active clients). Then each client $i$ computes the marginal contribution of each element from $E\setminus S$ to its local function $f_i$
\begin{align*}
    \Delta_i[e]=f_i(S\cup \{e\})-f_i(S); \quad \forall e: S\cup\{e\}\in \mc{I}
\end{align*}
Let $\kappa=\tilde{O}(rn/\varepsilon^2)$. Then each client sends its update in a \emph{randomized response} manner, that is, the $i$-th client with probability $\kappa_i=\min\{1,\kappa\cdot w_i\}$ sends the scaled vector $(\frac{1}{\kappa_i})\Delta_i$ to the secure aggregator, and with the complement probability does not send anything. The secure aggregator computes $\Delta^{(t)}$; the sum of the update vectors it has received. The server updates $S\gets S\cup \{\argmax_{e:S\cup\{e\}\in \mc{M}}\Delta^{(t)}[e]\}$.
The following theorem follows from the sparsification results in \cite{Sparsification-RafieyY22,Kenneth-CutSparsifier} and approximation guarantees of the greedy algorithm \cite{NemhauserWF78}. 
\begin{theorem}
\label{thm:convergence-discrete}
    For every $\varepsilon \in (0,1)$, Algorithm~\ref{alg:Discrete-FL}, with probability at least $1-1/n$, returns a subset $S\in \mc{I}$ such that
         $({1}/{2}-\varepsilon)\texttt{OPT} \leq F(S).$ 

         Under a cardinality constraint (uniform matroid), Algorithm~\ref{alg:Discrete-FL}, with probability at least $1-1/n$, returns a subset $S\in \mc{I}$ such that
         $(1-{1}/{\mathrm{e}}-\varepsilon)\texttt{OPT} \leq F(S).$ 

         Moreover, in expectation, at each round $\tilde{O}(rn^2/\varepsilon^2)$ clients participate where $r$ is the rank of the matroid.
\end{theorem}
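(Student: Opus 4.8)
The plan is to read \Cref{alg:Discrete-FL} as the classical greedy algorithm for monotone submodular maximization over the matroid $\mc{M}$, but run with an \emph{approximate} marginal‑value oracle: at round $t$, with current set $S$, the vector $\Delta^{(t)}$ plays the role of the true marginal‑gain vector $e\mapsto F(S\cup\{e\})-F(S)$. Three ingredients then suffice: (i) $\Delta^{(t)}$ is an unbiased, well‑concentrated estimate of that vector — which is exactly what the importance‑sampling/sparsification machinery of \cite{Sparsification-RafieyY22,Kenneth-CutSparsifier} provides; (ii) greedy is robust to additive oracle error; (iii) the greedy factors of \cite{NemhauserWF78} are $\tfrac12$ for a general matroid and $1-\tfrac1{\mathrm{e}}$ for a uniform matroid. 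For unbiasedness (Step~1): fix round $t$ and the current set $S$. Since client $i$ responds independently with probability $\kappa_i$ and, when it does, contributes $\tfrac1{\kappa_i}(f_i(S\cup\{e\})-f_i(S))$ to coordinate $e$, we get $\mathbb{E}[\Delta^{(t)}[e]\mid S]=\sum_i\big(f_i(S\cup\{e\})-f_i(S)\big)=F(S\cup\{e\})-F(S)$ (up to the fixed normalization by the weights $p_i$, which does not affect the $\argmax$); in particular $\argmax_e\Delta^{(t)}[e]$ is an unbiased proxy for the true greedy choice.

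\emph{Step 2 (Concentration).} The crucial structural fact is that every client's contribution to every coordinate is uniformly small. By submodularity $f_i(S\cup\{e\})-f_i(S)\le f_i(\{e\})$, and by the definition of the importance factor $f_i(\{e\})\le w_i F(\{e\})$; hence a responding client contributes at most $F(\{e\})/\kappa\le \texttt{OPT}/\kappa$ to $\Delta^{(t)}[e]$ (using $F(\{e\})\le\texttt{OPT}$), and similarly $\mathrm{Var}(\Delta^{(t)}[e])\le \tfrac{F(\{e\})}{\kappa}\big(F(S\cup\{e\})-F(S)\big)$. A Bernstein bound — precisely the estimate underlying the sparsifier constructions of \cite{Sparsification-RafieyY22,Kenneth-CutSparsifier} — together with a union bound over the at most $rn$ pairs (round, candidate element) and the choice $\kappa=\tilde{O}(rn/\varepsilon^2)$, yields that with probability at least $1-1/n$, simultaneously for every round $t$ and every feasible $e$,
\begin{align*}
\Big|\Delta^{(t)}[e]-\big(F(S\cup\{e\})-F(S)\big)\Big|\ \le\ \xi:=\frac{\varepsilon}{2r}\,\texttt{OPT}.
\end{align*}
(If one prefers to invoke the sparsification guarantee as a black box at a fixed evaluation point, note the sampling in round $t$ is independent of $S$, which was fixed by the earlier rounds, so no union bound over sets is needed.)

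\emph{Step 3 (Greedy with an approximate oracle).} Condition on the clean event. At each round the algorithm adds the feasible $e_t$ maximizing $\Delta^{(t)}[\cdot]$, so by the displayed inequality its \emph{true} marginal gain is within $2\xi$ of the maximum true marginal gain over feasible additions. Feeding this additive‑$2\xi$ greedy step into the analysis of \cite{NemhauserWF78} and summing the error over the $r$ rounds (it enters at most $r$ times, contributing at most $2r\xi=\varepsilon\,\texttt{OPT}$) gives $F(S)\ge\tfrac12\texttt{OPT}-O(r\xi)=(\tfrac12-\varepsilon)\texttt{OPT}$ for a general matroid of rank $r$, and $F(S)\ge(1-\tfrac1{\mathrm{e}})\texttt{OPT}-2r\xi=(1-\tfrac1{\mathrm{e}}-\varepsilon)\texttt{OPT}$ for a uniform matroid (rescaling $\varepsilon$ by a constant yields the stated inequalities). \emph{Step 4 (Participation).} The expected number of responders per round is $\sum_i\kappa_i\le\kappa\sum_iw_i$; assigning to each $i$ an element $e(i)\in\argmax_e f_i(\{e\})/F(\{e\})$ and grouping by $e$, $\sum_iw_i=\sum_e\sum_{i:\,e(i)=e}\tfrac{f_i(\{e\})}{F(\{e\})}\le\sum_e\tfrac{F(\{e\})}{F(\{e\})}=n$, so the expectation is at most $\kappa n=\tilde{O}(rn^2/\varepsilon^2)$.

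\emph{Main obstacle.} The delicate part is Step~2: one needs a concentration bound on $\Delta^{(t)}[e]$ that is uniform over all $rn$ (round, element) pairs \emph{and} small enough that its $r$‑fold accumulation through the greedy analysis is only $O(\varepsilon)\texttt{OPT}$, while keeping $\kappa$ — hence the communication cost and the number of participating clients — at $\tilde{O}(rn/\varepsilon^2)$. This is exactly where the importance factors and the bound $\sum_iw_i\le n$ are essential: they force every client's contribution to every coordinate below $\texttt{OPT}/\kappa$, so a handful of ``heavy'' clients cannot inflate the variance — which is the content of the sparsification theorems of \cite{Sparsification-RafieyY22,Kenneth-CutSparsifier} that we invoke. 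The remaining pieces (unbiasedness, and robustness of greedy to an additive‑$\xi$ oracle) are routine.
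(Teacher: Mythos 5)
Your proposal is correct, and it reaches the theorem by a genuinely different (though closely related) route from the paper. The paper follows \cite{Sparsification-RafieyY22,Kenneth-CutSparsifier} directly: it views each round's randomized responses as defining a random sparsified function $F'(\cdot)=\sum_i w_i f_i(\cdot)$ (with $w_i=1/\kappa_i$ w.p.\ $\kappa_i$ and $0$ otherwise), proves via Chernoff and a union bound over all $n^{r}$ independent sets that $(1-\varepsilon)F'(S)\le F(S)\le(1+\varepsilon)F'(S)$ holds uniformly, observes that the server's update is exactly the greedy step for $F'$, and then transfers the \cite{NemhauserWF78} guarantee multiplicatively. You instead prove pointwise additive concentration of each coordinate $\Delta^{(t)}[e]$ around the true marginal $F(S\cup\{e\})-F(S)$ at the $rn$ (round, element) query points only, and then invoke robustness of greedy to an additive-$2\xi$ oracle with $\xi=\tfrac{\varepsilon}{2r}\texttt{OPT}$. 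Both arguments hinge on the identical key estimate coming from the importance factors --- a responding client contributes at most $F(\{e\})/\kappa$ per coordinate, so variance and range are controlled without any heavy clients --- and both handle the per-round re-randomization by a union bound over the $r$ rounds. What each buys: the paper's route yields the stronger uniform sparsifier statement and lets one quote the greedy approximation on $F'$ essentially as a black box, at the cost of a union bound over $n^{r}$ sets (the source of the factor $r$ inside $\kappa$); your route union-bounds over only $rn$ events and handles the adaptivity of $S$ cleanly (the round-$t$ responses are independent of the past), but requires the additive error per coordinate to be as small as $\texttt{OPT}\cdot\varepsilon/r$ so that it survives $r$-fold accumulation through the greedy recursion --- a Bernstein calculation that, as you note, goes through comfortably with $\kappa=\tilde{O}(rn/\varepsilon^2)$. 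Your Step 4 bound $\sum_i w_i\le n$ is exactly the computation the paper performs for its participation count. No gaps.
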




\noindent{\bf Facility Location Problem.}
Let $\mc{C}$ be a set of $N$ clients and $E$ be a set of facilities with $|E|=n$. For $c:\mc{C}\times E\to \zR$ let the $i$-th client's score function over a subset of facilities be $f_i(A)=\max_{j\in A} c(i,j)$. The objective for \texttt{Max Facility Location} is $\max_{S\subseteq E,|S|\leq k} \sum_{i=1}^N \max_{j\in S} c(i,j)$.
For each client $i$  the importance factor is $w_i = \max_{j\in E}\frac{c(i,j)}{ F(\{j\})}$. 


In several situations computing the importance factors is straightforward. For instance, in movie recommendation systems where a facility location objective is used (see \Cref{sec:discrete}), the average rating i.e., $F(\{j\})$, for all movies are publicly available. Hence, each client knows its own importance factor. Having $w_i$, we can apply Algorithm~\ref{alg:Discrete-FL}. Doing so, note that at each round the expected number of clients participating is $\tilde{O}(kn^2/\varepsilon^2)$, \textbf{independent of $N$}. 

In other situations, clients can compute their importance factors in a federated setting, using a secure aggregator and without sharing their data with other clients. Further details on this and discussions about the \texttt{Max Coverage} problem are presented in \Cref{sec:discrete}.

\section{Conclusions and Future Work}
We present \texttt{\textsc{Fed}CG}, the first algorithm for decomposable submodular maximization in federated setting under matroid constraints. \texttt{\textsc{Fed}CG} is based on the continuous greedy algorithm and achieves the best possible approximation factor i.e., $1-1/\mathrm{e}$ under mild assumptions even faced with client selection and low communication rounds. Additionally, we introduce a new federated framework for discrete problems. 
Our work leads to many interesting directions for future work, such as providing stronger privacy guarantee using differentially private methods \cite{MitrovicB0K17,RafieyY20,ChaturvediNZ21}, finding a lower bound on the bit complexity, and improving the additive errors of our algorithms.

\section*{Acknowledgements}
I would like to thank Arya Mazumdar and Barna Saha for their many useful discussions and for reading several drafts of this paper. I would also like to thank Yuichi Yoshida for his valuable discussions and inspiring comments on the paper. Additionally, I am grateful to Nazanin Mehrasa, Wei-Ning Chen, Heng Zhu, and Quanquan C. Liu for insightful comments on this work.

\section*{Impact Statement}

This paper presents work whose goal is to advance the field of 
Machine Learning. There are many potential societal consequences 
of our work, none which we feel must be specifically highlighted here.


\bibliography{references}
\bibliographystyle{icml2024}

\newpage
\appendix
\onecolumn

 \section{Preliminary Results and Proof of Convergence for Full Participation}

\paragraph{Quick overview.} At the heart of the analyses for the approximation ratio of the \texttt{Centralized Continuous Greedy} is to show that at each iteration the algorithm reduces the gap to the optimal solution by a significant amount \cite{CalinescuCPV11}. We follow the same general idea although there are several subtleties that we should address. Mainly, it is required to compare the improvement we obtain by taking direction $\mb{v}$ versus the improvement obtained by taking direction $\sum_{i=1}^N p_i\mb{v}_i$ where $\mb{v}=\argmax_{\mb{y}\in\mc{P}}\langle \mb{y},\nabla\widehat{F}(\mb{x})\rangle$ and  $\mb{v}_i=\argmax_{\mb{y}\in\mc{P}}\langle \mb{y},\nabla\widehat{f}_i(\mb{x})\rangle$. To establish a comparison between the two, we need several intermediate lemmas.

We start off by focusing on providing an upper bound on $\texttt{OPT}-\widehat{F}(\mb{x})$. 
\begin{lemma}
\label{lem:full-participation}
    Let $F=\sum_{i=1}^N p_i f_i$ be a function where each $f_i:2^E\to\mathbb{R}_{+}$ is a monotone submodular function. Suppose $\mc{P} \subseteq [0, 1]^n$ is a polytope and define $\texttt{OPT} = \max_{\bx\in \mc{P}} \widehat{F}(\bx)$. Then for any $\bx \in [0, 1]^n$ and $\mb{v}_i=\argmax_{\mb{y}\in \mc{P}}\langle \mb{y},\nabla\widehat{f}_i(\mb{x})\rangle$ we have 
    \[
        \texttt{OPT}- \widehat{F}(\bx)  \leq  \sum_{i=1}^N p_i \langle\mb{v}_i, \nabla \widehat{f_i}(\mb{x}) \rangle.
    \]
\end{lemma}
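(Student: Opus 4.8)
The plan is to reproduce the per‑step progress argument underlying \texttt{Centralized Continuous Greedy}, but with the single linear‑maximization step distributed across the clients. First I would fix a maximizer $\mathbf{x}^\star\in\mathcal{P}$ with $\widehat{F}(\mathbf{x}^\star)=\texttt{OPT}$ (it exists since $\mathcal{P}$ is compact and $\widehat{F}$ is continuous) and pass to the coordinatewise join $\mathbf{x}\vee\mathbf{x}^\star$, which still lies in $[0,1]^n$. Since each $\widehat{f}_i$ is monotone (property 1 of the multilinear extension), so is $\widehat{F}=\sum_i p_i\widehat{f}_i$, hence $\widehat{F}(\mathbf{x}\vee\mathbf{x}^\star)\geq\widehat{F}(\mathbf{x}^\star)=\texttt{OPT}$.

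Next I would use concavity along nonnegative directions. Put $\mathbf{d}=\mathbf{x}\vee\mathbf{x}^\star-\mathbf{x}\geq\mathbf{0}$; then $t\mapsto\widehat{F}(\mathbf{x}+t\mathbf{d})$ is concave on $[0,1]$ because $\mathbf{d}\geq\mathbf{0}$ and $\widehat{F}$ is concave along nonnegative directions (property 2), so the first‑order upper bound gives $\widehat{F}(\mathbf{x}+\mathbf{d})\leq\widehat{F}(\mathbf{x})+\langle\mathbf{d},\nabla\widehat{F}(\mathbf{x})\rangle$. Combining with the previous paragraph yields $\texttt{OPT}-\widehat{F}(\mathbf{x})\leq\langle\mathbf{d},\nabla\widehat{F}(\mathbf{x})\rangle$. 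Because $\nabla\widehat{F}(\mathbf{x})\geq\mathbf{0}$ and $\mathbf{0}\leq\mathbf{d}\leq\mathbf{x}^\star$ coordinatewise (checking the two cases $x^\star(e)\geq x(e)$ and $x^\star(e)<x(e)$), I can replace $\mathbf{d}$ by $\mathbf{x}^\star$ and obtain $\texttt{OPT}-\widehat{F}(\mathbf{x})\leq\langle\mathbf{x}^\star,\nabla\widehat{F}(\mathbf{x})\rangle$.

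Finally I would split the gradient as $\nabla\widehat{F}(\mathbf{x})=\sum_i p_i\nabla\widehat{f}_i(\mathbf{x})$, so $\langle\mathbf{x}^\star,\nabla\widehat{F}(\mathbf{x})\rangle=\sum_i p_i\langle\mathbf{x}^\star,\nabla\widehat{f}_i(\mathbf{x})\rangle$, and invoke optimality of $\mathbf{v}_i$: since $\mathbf{x}^\star\in\mathcal{P}$ and $\mathbf{v}_i=\argmax_{\mathbf{y}\in\mathcal{P}}\langle\mathbf{y},\nabla\widehat{f}_i(\mathbf{x})\rangle$, we have $\langle\mathbf{x}^\star,\nabla\widehat{f}_i(\mathbf{x})\rangle\leq\langle\mathbf{v}_i,\nabla\widehat{f}_i(\mathbf{x})\rangle$ for every $i$, and these inequalities survive the nonnegative reweighting by $p_i$. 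Chaining the displayed inequalities gives $\texttt{OPT}-\widehat{F}(\mathbf{x})\leq\sum_i p_i\langle\mathbf{v}_i,\nabla\widehat{f}_i(\mathbf{x})\rangle$, as claimed.

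There is no real obstacle here; the two points that merely need care are (i) the join $\mathbf{x}\vee\mathbf{x}^\star$ may leave $\mathcal{P}$, which is harmless because we only use that $\widehat{F}$ is monotone on $[0,1]^n$ and only use membership in $\mathcal{P}$ for $\mathbf{x}^\star$ and the $\mathbf{v}_i$; and (ii) passing from $\langle\mathbf{d},\nabla\widehat{F}(\mathbf{x})\rangle$ to $\langle\mathbf{x}^\star,\nabla\widehat{F}(\mathbf{x})\rangle$ genuinely relies on nonnegativity of $\nabla\widehat{F}$ — precisely the two places where monotonicity of the $f_i$ enters.
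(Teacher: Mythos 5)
Your proof is correct and follows essentially the same route as the paper's: monotonicity to pass to the join $\mathbf{x}\vee\mathbf{x}^\star$, concavity of $\widehat{F}$ along the nonnegative direction $\mathbf{d}$, nonnegativity of $\nabla\widehat{F}$ to replace $\mathbf{d}$ by $\mathbf{x}^\star$, and then the decomposition $\nabla\widehat{F}=\sum_i p_i\nabla\widehat{f}_i$ together with optimality of each $\mathbf{v}_i$. The only cosmetic difference is that the paper routes the last step through $\max_{\mathbf{v}\in\mathcal{P}}\langle\mathbf{v},\nabla\widehat{F}(\mathbf{x})\rangle$ before exchanging the maximum with the sum, whereas you bound $\langle\mathbf{x}^\star,\nabla\widehat{f}_i(\mathbf{x})\rangle$ termwise directly; the two are equivalent.
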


\begin{proof}[Proof of Lemma~\ref{lem:full-participation}]
    First, let us derive some observations on the decomposable submodular function $F$ and its multilinear extension $\widehat{F}$. By definition $\widehat{F}(\mb{x})=\E_{X\sim\mb{x}}[F(X)]$. Then
    \begin{align*}
        \widehat{F}(\mb{x}) =\E_{X\sim\mb{x}}[F(X)]
        = \E_{X\sim\mb{x}}\left[\sum_{i=1}^N p_i f_i\right] 
        = \sum_{i=1}^N p_i(\E_{X\sim\mb{x}}[f_i(X)]) 
        = \sum_{i=1}^N p_i\widehat{f_i}(\mb{x})
    \end{align*}
    Then the gradient of $\widehat{F}$ at any point is equal to the average of gradients from local functions. That is 
    $
        \nabla\widehat{F}(\mb{x})=\nabla\left( \sum_{i=1}^N p_i\widehat{f_i}(\mb{x})\right)=\sum_{i=1}^N p_i\nabla \widehat{f_i}(\mb{x}).
    $

    Now let $\mb{w}\in \mc{P}$ be such that $\widehat{F}(\mb{w})=\texttt{OPT}$. 
    \begin{align*}
        &\langle \mb{w}, \nabla \widehat{F}(\mb{x}) \rangle \leq   \max_{\mb{v}\in \mc{P}}\langle \mb{v}, \nabla \widehat{F}(\mb{x}) \rangle = \max_{\mb{v}\in \mc{P}}\langle \mb{v},  \sum_{i=1}^N p_i\nabla\widehat{f}_i(\mb{x}) \rangle = \max_{\mb{v}\in \mc{P}} \sum_{i=1}^N  p_i\langle \mb{v}, \nabla \widehat{f_i}(\mb{x}) \rangle \\
        &\leq \sum_{i=1}^N p_i\max_{\mb{y}\in \mc{P}} \langle \mb{y}, \nabla \widehat{f_i}(\mb{x}) \rangle = \sum_{i=1}^N p_i \langle \mb{v}_i, \nabla \widehat{f_i}(\mb{x}) \rangle
    \end{align*}
    In what follows, we prove $\texttt{OPT}- \widehat{F}(\bx) \leq \langle\mb{w},\nabla \widehat{F}(\bx)\rangle$. Define $\mb{d}=(\mb{x}\vee \mb{w})-\mb{x}=(\mb{w}-\mb{x})\vee \mb{0}$. By the monotonicity, we have
    \begin{align*}
        \texttt{OPT}=\widehat{F}(\mb{w}) \leq \widehat{F}(\mb{x}\vee\mb{w}).
    \end{align*}
    Note that $\mb{d}>\mb{0}$, and hence by concavity of $\widehat{F}$ along any positive direction we get
    \begin{align*}
        \widehat{F}(\mb{x}\vee\mb{w})=\widehat{F}(\mb{x}+\mb{d}) \leq \widehat{F}(\mb{x})+\langle \mb{d},\nabla\widehat{F}(\mb{x})\rangle.
    \end{align*}
    Combining the above inequalities we obtain
    \begin{align*}
        \texttt{OPT}-\widehat{F}(\mb{x})\leq\widehat{F}(\mb{x}\vee\mb{w})-\widehat{F}(\mb{x})\leq\langle \mb{d},\nabla\widehat{F}(\mb{x})\rangle 
    \end{align*}
    Now, since $\nabla\widehat{F}(\mb{x})$ is nonnegative and $\mb{d}\leq \mb{w}$ then we get
    \begin{align*}
        \texttt{OPT}-\widehat{F}(\mb{x})\leq \langle \mb{w},\nabla\widehat{F}(\mb{x})\rangle
    \end{align*}
    This completes the proof.
\end{proof}

The following is an immediate consequence of Lemma~\ref{lem:full-participation}.
\begin{corollary}
\label{cor:opt-F}
    For any polytope $\mc{P} \subseteq [0, 1]^n$, $\bx^{(t)} \in [0, 1]^n$, and $\mb{v}_i^{(t)}=\argmax_{\mb{y}\in \mc{P}} \langle \mb{y}, \nabla \widehat{f_i}(\mb{x}^{(t)}) \rangle$,
    $
        \texttt{OPT} - \widehat{F}(\mb{x}^{(t)}) \leq  \sum_{i=1}^N p_i \langle\mb{v}_i^{(t)}, \nabla\widehat{f_i}(\mb{x}^{(t)}) \rangle.
    $
\end{corollary}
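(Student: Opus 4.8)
The plan is to obtain Corollary~\ref{cor:opt-F} as a direct specialization of Lemma~\ref{lem:full-participation}. The lemma holds for an arbitrary point $\bx \in [0,1]^n$ and an arbitrary polytope $\mc{P} \subseteq [0,1]^n$ with $\mb{v}_i = \argmax_{\mb{y}\in\mc{P}}\langle \mb{y}, \nabla\widehat{f}_i(\bx)\rangle$, asserting $\texttt{OPT} - \widehat{F}(\bx) \le \sum_{i=1}^N p_i \langle \mb{v}_i, \nabla\widehat{f}_i(\bx)\rangle$ where $\texttt{OPT} = \max_{\bx\in\mc{P}}\widehat{F}(\bx)$. So the only step is to instantiate it at $\bx = \mb{x}^{(t)}$: the corollary already supplies the hypothesis $\mb{x}^{(t)} \in [0,1]^n$, and the vector $\mb{v}_i^{(t)} = \argmax_{\mb{y}\in\mc{P}}\langle \mb{y}, \nabla\widehat{f}_i(\mb{x}^{(t)})\rangle$ appearing in the corollary is exactly $\mb{v}_i$ of the lemma with $\bx$ replaced by $\mb{x}^{(t)}$. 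Substituting gives $\texttt{OPT} - \widehat{F}(\mb{x}^{(t)}) \le \sum_{i=1}^N p_i \langle \mb{v}_i^{(t)}, \nabla\widehat{f}_i(\mb{x}^{(t)})\rangle$, which is the claim.

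There is essentially no obstacle in this step; all the work sits in Lemma~\ref{lem:full-participation}, so the thing I would double-check is that its two ingredients are in hand: (i) the decomposition $\widehat{F} = \sum_i p_i \widehat{f}_i$ and hence $\nabla\widehat{F} = \sum_i p_i \nabla\widehat{f}_i$, which lets one push the maximum inside the weighted sum, $\max_{\mb{v}\in\mc{P}}\langle\mb{v},\nabla\widehat{F}(\bx)\rangle = \max_{\mb{v}\in\mc{P}}\sum_i p_i\langle\mb{v},\nabla\widehat{f}_i(\bx)\rangle \le \sum_i p_i\max_{\mb{y}\in\mc{P}}\langle\mb{y},\nabla\widehat{f}_i(\bx)\rangle = \sum_i p_i\langle\mb{v}_i,\nabla\widehat{f}_i(\bx)\rangle$; and (ii) the standard continuous-greedy inequality $\texttt{OPT} - \widehat{F}(\bx) \le \langle \mb{w}, \nabla\widehat{F}(\bx)\rangle$ for $\mb{w}$ attaining $\texttt{OPT}$, obtained by chaining monotonicity ($\widehat{F}(\mb{w}) \le \widehat{F}(\bx\vee\mb{w})$), concavity of $\widehat{F}$ along the nonnegative direction $\mb{d} = (\mb{w}-\bx)\vee\mb{0}$, and nonnegativity of $\nabla\widehat{F}$ together with $\mb{d}\le\mb{w}$.

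For completeness I would also record how Corollary~\ref{cor:opt-F} feeds into Theorem~\ref{thm:full-participation}, since that is where the remaining error terms originate: invoking Assumption~\ref{ass:bounded-gradients} converts each $\langle\mb{v}_i^{(t)},\nabla\widehat{f}_i(\mb{x}^{(t)})\rangle$ into $\langle\mb{v}_i^{(t)},\nabla\widehat{F}(\mb{x}^{(t)})\rangle$ at an additive cost of at most $r\gamma_t$ (using that $\mb{v}_i^{(t)}$ is a $0/1$ base indicator with $r$ ones), whence $\langle\Delta^{(t)},\nabla\widehat{F}(\mb{x}^{(t)})\rangle \ge \texttt{OPT} - \widehat{F}(\mb{x}^{(t)}) - r\gamma_t$ with $\Delta^{(t)} = \sum_i p_i\mb{v}_i^{(t)}$; combining this with a second-order (Lipschitz/Hessian) bound on $\widehat{F}(\mb{x}^{(t+1)}) - \widehat{F}(\mb{x}^{(t)})$ costing $\tfrac12\eta^2 r^2 m_F$ per round and telescoping $\texttt{OPT} - \widehat{F}(\mb{x}^{(t+1)}) \le (1-\eta)\bigl(\texttt{OPT} - \widehat{F}(\mb{x}^{(t)})\bigr) + \eta r\gamma_t + \tfrac12\eta^2 r^2 m_F$ over $t=0,\dots,T-1$ yields the stated bound.
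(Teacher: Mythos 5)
Your proof is correct and matches the paper exactly: the paper also obtains Corollary~\ref{cor:opt-F} as an immediate instantiation of Lemma~\ref{lem:full-participation} at $\bx = \mb{x}^{(t)}$, and the two ingredients you identify for the lemma (pushing the maximum inside the weighted sum via $\nabla\widehat{F}=\sum_i p_i\nabla\widehat{f}_i$, and the monotonicity/concavity argument giving $\texttt{OPT}-\widehat{F}(\bx)\le\langle\mb{w},\nabla\widehat{F}(\bx)\rangle$) are precisely those in the paper's proof of that lemma. The additional paragraph tracing how the corollary feeds into Theorem~\ref{thm:full-participation} is accurate but not needed for the statement itself.
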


Next we use Assumption~3.1 to bound the difference between $\sum_{i=1}^N p_i\langle \mb{v}_i^{(t)}, \nabla \widehat{f_i}(\mb{x}^{(t)}) \rangle$ and $\sum_{i=1}^N p_i\langle \mb{v}_i^{(t)}, \nabla \widehat{F}(\mb{x}^{(t)}) \rangle$ at every iteration $t$.

\begin{lemma}[Bounded heterogeneity]
\label{lem:bound-heterogeneity}
    Let $\mc{M}$ be a matroid of rank $r$ and $\mc{P}$ be its corresponding matroid polytope. For any $\bx^{(t)} \in [0, 1]^n$ let $\mb{v}_i^{(t)}=\argmax_{\mb{y}\in \mc{P}}\langle \mb{y},\nabla\widehat{f}_i(\mb{x}^{(t)})\rangle$ and $\Bar{\mb{v}}^{(t)}=\sum_{i=1}^N p_i\mb{v}_i^{(t)}$. Then, under the bounded gradient dissimilarity assumption 3.1, we have
    \[
        \langle \Bar{\mb{v}}^{(t)} , \nabla\widehat{F}(\mb{x}^{(t)}) \rangle \geq \sum_{i=1}^N p_i\langle \mb{v}_i^{(t)}, \nabla \widehat{f_i}(\mb{x}^{(t)}) \rangle - r\gamma_t
    \]
\end{lemma}

\begin{proof}[Proof of Lemma~\ref{lem:bound-heterogeneity}]
    The proof is straightforward.
    \begin{align*}
        &\sum_{i=1}^N p_i\langle \mb{v}_i^{(t)}, \nabla \widehat{f_i}(\mb{x}^{(t)}) \rangle - \langle \Bar{\mb{v}}^{(t)} , \nabla\widehat{F}(\mb{x}^{(t)}) \rangle
         = \sum_{i=1}^N p_i\langle \mb{v}_i^{(t)}, \nabla \widehat{f_i}(\mb{x}^{(t)}) \rangle - \sum_{i=1}^N p_i\langle \mb{v}_i^{(t)}, \nabla\widehat{F}(\mb{x}^{(t)}) \rangle \\
        &= \sum_{i=1}^N p_i\langle \mb{v}_i^{(t)}, \nabla \widehat{f_i}(\mb{x}^{(t)}) - \nabla\widehat{F}(\mb{x}^{(t)}) \rangle 
        \leq \sum_{i=1}^N p_i r\gamma_t = r\gamma_t
    \end{align*}
where in the last inequality we used the fact that each $\mb{v}_i^{(t)}$ corresponds to a base in the matroid and we have $\mb{v}_i^{(t)}\in \{0,1\}^n$ and $|\mb{v}_i^{(t)}|_1=r$. Given this and Assumption~3.1 we have $|\langle \mb{v}_i^{(t)}, \nabla \widehat{f_i}(\mb{x}^{(t)}) - \nabla\widehat{F}(\mb{x}^{(t)}) \rangle|\leq r\gamma_t$. 
\end{proof}

\subsection{Putting Everything Together: Proof of \Cref{thm:full-participation}}
We now are ready to prove \Cref{thm:full-participation}. 

\begin{proof}[Proof of \Cref{thm:full-participation}]
    Recall that $\bx^{(t+1)}\gets \bx^{(t)} + \eta\Delta^{(t)}$ where in the full participation case $\Delta^{(t)}=\sum_{i=1}^N p_i \mb{v}_i^{(t)}$. According to the Taylor's expansion of the function $\widehat{F}$ near the point $\bx^{(t)}$ we can write

    \begin{align}
        \widehat{F}(\mb{x}^{(t+1)})
         &= \widehat{F}(\bx^{(t)}) + \langle \mb{x}^{(t+1)} - \bx^{(t)},\nabla\widehat{F}(\bx^{(t)}) \rangle + \frac{1}{2} \langle \mb{x}^{(t+1)} - \bx^{(t)}, \mb{H}_{\widehat{F}} (\mb{x}^{(t+1)} - \bx^{(t)})\rangle\\
         \label{eq:Taylor}
         &= \widehat{F}(\bx^{(t)}) + \eta\langle \Delta^{(t)},\nabla\widehat{F}(\bx^{(t)}) \rangle + \frac{\eta^2}{2} \langle\Delta^{(t)}, \mb{H}_{\widehat{F}} \Delta^{(t)} \rangle
    \end{align}
    where $\mb{H}_{\widehat{F}}$ is the Hessian matrix i.e., the second derivative matrix. We provide a lower bound on each entry of $\mb{H}_{\widehat{F}}$. By the result of \cite{CalinescuCPV11} and definition of the multilinear extension:
    \begin{align}
        \frac{\partial \widehat{F}}{\partial \mb{x}(i)\partial \mb{x}(j)} 
        &= \E_{R\sim \mb{x}}[F(R\cup\{i,j\})-F(R\cup\{i\}\setminus\{j\})] -\E_{R\sim \mb{x}}[F(R\cup\{j\}\setminus\{i\}) - F(R\setminus\{i,j\})]\\
        &\geq -\max\{F(\{i\}),F(\{j\})\} \geq -\max_{e\in E}F(\{e\}) = -m_F 
    \end{align}
    where the second last inequality is  a direct consequence of the submodularity of $F$. This means every entry of the Hessian is at least $-m_F$. Thus, we arrive at the following lower bound
    \begin{align}
    \label{eq:F-lip}
        \langle\Delta^{(t)}, \mb{H}_{\widehat{F}} \Delta^{(t)} \rangle \geq \sum_{i=1}^n\sum_{j=1}^n \Delta^{(t)}(i)\Delta^{(t)}(j) \mb{H}_{\widehat{F}}{(i,j)} \geq -m_F \sum_{i=1}^n\sum_{j=1}^n \Delta^{(t)}(i)\Delta^{(t)}(j)= -m_F\left(\sum_{i=1}^n \Delta^{(t)}(i)\right)^2 \geq -m_F r^2   
    \end{align}
    where in the last inequality we used the fact that $\Delta^{(t)}\in \mc{P}$ as it is a convex combination of vectors from $\mc{P}$, and hence $\sum_{i=1}^n \Delta^{(t)}(i)\leq r$. 

    Thus from \eqref{eq:Taylor} and \eqref{eq:F-lip} it follows that  
    
    \begin{align}
    \label{eq:delta}
        \widehat{F}(\mb{x}^{(t+1)})
         &\geq \widehat{F}(\bx^{(t)}) + \eta\langle \Delta^{(t)},\nabla\widehat{F}(\bx^{(t)}) \rangle - \frac{\eta^2\cdot m_F r^2}{2}
    \end{align}

    It is now required to provide a lower bound for $\langle \Delta^{(t)}, \nabla\widehat{F}(\bx^{(t)} \rangle$  in terms of $\texttt{OPT}$. We prove the following lemma.
    \begin{lemma}
    \label{lem:bound-heterogeneity-2}
        Let $\gamma_t$ be as in  Assumption~\ref{ass:bounded-gradients}. Then we have
        $\texttt{OPT}-\widehat{F}(\mb{x}^{(t)}) \leq \langle \Delta^{(t)}, \nabla\widehat{F}(\bx^{(t)} \rangle + r\gamma_t$.
    \end{lemma}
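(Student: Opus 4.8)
\textbf{Proof proposal for Lemma~\ref{lem:bound-heterogeneity-2}.} The plan is to combine two results already established in the excerpt, since in the full participation case $\Delta^{(t)} = \sum_{i=1}^N p_i \mb{v}_i^{(t)} = \Bar{\mb{v}}^{(t)}$, exactly the averaged direction analyzed in Lemma~\ref{lem:bound-heterogeneity}. First I would invoke Corollary~\ref{cor:opt-F}, which gives the ``gap-closing'' bound
\[
    \texttt{OPT} - \widehat{F}(\mb{x}^{(t)}) \leq \sum_{i=1}^N p_i \langle \mb{v}_i^{(t)}, \nabla\widehat{f_i}(\mb{x}^{(t)}) \rangle,
\]
i.e. the optimality gap is controlled by the sum of the local improvements, each client moving along its own locally optimal direction $\mb{v}_i^{(t)}$ with respect to its own gradient $\nabla\widehat{f_i}(\mb{x}^{(t)})$.

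Next I would apply Lemma~\ref{lem:bound-heterogeneity} (bounded heterogeneity), which quantifies the price of replacing the sum of local improvements $\sum_i p_i \langle \mb{v}_i^{(t)}, \nabla\widehat{f_i}(\mb{x}^{(t)})\rangle$ by the improvement of the aggregated direction against the \emph{global} gradient, $\langle \Bar{\mb{v}}^{(t)}, \nabla\widehat{F}(\mb{x}^{(t)})\rangle$. That lemma states
\[
    \langle \Bar{\mb{v}}^{(t)}, \nabla\widehat{F}(\mb{x}^{(t)}) \rangle \geq \sum_{i=1}^N p_i \langle \mb{v}_i^{(t)}, \nabla\widehat{f_i}(\mb{x}^{(t)}) \rangle - r\gamma_t,
\]
the $r\gamma_t$ term coming from the fact that each $\mb{v}_i^{(t)}$ is the indicator of a matroid base ($\|\mb{v}_i^{(t)}\|_1 = r$) together with Assumption~\ref{ass:bounded-gradients}, which bounds $\|\nabla\widehat{f_i}(\mb{x}^{(t)}) - \nabla\widehat{F}(\mb{x}^{(t)})\|_\infty \le \gamma_t$.

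Chaining the two displays and using $\Delta^{(t)} = \Bar{\mb{v}}^{(t)}$ yields
\[
    \texttt{OPT} - \widehat{F}(\mb{x}^{(t)}) \leq \sum_{i=1}^N p_i \langle \mb{v}_i^{(t)}, \nabla\widehat{f_i}(\mb{x}^{(t)}) \rangle \leq \langle \Delta^{(t)}, \nabla\widehat{F}(\mb{x}^{(t)}) \rangle + r\gamma_t,
\]
which is exactly the claimed inequality. There is essentially no obstacle here: the substantive work has already been done in Corollary~\ref{cor:opt-F} (which in turn rests on monotonicity and concavity-along-nonnegative-directions of $\widehat{F}$) and in Lemma~\ref{lem:bound-heterogeneity}; this lemma is just the bookkeeping step that feeds the per-iteration progress bound into the Taylor-expansion recursion \eqref{eq:delta}. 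The only point to be careful about is to state explicitly that $\Delta^{(t)} = \sum_i p_i \mb{v}_i^{(t)}$ in the full-participation regime, so that Lemma~\ref{lem:bound-heterogeneity} applies verbatim.
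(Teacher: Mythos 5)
Your proposal is correct and matches the paper's own proof essentially verbatim: both chain Corollary~\ref{cor:opt-F} with Lemma~\ref{lem:bound-heterogeneity}, using that $\Delta^{(t)}=\sum_{i=1}^N p_i\mb{v}_i^{(t)}$ in the full-participation case. No gaps.
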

    \begin{proof}[Proof of Lemma~\ref{lem:bound-heterogeneity-2}]
        By Corollary~\ref{cor:opt-F} we have that 
        \[
        \texttt{OPT} - \widehat{F}(\mb{x}^{(t)}) \leq  \sum_{i=1}^N p_i \langle\mb{v}_i^{(t)}, \nabla\widehat{f_i}(\mb{x}^{(t)}) \rangle
        \]
        
        Furthermore, by Lemma~\ref{lem:bound-heterogeneity}, we have 
        \[
        \sum_{i=1}^N p_i\langle \mb{v}_i^{(t)}, \nabla \widehat{f_i}(\mb{x}^{(t)}) \rangle - \langle \Delta^{(t)} , \nabla\widehat{F}(\mb{x}^{(t)}) \rangle\leq r\gamma_t
        \] 
        These two give the desired result.
    Moreover, it is worth noting 
    \[
    \sum_{i=1}^N p_i\langle \mb{v}_i^{(t)}, \nabla \widehat{f_i}(\mb{x}^{(t)}) \rangle - \langle \Delta^{(t)} , \nabla\widehat{F}(\mb{x}^{(t)}) \rangle \geq 0
    \]
    This is because
    \begin{align*}
        &\sum_{i=1}^N p_i\langle \mb{v}_i^{(t)}, \nabla \widehat{f_i}(\mb{x}^{(t)}) \rangle - \langle \Delta^{(t)} , \nabla\widehat{F}(\mb{x}^{(t)}) \rangle = \sum_{i=1}^N p_i\langle \mb{v}_i^{(t)}, \nabla \widehat{f_i}(\mb{x}^{(t)}) \rangle - \sum_{i=1}^N p_i \langle \Delta^{(t)} , \nabla\widehat{f}_i(\mb{x}^{(t)}) \rangle 
        \\
        &= \sum_{i=1}^N p_i\left(\langle \mb{v}_i^{(t)}, \nabla \widehat{f_i}(\mb{x}^{(t)}) \rangle -  \langle \Delta^{(t)} , \nabla\widehat{f}_i(\mb{x}^{(t)}) \rangle\right) \geq 0
    \end{align*}
    where in the last inequality by definition of $\mb{v}_i^{(t)}$ we have $\langle \mb{v}_i^{(t)}, \nabla \widehat{f_i}(\mb{x}^{(t)}) \rangle -  \langle \Delta^{(t)} , \nabla\widehat{f}_i(\mb{x}^{(t)}) \rangle \geq 0$.
    \end{proof}

    Followed by Lemma~\ref{lem:bound-heterogeneity-2} and equation \ref{eq:delta}, we obtain
    \begin{align}
        \widehat{F}(\mb{x}^{(t+1)})- \widehat{F}(\bx^{(t)})  
        & \geq \eta\left(\texttt{OPT} - \widehat{F}(\mb{x}^{(t)}) - r\gamma_t \right) -\frac{\eta^2\cdot m_F r^2}{2}
    \end{align}
    Now, by changing signs and adding $\texttt{OPT}$ to both sides, we get
    \begin{align}
        \texttt{OPT}-\widehat{F}(\mb{x}^{(t+1)}) & \leq (1-\eta)\left( \texttt{OPT}-\widehat{F}(\mb{x}^{(t)})\right)+ \eta r\gamma_t +\frac{\eta^2\cdot m_F r^2}{2}
    \end{align}
    Applying the same inequality inductively gives
    \begin{align}
        {\texttt{OPT}}-\widehat{F}(\mb{x}^{(t+1)}) & \leq (1-\eta)^{t+1}\left( {\texttt{OPT}}-\widehat{F}(\mb{0})\right)+ \eta r\left(\sum_{t=0}^{T-1}(1-\eta)^{T-t-1}\gamma_t\right) +\frac{(\eta^2\cdot m_F r^2)(\sum_{t=0}^{T-1}(1-\eta)^{T-t-1})}{2}\\
        &\leq (1-\eta)^{t+1}\left( {\texttt{OPT}}-\widehat{F}(\mb{0})\right)+ \eta r\left(\sum_{t=0}^{T-1}\gamma_t\right) +\frac{T\eta^2\cdot m_F r^2}{2}\tag{$(1-\eta)< 1$}
    \end{align}
    Hence for $\mb{x}^{(T)}$ we have
    \begin{align}
        \left( 1- (1-\eta)^T\right){\texttt{OPT}} \leq \widehat{F}(\mb{x}^{(T)}) + \eta r\left(\sum_{t=0}^{T-1}\gamma_t\right) +\frac{T\eta^2\cdot m_F r^2}{2}
    \end{align}
    Finally, setting $\eta=1/T$ yields
    \begin{align}
        \left(1- \frac{1}{\mathrm{e}}\right)\texttt{OPT} &\leq \widehat{F}(\mb{x}^{(T)}) + \frac{r\left(\sum_{t=0}^{T-1}\gamma_t\right)}{T} +\frac{m_F r^2}{2T}\\
        &= \widehat{F}(\mb{x}^{(T)}) + \frac{rD}{T} +\frac{m_F r^2}{2T}
    \end{align}
    for $\sum_{t=0}^{T-1}\gamma_t=D$. Note that for any $\varepsilon\geq 0$ by setting $T=\max\{\frac{2rD}{\varepsilon}, \frac{m_F r^2}{\varepsilon}\}$ we obtain
    \begin{align}
        \left(1- \frac{1}{\mathrm{e}}\right)\texttt{OPT} \leq \widehat{F}(\mb{x}^{(T)}) + \varepsilon
    \end{align}
\end{proof}

\subsection{Upper Bound for $\gamma_t$ in \Cref{ass:bounded-gradients}}

Inequality \eqref{eq:bound-gamma} in \Cref{ass:bounded-gradients} can be derived using monotonicity and submodularity as follows. For simplicity we drop $t$.
\begin{align*}
     |\nabla \widehat{f}_i(\mathbf{x}) - \nabla \widehat{F}(\mathbf{x})|_\infty 
     & = \max_{e\in E} |[\nabla \widehat{f}_i(\mathbf{x})]_e - [\nabla \widehat{F}(\mathbf{x})]_e| \\
    & \leq \max_{e\in E} [\nabla \widehat{f}_i(\mathbf{x})]_e + [\nabla \widehat{F}(\mathbf{x})]_e \tag{by monotonicity and submodularity: $[\nabla\widehat{F}(\mathbf{x})]_e, [\nabla \widehat{f}_i(\mathbf{x})]_e\geq 0$}\\
    &\leq \max_{e\in E} \max_{i} [\nabla \widehat{f}_i(\mathbf{x})]_e + [\nabla \widehat{f}_i(\mathbf{x})]_e \tag{using $F=\sum_{i} p_i f_i$ and $\sum_i p_i =1$} \\
    & = 2 \max_{e\in E} \max_{i} [\nabla \widehat{f}_i(\mathbf{x})]_e\\
    & = 2 \max_{i} \max_{e\in E} [\nabla \widehat{f}_i(\mathbf{x})]_e \\
    & = 2 \max_{i} m_{f_i}
\end{align*}

 \section{Proof of Convergence for FedCG (\Cref{thm:convergence-Scheme2-Alg1})}

In order to prove the approximation guarantee of \Cref{thm:convergence-Scheme2-Alg1}, several steps are taken. First we show that our client sampling scheme is unbiased and furthermore we provide an upper bound for the variance.  
\subsection{Bounding the Variance}
\paragraph{Unbiased client selection.} Here we prove that our client selection is unbiased (\cref{lem:unbiased-sampling}). 
\begin{proof}[Proof of \Cref{lem:unbiased-sampling}]
     Recall that $\langle \Delta^{(t)},\nabla\widehat{F}(\bx^{(t)}) \rangle = \langle \frac{1}{|A^{(t)}|} \sum_{i\in A^{(t)}}\mb{v}_i^{(t)},\nabla\widehat{F}(\bx^{(t)}) \rangle$ where $A^{(t)}$ is a set of size $K$. 
    \allowdisplaybreaks
    \begin{align}
        &\E_{A^{(t)}}\left[\langle \Delta^{(t)},\nabla\widehat{F}(\bx^{(t)}) \rangle \right]  
        = \E_{A^{(t)}}\left[\langle \frac{1}{|A^{(t)}|}\sum_{i\in A^{(t)}} \mb{v}_i^{(t)},\nabla\widehat{F}(\bx^{(t)}) \rangle \right] \\
        &= \frac{1}{K}\E_{A^{(t)}}\left[\langle \sum_{i\in A^{(t)}} \mb{v}_i^{(t)},\nabla\widehat{F}(\bx^{(t)}) \rangle \right]
        = \frac{1}{K}\left[ \sum_{i\in A^{(t)}} \E_{A^{(t)}} \langle \mb{v}_i^{(t)},\nabla\widehat{F}(\bx^{(t)}) \rangle \right]\\
        & = \frac{1}{K}\left[ K \E_{A^{(t)}} \langle \mb{v}_i^{(t)},\nabla\widehat{F}(\bx^{(t)}) \rangle \right] \tag{for an arbitrary $i\in A^{(t)}$}\\
         & = \E_{A^{(t)}} \left[\langle \mb{v}_i^{(t)},\nabla\widehat{F}(\bx^{(t)}) \rangle  \right]
        =  \sum_{i=1}^N p_i \langle\mb{v}_i^{(t)}, \nabla\widehat{F}(\mb{x}^{(t)}) \rangle 
         =  \langle \sum_{i=1}^N p_i \mb{v}_i^{(t)}, \nabla\widehat{F}(\mb{x}^{(t)}) \rangle\label{eq:expectation_delta-scheme2}
    \end{align} 
    \end{proof}  

\paragraph{Bounding the variance.} Bounding the variance is at the core of our proof of convergence. Recall the definition of $\gamma_t$ in \Cref{ass:bounded-gradients} which plays a pivotal role in bounding the variance. 
    \begin{proof}[Proof of \Cref{lem:bounded-variance}]
    Recall $\Delta^{(t)}= \frac{1}{|A^{(t)}|}\sum_{i\in A^{(t)}}\mb{v}_i^{(t)}$ and $|A^{(t)}|=K$ and let $\bar{\mb{v}}^{(t)}=\sum_{i=1}^N p_i \mb{v}_i^{(t)}$. By \Cref{lem:unbiased-sampling} we have that $\E_{A^{(t)}}\left[\langle \Delta^{(t)},\nabla\widehat{F}(\bx^{(t)}) \rangle\right]=\langle  \sum_{i=1}^N p_i \mb{v}_i^{(t)}, \nabla\widehat{F}(\mb{x}^{(t)}) \rangle$. Furthermore, our client sampling samples $K$ clients independently and with replacement. Therefore,
    \allowdisplaybreaks
    \begin{align}
         \Var\left(\langle \Delta^{(t)},\nabla\widehat{F}(\bx^{(t)}) \rangle  \right)
        & = \frac{1}{K}\E_{A^{(t)}}\left[\left(\langle \mb{s},\nabla\widehat{F}(\bx^{(t)}) \rangle  - \langle  \bar{\mb{v}}^{(t)}, \nabla\widehat{F}(\mb{x}^{(t)}) \rangle\right)^2\right]
    \end{align}
    where $\mb{s}$ corresponds to an arbitrary client in $A^{(t)}$. Note that both terms $\langle \mb{s},\nabla\widehat{F}(\bx^{(t)}) \rangle$ and  $\langle  \bar{\mb{v}}^{(t)}, \nabla\widehat{F}(\mb{x}^{(t)}) \rangle$ are nonnegative. We provide an upper bound on the absolute value of $|\langle \mb{s},\nabla\widehat{F}(\bx^{(t)}) \rangle  - \langle  \bar{\mb{v}}^{(t)}, \nabla\widehat{F}(\mb{x}^{(t)}) \rangle|$ by considering two cases. \\
    \noindent{\textbf{Case 1.}} In the first case we have $\langle \mb{s},\nabla\widehat{F}(\bx^{(t)}) \rangle  \geq \langle  \bar{\mb{v}}^{(t)}, \nabla\widehat{F}(\mb{x}^{(t)}) \rangle$. Therefore,
    \begin{align}
        &|\langle \mb{s},\nabla\widehat{F}(\bx^{(t)}) \rangle  - \langle  \bar{\mb{v}}^{(t)}, \nabla\widehat{F}(\mb{x}^{(t)}) \rangle| \\
        & = \langle \mb{s},\nabla\widehat{F}(\bx^{(t)}) \rangle  - \langle  \bar{\mb{v}}^{(t)}, \nabla\widehat{F}(\mb{x}^{(t)}) \rangle 
         =\sum_{i=1}^N p_i \langle \mb{s},\nabla \widehat{f}_i(\mb{x}^{(t)})\rangle - \langle\sum_{i=1}^N p_i \mb{v}_i^{(t)},\nabla \widehat{F}(\mb{x}^{(t)})\rangle \\
        &\leq \sum_{i=1}^N p_i \langle \mb{v}_i^{(t)},\nabla \widehat{f}_i(\mb{x}^{(t)})\rangle - \langle\sum_{i=1}^N p_i \mb{v}_i^{(t)},\nabla \widehat{F}(\mb{x}^{(t)})\rangle
        = \sum_{i=1}^N p_i\langle \mb{v}_i^{(t)}, \nabla \widehat{f_i}(\mb{x}^{(t)}) - \nabla\widehat{F}(\mb{x}^{(t)}) \rangle \\
        &\leq \sum_{i=1}^N p_i r\gamma_t = r\gamma_t
    \end{align}
    where in the last inequality we used the fact that for each $\mb{v}_i^{(t)}$ we have $\mb{v}_i^{(t)}\in \{0,1\}^n$ and $|\mb{v}_i^{(t)}|_1=r$, and \Cref{ass:bounded-gradients}  together yield $|\langle \mb{v}_i, \nabla \widehat{f_i}(\mb{x}^{(t)}) - \nabla\widehat{F}(\mb{x}^{(t)}) \rangle|\leq r\gamma_t$.

    \noindent{\textbf{Case 2.}} In the first case we have $\langle \mb{s},\nabla\widehat{F}(\bx^{(t)}) \rangle \leq \langle  \bar{\mb{v}}^{(t)}, \nabla\widehat{F}(\mb{x}^{(t)}) \rangle$. Therefore,
    \begin{align}
        &|\langle \mb{s},\nabla\widehat{F}(\bx^{(t)}) \rangle  - \langle  \bar{\mb{v}}^{(t)}, \nabla\widehat{F}(\mb{x}^{(t)}) \rangle| 
        =  \langle  \bar{\mb{v}}^{(t)}, \nabla\widehat{F}(\mb{x}^{(t)}) \rangle - \langle \mb{s},\nabla\widehat{F}(\bx^{(t)}) \rangle\\
        & =\langle  \bar{\mb{v}}^{(t)}, \nabla\widehat{F}(\mb{x}^{(t)}) \rangle - \sum_{i=1}^N p_i\langle \mb{v}_i^{(t)}, \nabla \widehat{f_i}(\mb{x}^{(t)})\rangle - \langle \mb{s},\nabla\widehat{F}(\bx^{(t)}) \rangle + \sum_{i=1}^N p_i\langle \mb{v}_i^{(t)}, \nabla \widehat{f_i}(\mb{x}^{(t)}) \rangle\\
        &= \sum_{i=1}^N p_i \langle \mb{v}_i^{(t)},\nabla \widehat{F}(\mb{x}^{(t)})\rangle - \sum_{i=1}^N p_i\langle \mb{v}_i^{(t)}, \nabla \widehat{f_i}(\mb{x}^{(t)})\rangle + \left( \sum_{i=1}^N p_i\langle \mb{v}_i^{(t)}, \nabla \widehat{f_i}(\mb{x}^{(t)}) \rangle - \langle \mb{s},\nabla\widehat{F}(\bx^{(t)}) \rangle \right)\\
        &\leq \sum_{i=1}^N p_i \left|\langle \mb{v}_i^{(t)}, \nabla\widehat{F}(\mb{x}^{(t)})-\nabla\widehat{f_i}(\mb{x}^{(t)}) \rangle \right| +   \sum_{i=1}^N p_i \left( \langle \mb{v}_i^{(t)}, \nabla \widehat{f_i}(\mb{x}^{(t)}) \rangle - \langle \mb{s},\nabla\widehat{f}_i(\bx^{(t)}) \rangle \right) \\
        &\leq  r\gamma + \sum_{i=1}^N p_i \left( \langle \mb{v}_i^{(t)}, \nabla \widehat{f_i}(\mb{x}^{(t)}) \rangle - \langle \mb{s},\nabla\widehat{f}_i(\bx^{(t)}) \rangle \right) \label{eq:case2-variance-last}
    \end{align}
    where in the last inequality we used the same argument as the \textbf{Case 1}. Now it is left to provide an upper bound for $\langle \mb{v}_i^{(t)}, \nabla \widehat{f_i}(\mb{x}) \rangle - \langle \mb{s},\nabla\widehat{f}_i(\bx^{(t)}) \rangle$, which by definition is nonnegative.

    \begin{align}
        &\langle \mb{v}_i^{(t)}, \nabla \widehat{f_i}(\mb{x}^{(t)}) \rangle - \langle \mb{s},\nabla\widehat{f}_i(\bx^{(t)}) \rangle \\
        &= \langle \mb{v}_i^{(t)}, \nabla \widehat{f_i}(\mb{x}^{(t))} \rangle - \langle \mb{v}_i^{(t)}, \nabla \widehat{f_s}(\mb{x}^{(t)}) \rangle +\langle \mb{v}_i^{(t)}, \nabla \widehat{f_s}(\mb{x}^{(t)}) \rangle- \langle \mb{s},\nabla\widehat{f}_i(\bx^{(t)}) \rangle \\
        &= \langle \mb{v}_i^{(t)}, \nabla \widehat{f_i}(\mb{x}^{(t)}) - \nabla \widehat{f_s}(\mb{x}^{(t)})\rangle + \langle \mb{v}_i^{(t)}, \nabla \widehat{f_s}(\mb{x}^{(t)}) \rangle- \langle \mb{s},\nabla\widehat{f}_i(\bx^{(t)}) \rangle\\
        &\leq \langle \mb{v}_i^{(t)}, \nabla \widehat{f_i}(\mb{x}^{(t)}) - \nabla \widehat{f_s}(\mb{x}^{(t)})\rangle + \langle \mb{s}, \nabla \widehat{f_s}(\mb{x}^{(t)}) \rangle- \langle \mb{s},\nabla\widehat{f}_i(\bx^{(t)}) \rangle\\
        &  = \langle \mb{v}_i^{(t)}, \nabla \widehat{f_i}(\mb{x}^{(t)}) - \nabla \widehat{f_s}(\mb{x}^{(t)})\rangle + \langle \mb{s}, \nabla \widehat{f_s}(\mb{x}^{(t)})-\nabla\widehat{f}_i(\bx^{(t)}) \rangle
    \end{align}
    By \Cref{ass:bounded-gradients} we know that $|\nabla \widehat{f_s}(\mb{x}^{(t)}) - \nabla \widehat{F}(\mb{x}^{(t)})|_{\infty}\leq \gamma_t$ and $|\nabla \widehat{f_i}(\mb{x}^{(t)}) - \nabla \widehat{F}(\mb{x}^{(t)})|_{\infty}\leq \gamma_t$. Therefore, $|\nabla \widehat{f_s}(\mb{x}^{(t)}) - \nabla \widehat{f_i}(\mb{x}^{(t)})|_{\infty}\leq 2\gamma_t$. Knowing that both $\mb{v}_i^{(t)},\mb{s}\in \{0,1\}^n$ and $|\mb{v}_i^{(t)}|_1=|\mb{s}|_1=r$ yields
    \begin{align}
    &\langle \mb{v}_i^{(t)}, \nabla \widehat{f_i}(\mb{x}^{(t)}) \rangle - \langle \mb{s},\nabla\widehat{f}_i(\bx^{(t)}) \rangle\\
    &\leq \langle \mb{v}_i^{(t)}, \nabla \widehat{f_i}(\mb{x}^{(t)}) - \nabla \widehat{f_s}(\mb{x}^{(t)})\rangle + \langle \mb{s}, \nabla \widehat{f_s}(\mb{x}^{(t)})-\nabla\widehat{f}_i(\bx^{(t)}) \rangle \leq 4r\gamma_t
    \end{align}

    Putting together the above inequality and the inequality in equation~\ref{eq:case2-variance-last} we obtain the following upper bound for \textbf{Case 2}
    \begin{align*}
        |\langle \mb{s},\nabla\widehat{F}(\bx^{(t)}) \rangle  - \langle  \bar{\mb{v}}^{(t)}, \nabla\widehat{F}(\mb{x}^{(t)}) \rangle| \leq 5r\gamma_t.
    \end{align*}

    Provided the upper bounds in both cases we have
    \begin{align}
        \Var\left(\langle \Delta^{(t)},\nabla\widehat{F}(\bx^{(t)}) \rangle  \right)
        & = \frac{1}{K}\E_{A^{(t)}}\left[\left(\langle \mb{s},\nabla\widehat{F}(\bx^{(t)}) \rangle  - \langle  \bar{\mb{v}}^{(t)}, \nabla\widehat{F}(\mb{x}^{(t)}) \rangle\right)^2\right] \leq \frac{36r^2\gamma_t^2}{K}.
    \end{align}
\end{proof}

\subsection{Putting Everything Together: Proof of \Cref{thm:convergence-Scheme2-Alg1}}
We now are ready to prove \Cref{thm:convergence-Scheme2-Alg1}. 

\begin{proof}[Proof of \Cref{thm:convergence-Scheme2-Alg1}]
    Recall that $\bx^{(t+1)}\gets \bx^{(t)} + \eta\Delta^{(t)}$. Similar to the proof of \Cref{thm:full-participation} equation~\eqref{eq:delta} we derive that 
    \begin{align}
        \widehat{F}(\mb{x}^{(t+1)})
         &\geq \widehat{F}(\bx^{(t)}) + \eta\langle \Delta^{(t)},\nabla\widehat{F}(\bx^{(t)}) \rangle - \frac{\eta^2\cdot m_F r^2}{2}
    \end{align}
    
    Given \Cref{lem:unbiased-sampling,lem:bounded-variance} and using Chebyshev's inequality, over the random choices of $A^{(t)}$ and for every $\alpha>0$ we obtain
    \begin{align}
        \label{eq:Chebyshev}
        &\P\left[\left| \langle \Delta^{(t)},\nabla\widehat{F}(\bx^{(t)}) \rangle -  \langle \sum_{i=1}^N p_i \mb{v}_i^{(t)}, \nabla\widehat{F}(\mb{x}^{(t)}) \rangle \right| \leq \frac{6r\gamma_t/\sqrt{K}}{\alpha} \right] \\
        &\geq \P\left[\left| \langle \Delta^{(t)},\nabla\widehat{F}(\bx^{(t)}) \rangle -  \E_{A^{(t)}}\left[\langle \Delta^{(t)},\nabla\widehat{F}(\bx^{(t)}) \rangle \right] \right| \leq \frac{\sqrt{\Var\left( \langle \Delta^{(t)},\nabla\widehat{F}(\bx^{(t)}) \rangle  \right)}}{\alpha} \right] \geq 1-\alpha^2
    \end{align}
    
    Given this, with probability at least $1-\alpha^2$ and in the worst case it holds that 
    \begin{align}
        \widehat{F}(\mb{x}^{(t+1)})
         &\geq \widehat{F}(\bx^{(t)}) + \eta\langle \Delta^{(t)},\nabla\widehat{F}(\bx^{(t)}) \rangle - \frac{\eta^2\cdot m_F r^2}{2}\\
         &\geq \widehat{F}(\bx^{(t)}) + \eta\left( \langle \sum_{i=1}^N p_i\mb{v}_i^{(t)}, \nabla\widehat{F}(\mb{x}^{(t)}) \rangle - \frac{6r\gamma_t/\sqrt{K}}{\alpha}\right) -\frac{\eta^2\cdot m_F r^2}{2} \\
         &\geq \widehat{F}(\bx^{(t)}) + \eta\left(  \sum_{i=1}^N p_i\langle\mb{v}_i^{(t)}, \nabla\widehat{f_i}(\mb{x}^{(t)}) \rangle - r\gamma_t-\frac{6r\gamma_t/\sqrt{K}}{\alpha}\right)-\frac{\eta^2\cdot m_F r^2}{2} \tag{by Lemma~\ref{lem:bound-heterogeneity}} \\
        & \geq \widehat{F}(\bx^{(t)}) + \eta\left( \texttt{OPT} - \widehat{F}(\mb{x}^{(t)}) \right) - \eta \left( r\gamma_t+\frac{6r\gamma_t/\sqrt{K}}{\alpha}\right) -\frac{\eta^2\cdot m_F r^2}{2}\tag{by Corollary~\ref{cor:opt-F}}
    \end{align}
    Now, by changing signs and adding ${\texttt{OPT}}$ to both sides, we get
    \begin{align}
        {\texttt{OPT}}-\widehat{F}(\mb{x}^{(t+1)}) & \leq (1-\eta)\left( {\texttt{OPT}}-\widehat{F}(\mb{x}^{(t)})\right)+ \eta \left( r\gamma_t+\frac{6r\gamma_t/\sqrt{K}}{\alpha}\right) +\frac{\eta^2\cdot m_F r^2}{2}
    \end{align}
    Applying the same inequality inductively gives
    \begin{align}
        {\texttt{OPT}}-\widehat{F}(\mb{x}^{(t+1)}) &\leq (1-\eta)^{t+1}\left({\texttt{OPT}}-\widehat{F}(\mb{0}) \right)  + \eta\left( r\sum_{t=0}^{T-1}(1-\eta)^{T-t-1}\gamma_t+\frac{6r\sum_{t=0}^{T-1}(1-\eta)^{T-t-1}\gamma_t}{\alpha\sqrt{K}}\right)\\
        &+\frac{\sum_{t=0}^{T-1}(1-\eta)^{T-t-1}\eta^2\cdot m_F r^2}{2}\\
        &\leq (1-\eta)^{t+1} {\texttt{OPT}}  + \eta\left( r\sum_{t=0}^{T-1}\gamma_t+\frac{6r\sum_{t=0}^{T-1}\gamma_t}{\alpha\sqrt{K}}\right)+\frac{T\eta^2\cdot m_F r^2}{2} 
    \end{align}
    Taking the union bound over $T$ steps and $\alpha=\sqrt{\frac{\delta}{T}}$, with probability at least $1-T\cdot\alpha^2=1-\delta$, we get 
    \begin{align}
        \left( 1- (1-\eta)^T\right){\texttt{OPT}} \leq \widehat{F}(\mb{x}^{(T)}) + \eta\left( r\sum_{t=0}^{T-1}\gamma_t+\frac{6r\sum_{t=0}^{T-1}\gamma_t}{\sqrt{K\delta/T}}\right)+\frac{T\eta^2\cdot m_F r^2}{2}
    \end{align}
    Setting $\eta=1/T$ yields
    \begin{align}
        \left( 1- 1/\mathrm{e}\right)\texttt{OPT} &\leq \widehat{F}(\mb{x}^{(T)})+ \frac{1}{T}\left( r\sum_{t=0}^{T-1}\gamma_t+\frac{6r\sum_{t=0}^{T-1}\gamma_t}{\sqrt{K\delta/T}}\right)+\frac{m_F r^2}{2T}\\
        &=\widehat{F}(\mb{x}^{(T)})+\frac{rD}{T}+\frac{6rD}{\sqrt{KT\delta}}+\frac{m_F r^2}{2T}
    \end{align}
    for $\sum_{t=0}^{T-1}\gamma_t=D$. Note that for any $\varepsilon\geq 0$ by setting $T=\max\{\frac{3rD}{\varepsilon}, \frac{3m_F r^2}{2\varepsilon}, \frac{324r^2D^2}{K\delta\varepsilon}\}$ we obtain
    \begin{align}
        \left(1- \frac{1}{\mathrm{e}}\right)\texttt{OPT} \leq \widehat{F}(\mb{x}^{(T)}) + \varepsilon
    \end{align}
\end{proof}

 \section{Proof of Convergence for \texttt{\textsc{Fed}CG+} (\Cref{thm:convergence-Alg2-Scheme2})}

\label{sec:convergence-FEDCG+}
\subsection{Bounding the variance}

\paragraph{Unbiased client selection.} Here we prove that our client selection is unbiased (\Cref{lem:unbiased-sampling-Alg2}). The proof is almost identical to the one for \Cref{lem:unbiased-sampling}. We present a proof here for the sake of completeness.

\begin{proof}[Proof of \Cref{lem:unbiased-sampling-Alg2}] 
    
    Recall that $\langle \widetilde{\Delta}^{(t+\tau)},\nabla\widehat{F}(\bx^{(t)}) \rangle = \langle \frac{1}{|A^{(t)}|} \sum_{i\in A^{(t)}}\widetilde{\Delta}_i^{(t+\tau)},\nabla\widehat{F}(\bx^{(t)}) \rangle$ where $A^{(t)}$ is a set of size $K$. 
    \allowdisplaybreaks
    \begin{align}
        &\E_{A^{(t)}}\left[\langle \widetilde{\Delta}^{(t+\tau)},\nabla\widehat{F}(\bx^{(t)}) \rangle \right]  
        = \E_{A^{(t)}}\left[\langle \frac{1}{|A^{(t)}|}\sum_{i\in A^{(t)}} \widetilde{\Delta}_i^{(t+\tau)},\nabla\widehat{F}(\bx^{(t)}) \rangle \right] \\ \tag{$|A^{(t)}|=K$}\\
         &= \frac{1}{K}\E_{A^{(t)}}\left[\langle \sum_{i\in A^{(t)}} \widetilde{\Delta}_i^{(t+\tau)},\nabla\widehat{F}(\bx^{(t)}) \rangle \right] = \frac{1}{K}\left[ \sum_{i\in A^{(t)}} \E_{A^{(t)}} \langle \widetilde{\Delta}_i^{(t+\tau)},\nabla\widehat{F}(\bx^{(t)}) \rangle \right]\\
        & = \frac{1}{K}\left[ K \E_{A^{(t)}} \langle \widetilde{\Delta}_i^{(t+\tau)},\nabla\widehat{F}(\bx^{(t)}) \rangle \right] \tag{for an arbitrary $i\in A^{(t)}$}\\
         & = \E_{A^{(t)}} \langle \mb{v}_i^{(t)},\nabla\widehat{F}(\bx^{(t)}) \rangle 
          =  \sum_{i=1}^N p_i \langle \widetilde{\Delta}_i^{(t+\tau)}, \nabla\widehat{F}(\mb{x}^{(t)}) \rangle 
          =  \langle \sum_{i=1}^N p_i \widetilde{\Delta}_i^{(t+\tau)}, \nabla\widehat{F}(\mb{x}^{(t)}) \rangle 
    \end{align} 
\end{proof}

\paragraph{Bounding the variance.} Bounding the variance is at the core of our proof of convergence. \Cref{ass:bounded-gradients} plays a pivotal role in bounding the variance. The main difficulty and difference between this proof and the proof of \Cref{lem:bounded-variance} is that firstly the variance is effected by the divergence caused by local steps, and secondly $\widetilde{\Delta}_{i}^{(t+\tau)}$ may not be integral vectors and it could potentially have $O(n)$ nonzero entries. 

In order to handle the divergence caused by local steps we assume Lipschitzness. First, let us derive useful inequalities using the Lipschitzness condition.\\

    \noindent{\textbf{Consequences of Lipschitzness}.} Consider $t$-th iteration and recall the definition of $L_t$ i.e., $L_t$ is such that for all $i\in[N]$ and $j\in[\tau]$ it holds
    \begin{align*}
        \left \| \nabla \widehat{f_i}(\bx_i^{(t,j)})  - \nabla \widehat{f_i}(\bx^{(t)}) \right \| \leq L_t\| \mb{x}_i^{(t,j)} - \mb{x}^{(t)} \| 
    \end{align*}
    
    where $\mb{x}_i^{(t)}=\mb{x}^{(t)}$ and $\mb{x}_i^{(t,j)}$ are the local models for client $i$ at time step $t$ and $t+j$, respectively. We first bound the divergence of $\mb{x}_i^{(t,\tau)}$ from $\mb{x}^{(t)}$.
        \begin{align}
            \left\| \mb{x}_i^{(t,\tau)} - \mb{x}^{(t)} \right\| = \left\| \frac{1}{\tau} \sum_{j=0}^{\tau-1} \widetilde{\mb{v}}_i^{(t,j)} \right\| \leq \frac{1}{\tau}  \sum_{j=0}^{\tau-1} \left\| \widetilde{\mb{v}}_i^{(t,j)} \right\| \leq \sqrt{r}
        \end{align}
        Since $\widetilde{\mb{v}}_i^{(t,j)}\in\mc{P}$, the same upper bound holds for every $0\leq j\leq\tau$; $\| \mb{x}_i^{(t,j)} - \mb{x}^{(t)} \|\leq  \sqrt{r}$.
         Assuming $L_t$, for all $0\leq  j\leq \tau$,
        \begin{align}
        \label{eq:Lipschitz}
            \left \| \nabla \widehat{f_i}(\bx_i^{(t,j)})  - \nabla \widehat{f_i}(\bx^{(t)}) \right \| \leq L\| \mb{x}_i^{(t,j)} - \mb{x}^{(t)} \| \leq L_t \sqrt{r}
        \end{align}
        Note $\zeta_i^{(t,j)}$ are sampled according to $\bx_i^{(t,j)}$, and $\nabla\widetilde{f_i}(\mb{x}_i^{(t,j)},\zeta_{i}^{(t,j)})$ estimates $\nabla\widehat{f_i}(\mb{x}_i^{(t,j)})$ within factor $\sigma$ i.e., $\|\nabla\widehat{f_i}(\mb{x}_i^{(t,j)}) -\nabla\widetilde{f_i}(\mb{x}_i^{(t,j)},\zeta_{i}^{(t,j)})\|\leq \sigma$. Hence, given this estimation and equation \eqref{eq:Lipschitz}, for every $0\leq j\leq\tau$ it holds that
        \begin{align}
        \label{eq:Lipschitz-estimate}
            \left \| \nabla\widetilde{f_i}(\mb{x}_i^{(t,j)},\zeta_{i}^{(t,j)})  - \nabla \widehat{f_i}(\bx^{(t)}) \right \|  \leq \sigma+L_t  \sqrt{r}
        \end{align}

\begin{proof}[Proof of \Cref{lem:bounded-variance-Alg2}]
    By \Cref{lem:unbiased-sampling-Alg2} we know: 
    \begin{align*}
        \E_{A^{(t)}}\left[\langle \widetilde{\Delta}^{(t+\tau)},\nabla\widehat{F}(\bx^{(t)}) \rangle \right] =  \langle \sum_{i=1}^N p_i \widetilde{\Delta}_i^{(t+\tau)}, \nabla\widehat{F}(\mb{x}^{(t)}) \rangle
    \end{align*}
     where $\widetilde{\Delta}^{(t+\tau)}= \frac{1}{|A^{(t)}|}\sum_{i\in A^{(t)}}\widetilde{\Delta}_i^{(t+\tau)}$.  Define $\overline{\Delta}^{(t+\tau)}=\sum_{i=1}^N p_i \widetilde{\Delta}_i^{(t+\tau)}$. Each client is selected to be in set $A^{(t)}$ independently and with replacement. Therefore,
    \begin{align}
        &\Var(\langle \widetilde{\Delta}^{(t+\tau)},\nabla\widehat{F}(\bx^{(t)}) \rangle)=\frac{1}{K}\E_{A^{(t)}}\left[\left(\langle \widetilde{\Delta}_{s}^{(t+\tau)},\nabla\widehat{F}(\bx^{(t)}) \rangle  -  \langle \overline{\Delta}^{(t+\tau)}, \nabla\widehat{F}(\mb{x}^{(t)}) \rangle\right)^2\right]
    \end{align}
    where $\widetilde{\Delta}_{s}^{(t+\tau)}=\bx_s^{(t,\tau)}-\bx_s^{(t,0)}$ corresponds to an arbitrary client in $A^{(t)}$. Note that both terms $\langle \widetilde{\Delta}_{s}^{(t+\tau)},\nabla\widehat{F}(\bx^{(t)}) \rangle$ and  $\langle \overline{\Delta}^{(t+\tau)}, \nabla\widehat{F}(\mb{x}^{(t)}) \rangle$ are nonnegative. We provide an upper bound on the absolute value of $\left| \langle \widetilde{\Delta}_{s}^{(t+\tau)},\nabla\widehat{F}(\bx^{(t)}) \rangle  -  \langle \overline{\Delta}^{(t+\tau)}, \nabla\widehat{F}(\mb{x}^{(t)}) \rangle\right|$ by considering two cases. 
    
    \noindent{\textbf{Case 1.}} In the first case we have $\langle \widetilde{\Delta}_{s}^{(t+\tau)},\nabla\widehat{F}(\bx^{(t)}) \rangle  \geq  \langle \overline{\Delta}^{(t+\tau)}, \nabla\widehat{F}(\mb{x}^{(t)}) \rangle$. Therefore,
    \begin{align}
        & \left|\langle \widetilde{\Delta}_{s}^{(t+\tau)},\nabla\widehat{F}(\bx^{(t)}) \rangle  -  \langle \overline{\Delta}^{(t+\tau)}, \nabla\widehat{F}(\mb{x}^{(t)}) \rangle \right| \\
       & = \langle \widetilde{\Delta}_{s}^{(t+\tau)},\nabla\widehat{F}(\bx^{(t)}) \rangle  -  \langle \overline{\Delta}^{(t+\tau)}, \nabla\widehat{F}(\mb{x}^{(t)}) \rangle \\
        & =\sum_{i=1}^N p_i \langle \widetilde{\Delta}_{s}^{(t+\tau)},\nabla \widehat{f}_i(\mb{x}^{(t)})\rangle - \langle\sum_{i=1}^N p_i \widetilde{\Delta}_i^{(t+\tau)},\nabla \widehat{F}(\mb{x}^{(t)})\rangle 
        \end{align}

        Observe that $\widetilde{\Delta}_{s}^{(t+\tau)}=\frac{1}{\tau}\sum_{j=0}^{\tau-1} \widetilde{\mb{v}}_s^{(t,j)}$ with $\widetilde{\mb{v}}_s^{(t,j)}=\argmax_{\mb{v}\in P(M)} \langle \mb{v},\nabla\widetilde{f}_s(\bx_s^{(t,j)},\zeta_{s}^{(t,j)})\rangle$. 
        
        Therefore, using the Lipschitzness condition we get the following. In what follows let $\mb{d}_1=L_t\sqrt{r}\mb{1}$ and $\mb{d}_2=\sigma\mb{1}$ be  vectors of length $n$ where every components are $L_t\sqrt{r}$ and $\sigma$, respectively.
        \allowdisplaybreaks
        \begin{align}
            &\langle \widetilde{\Delta}_{s}^{(t+\tau)},\nabla \widehat{f}_i(\mb{x}_i^{(t)})\rangle = \frac{1}{\tau}\sum_{j=0}^{\tau-1} \langle \widetilde{\mb{v}}_s^{(t,j)},\nabla \widehat{f}_i(\mb{x}_i^{(t)})\rangle \\
            &\leq \frac{1}{\tau}\sum_{j=0}^{\tau-1} \langle \widetilde{\mb{v}}_s^{(t,j)},\nabla\widetilde{f}_i(\mb{x}_i^{(t,j)})+\mb{d}_1+\mb{d}_2\rangle \tag{by equation~\eqref{eq:Lipschitz-estimate}}\\
            & \leq \frac{1}{\tau}\sum_{j=0}^{\tau-1} \langle \widetilde{\mb{v}}_i^{(t,j)},\nabla\widetilde{f}_i(\mb{x}_i^{(t,j)})+\mb{d}_1+\mb{d}_2\rangle\tag{by definition of $\widetilde{\mb{v}}_i^{(t,j)}$}\\
            & \leq \frac{1}{\tau}\sum_{j=0}^{\tau-1} \langle \widetilde{\mb{v}}_i^{(t,j)},\nabla\widehat{f}_i(\bx_i^{(t)})\rangle + \frac{1}{\tau}\sum_{j=0}^{\tau-1} \langle \widetilde{\mb{v}}_i^{(t,j)},\mb{d}_1+\mb{d}_2\rangle\\
            &\leq \frac{1}{\tau}\sum_{j=0}^{\tau-1} \langle \widetilde{\mb{v}}_i^{(t,j)},\nabla\widehat{f}_i(\bx_i^{(t)})\rangle + (\sigma r+L_tr^{1.5}) \\
            & = \langle \widetilde{\Delta}_{i}^{(t+\tau)},\nabla\widehat{f}_i(\bx_i^{(t)})\rangle + (\sigma r+L_tr^{1.5})  \label{eq:interlacing}\\
            & = \langle \widetilde{\Delta}_{i}^{(t+\tau)},\nabla\widehat{f}_i(\bx^{(t)})\rangle + (\sigma r+L_tr^{1.5})  \tag{$\bx_i^{(t)}=\bx^{(t)}$}
        \end{align}
        Therefore, for \textbf{Case 1} we get
        \begin{align}
        &\sum_{i=1}^N p_i \langle \widetilde{\Delta}_{s}^{(t+\tau)},\nabla \widehat{f}_i(\mb{x}^{(t)})\rangle - \langle\sum_{i=1}^N p_i \widetilde{\Delta}_i^{(t+\tau)},\nabla \widehat{F}(\mb{x}^{(t)})\rangle \\
        & \leq \sum_{i=1}^N p_i\langle \widetilde{\Delta}_{i}^{(t+\tau)},\nabla\widehat{f}_i(\bx^{(t)})\rangle  - \langle\sum_{i=1}^N p_i \widetilde{\Delta}_i^{(t+\tau)},\nabla \widehat{F}(\mb{x}^{(t)})\rangle + (\sigma r+L_tr^{1.5}) \\
        &= \sum_{i=1}^N p_i\langle \widetilde{\Delta}_{i}^{(t+\tau)}, \nabla \widehat{f_i}(\mb{x}) - \nabla\widehat{F}(\mb{x}) \rangle + (\sigma r+L_tr^{1.5}) \\
        &= \sum_{i=1}^N p_i \left(\frac{1}{\tau}\sum_{j=0}^{\tau-1} \langle \widetilde{\mb{v}}_i^{(t,j)}, \nabla \widehat{f_i}(\mb{x}) - \nabla\widehat{F}(\mb{x}) \rangle \right) + (\sigma r+L_tr^{1.5}) \\
        & \leq \sum_{i=1}^N p_i \left(\frac{1}{\tau}\sum_{j=0}^{\tau-1} r\gamma_t \right) + (\sigma r+L_tr^{1.5})  \tag{\Cref{ass:bounded-gradients}}\\
        &\leq r\gamma_t + (\sigma r+L_tr^{1.5}) 
    \end{align}
    Note that in the above we used the fact that for each $\widetilde{\mb{v}}_i^{(t,j)}$ we have $\widetilde{\mb{v}}_i^{(t,j)}\in \{0,1\}^n$ and $|\widetilde{\mb{v}}_i^{(t,j)}|_1=r$.

    \noindent{\textbf{Case 2}.} In this case we have $\langle \widetilde{\Delta}_{s}^{(t+\tau)},\nabla\widehat{F}(\bx^{(t)}) \rangle  \leq  \langle \overline{\Delta}^{(t+\tau)}, \nabla\widehat{F}(\mb{x}^{(t)}) \rangle$. Therefore,
    \begin{align}
        &|\langle \widetilde{\Delta}_{s}^{(t+\tau)},\nabla\widehat{F}(\bx^{(t)}) \rangle  -  \langle \overline{\Delta}^{(t+\tau)}, \nabla\widehat{F}(\mb{x}^{(t)}) \rangle| \\
        & =  \langle \overline{\Delta}^{(t+\tau)}, \nabla\widehat{F}(\mb{x}^{(t)}) \rangle - \langle \widetilde{\Delta}_{s}^{(t+\tau)},\nabla\widehat{F}(\bx^{(t)}) \rangle\\
        & =\langle \overline{\Delta}^{(t+\tau)}, \nabla\widehat{F}(\mb{x}^{(t)}) \rangle - \sum_{i=1}^N p_i\langle \widetilde{\Delta}_{i}^{(t+\tau)},\nabla\widehat{f}_i(\bx^{(t)})\rangle - \langle \widetilde{\Delta}_{s}^{(t+\tau)},\nabla\widehat{F}(\bx^{(t)}) \rangle \\
        &+ \sum_{i=1}^N p_i\langle \widetilde{\Delta}_{i}^{(t+\tau)},\nabla\widehat{f}_i(\bx^{(t)})\rangle\\
        &\leq \sum_{i=1}^N p_i \left|\langle \widetilde{\Delta}_{i}^{(t+\tau)}, \nabla\widehat{F}(\mb{x}^{(t)})-\nabla\widehat{f_i}(\mb{x}^{(t)}) \rangle \right| \\
        &+   \sum_{i=1}^N p_i \left( \langle \widetilde{\Delta}_{i}^{(t+\tau)}, \nabla \widehat{f_i}(\mb{x}^{(t)}) \rangle - \langle \widetilde{\Delta}_{s}^{(t+\tau)},\nabla\widehat{f}_i(\bx^{(t)}) \rangle \right) \\
        & \leq \sum_{i=1}^N \frac{p_i}{\tau} \sum_{j=0}^{\tau-1}\left|\langle \widetilde{\mb{v}}_{i}^{(t,j)}, \nabla\widehat{F}(\mb{x}^{(t)})-\nabla\widehat{f_i}(\mb{x}^{(t)}) \rangle \right| \\
        &+   \sum_{i=1}^N p_i \left( \langle \widetilde{\Delta}_{i}^{(t+\tau)}, \nabla \widehat{f_i}(\mb{x}^{(t)}) \rangle - \langle \widetilde{\Delta}_{s}^{(t+\tau)},\nabla\widehat{f}_i(\bx^{(t)}) \rangle \right) \\
        &\leq  r\gamma_t + \sum_{i=1}^N p_i \left( \langle \widetilde{\Delta}_{i}^{(t+\tau)}, \nabla \widehat{f_i}(\mb{x}^{(t)}) \rangle - \langle \widetilde{\Delta}_{s}^{(t+\tau)},\nabla\widehat{f}_i(\bx^{(t)}) \rangle \right) \label{eq:case2-variance-last-local}
    \end{align}
    where in the last inequality we used the same argument by noting $\widetilde{\mb{v}}_i^{(t,j)}\in \{0,1\}^n$ and $|\widetilde{\mb{v}}_i^{(t,j)}|_1=r$. Now it is left to provide an upper bound for $\left|\langle \widetilde{\Delta}_{i}^{(t+\tau)}, \nabla \widehat{f_i}(\mb{x}^{(t)}) \rangle - \langle \widetilde{\Delta}_{s}^{(t+\tau)},\nabla\widehat{f}_i(\bx^{(t)}) \rangle \right|$.

    \begin{align}
        & \left| \langle \widetilde{\Delta}_{i}^{(t+\tau)}, \nabla \widehat{f_i}(\mb{x}^{(t)}) \rangle - \langle \widetilde{\Delta}_{s}^{(t+\tau)},\nabla\widehat{f}_i(\bx^{(t)}) \rangle \right| \\
        &= \left|\langle \widetilde{\Delta}_{i}^{(t+\tau)}, \nabla \widehat{f_i}(\mb{x}^{(t)}) \rangle- \langle \widetilde{\Delta}_{i}^{(t+\tau)}, \nabla \widehat{f_s}(\mb{x}^{(t)}) \rangle +\langle \widetilde{\Delta}_{i}^{(t+\tau)}, \nabla \widehat{f_s}(\mb{x}^{(t)}) \rangle - \langle \widetilde{\Delta}_{s}^{(t+\tau)},\nabla\widehat{f}_i(\bx^{(t)}) \rangle \right|\\
        &= \left|\langle \widetilde{\Delta}_{i}^{(t+\tau)}, \nabla \widehat{f_i}(\mb{x}^{(t)}) -  \nabla \widehat{f_s}(\mb{x}^{(t)}) \rangle +\langle \widetilde{\Delta}_{i}^{(t+\tau)}, \nabla \widehat{f_s}(\mb{x}^{(t)}) \rangle - \langle \widetilde{\Delta}_{s}^{(t+\tau)},\nabla\widehat{f}_i(\bx^{(t)}) \rangle \right|\\
        &\leq \left|\langle \widetilde{\Delta}_{i}^{(t+\tau)}, \nabla \widehat{f_i}(\mb{x}^{(t)}) -  \nabla \widehat{f_s}(\mb{x}^{(t)}) \rangle \right|+ \left| \langle \widetilde{\Delta}_{i}^{(t+\tau)}, \nabla \widehat{f_s}(\mb{x}^{(t)}) \rangle - \langle \widetilde{\Delta}_{s}^{(t+\tau)},\nabla\widehat{f}_i(\bx^{(t)}) \rangle \right|\label{eq:case2}
    \end{align}
    By \Cref{ass:bounded-gradients} we know that $|\nabla \widehat{f_s}(\mb{x}^{(t)}) - \nabla \widehat{F}(\mb{x}^{(t)})|_{\infty}\leq \gamma_t$ and $|\nabla \widehat{f_i}(\mb{x}^{(t)}) - \nabla \widehat{F}(\mb{x}^{(t)})|_{\infty}\leq \gamma_t$. Therefore, $|\nabla \widehat{f_s}(\mb{x}^{(t)}) - \nabla \widehat{f_i}(\mb{x}^{(t)})|_{\infty}\leq 2\gamma_t$. This is used to upper bound the first term in \eqref{eq:case2}. In order to bound the second term in \eqref{eq:case2} we appeal to equation \eqref{eq:interlacing} in \textbf{Case 1}. (Note that there are two cases two consider here because of the absolute value, however the argument is similar and we argue about one case.) 
    Let $\mb{1}$ be the all one vector of length $n$, then
    \begin{align}
        &\left|\langle \widetilde{\Delta}_{i}^{(t+\tau)}, \nabla \widehat{f_i}(\mb{x}^{(t)}) -  \nabla \widehat{f_s}(\mb{x}^{(t)}) \rangle \right| + \left| \langle \widetilde{\Delta}_{i}^{(t+\tau)}, \nabla \widehat{f_s}(\mb{x}^{(t)}) \rangle - \langle \widetilde{\Delta}_{s}^{(t+\tau)},\nabla\widehat{f}_i(\bx^{(t)}) \rangle \right|\\
        &\leq \frac{1}{\tau}\sum_{j=0}^{\tau-1}\left|\langle \widetilde{\mb{v}}_{i}^{(t,j)}, \nabla \widehat{f_i}(\mb{x}^{(t)}) -  \nabla \widehat{f_s}(\mb{x}^{(t)}) \rangle \right| + \left| \langle \widetilde{\Delta}_{s}^{(t+\tau)}, \nabla \widehat{f_s}(\mb{x}^{(t)}) \rangle - \langle \widetilde{\Delta}_{s}^{(t+\tau)},\nabla\widehat{f}_i(\bx^{(t)}) \rangle \right| \\
        &+ (\sigma r+L_tr^{1.5}) \\
        &\leq \frac{1}{\tau}\sum_{j=0}^{\tau-1}\left|\langle \widetilde{\mb{v}}_{i}^{(t,j)}, \nabla \widehat{f_i}(\mb{x}^{(t)}) -  \nabla \widehat{f_s}(\mb{x}^{(t)}) \rangle \right|+  \frac{1}{\tau}\sum_{j=0}^{\tau-1} \left|\langle \widetilde{\mb{v}}_{s}^{(t,j)}, \nabla \widehat{f_s}(\mb{x}^{(t)}) -\nabla\widehat{f}_i(\bx^{(t)})  \rangle \right|\\
        &+ (\sigma r+L_tr^{1.5}) \\
        &\leq \frac{1}{\tau}\sum_{j=0}^{\tau-1}\langle \widetilde{\mb{v}}_{i}^{(t,j)}, 2\gamma_t\mb{1} \rangle +  \frac{1}{\tau}\sum_{j=0}^{\tau-1} \langle \widetilde{\mb{v}}_{s}^{(t,j)}, 2\gamma_t\mb{1} \rangle + (\sigma r+L_tr^{1.5}) \\
        &\leq 4r\gamma_t + (\sigma r+L_tr^{1.5}) 
    \end{align}
    where in the last inequality we used the fact that each $\widetilde{\mb{v}}_{i}^{(t,j)},\widetilde{\mb{v}}_{s}^{(t,j)}\in \{0,1\}^n$ and $|\widetilde{\mb{v}}_{i}^{(t,j)}|_1=|\widetilde{\mb{v}}_{s}^{(t,j)}|_1=r$.

    Putting together the above inequality and the inequality in equation~\ref{eq:case2-variance-last-local} we obtain the following upper bound for \textbf{Case 2}
    \begin{align*}
        \left| \langle \widetilde{\Delta}_{s}^{(t+\tau)},\nabla\widehat{F}(\bx^{(t)}) \rangle  -  \langle \overline{\Delta}^{(t+\tau)}, \nabla\widehat{F}(\mb{x}^{(t)}) \rangle\right| \leq 5r\gamma_t + (\sigma r+L_tr^{1.5}).
    \end{align*}
    Finally, we are at the place where we can present our upper bound for the variance
    \begin{align*}
        \Var(\langle \widetilde{\Delta}^{(t+\tau)},\nabla\widehat{F}(\bx^{(t)}) \rangle)
        &\leq \frac{1}{K}\E_{A^{(t)}}\left[\left( 6r\gamma + 2(\sigma r+Lr^{1.5}) \right)^2\right] \\
        &= \frac{1}{K}\left( 6r\gamma_t + 2(\sigma r+L_tr^{1.5}) \right)^2.\qedhere
    \end{align*}
\end{proof}

\subsection{Proof of \Cref{thm:convergence-Alg2-Scheme2}}
\paragraph{Convergence of \texttt{\textsc{Fed}CG+}.} While at the high level this proof is similar to the previous convergence proofs in Theorems~3.2, 3.5, it still requires taking care of the error caused due to the local steps. (This is an additive error term that cannot be controlled by sampling more clients at each round.) 
\begin{proof}[Proof of \Cref{thm:convergence-Alg2-Scheme2}]
    For any $t\in\mc{I}_\tau$ we analyze the difference between $\widehat{F}(\mb{x}^{(t)})$ and $\widehat{F}(\mb{x}^{(t+\tau)})$. Recall that $\bx^{(t+\tau)}\gets \bx^{(t)} + \eta\widetilde{\Delta}^{(t+\tau)}$ where $\eta$ is the server's learning rate and $\widetilde{\Delta}^{(t+\tau)}=\frac{1}{|A^{(t)}|}\sum_{i\in A^{(t)}} \widetilde{\Delta}_i^{(t+\tau)} = \frac{1}{\tau|A^{(t)}|}\sum_{i\in A^{(t)}} \sum_{j=0}^{\tau-1}\widetilde{\mb{v}}_i^{(t,j)}$. Similar to the proof of Theorem~3.2  equation~\eqref{eq:delta} we derive that 
    \begin{align}
        \widehat{F}(\mb{x}^{(t+\tau)})- \widehat{F}(\bx^{(t)})  
        & \geq \eta\langle \widetilde{\Delta}^{(t+\tau)}, \nabla\widehat{F}(\bx^{(t)} \rangle -\frac{\eta^2\cdot m_F r^2}{2}\label{eq:Taylor-expansion-Alg2}
    \end{align}

    Let $\overline{\Delta}^{(t+\tau)}= \sum_{i=1}^N p_i \widetilde{\Delta}_i^{(t+\tau)}$ and by \Cref{lem:unbiased-sampling-Alg2} we have that $\E_{A^{(t)}}\left[\eta\langle \widetilde{\Delta}^{(t+\tau)},\nabla\widehat{F}(\bx^{(t)}) \rangle \right] =  \eta\langle \overline{\Delta}^{(t+\tau)}, \nabla\widehat{F}(\mb{x}^{(t)}) \rangle$. Given Lemma~4.2 and using Chebyshev's inequality, over the random choices of $A^{(t)}$ and for every $\alpha>0$ and $\chi^2=\frac{1}{K}\left( 6r\gamma_t + 2(\sigma r+L_tr^{1.5}) \right)^2$ we obtain
    \begin{align}
        \label{eq:Chebyshev-Alg2}
        &\P\left[\left| \langle \widetilde{\Delta}^{(t+\tau)},\nabla\widehat{F}(\bx^{(t)}) \rangle -  \langle \overline{\Delta}^{(t+\tau)}, \nabla\widehat{F}(\mb{x}^{(t)}) \rangle \right| \leq \frac{\chi}{\alpha} \right] \geq 1-\alpha^2
    \end{align}
    
    Given this and \eqref{eq:Taylor-expansion-Alg2}, with probability at least $1-\alpha^2$ and in the worst case it holds that 
    \begin{align}
        \widehat{F}(\mb{x}^{(t+\tau)})
         &\geq \widehat{F}(\bx^{(t)}) + \eta\langle  \widetilde{\Delta}^{(t+\tau)}, \nabla\widehat{F}(\bx^{(t)} \rangle-\frac{\eta^2\cdot m_F r^2}{2}\\
         &\geq \widehat{F}(\bx^{(t)}) + \eta\left( \langle \overline{\Delta}^{(t+\tau)}, \nabla\widehat{F}(\mb{x}^{(t)}) \rangle - \frac{\chi}{\alpha}\right) -\frac{\eta^2\cdot m_F r^2}{2}
    \end{align}
    \begin{claim}
        \label{claim:tildeDelta-barDelta}
        Let $\gamma_t$ be as in Assumption~\ref{ass:bounded-gradients}, we have $\langle \overline{\Delta}^{(t+\tau)}, \nabla\widehat{F}(\mb{x}^{(t)}) \rangle \geq  \sum_{i=1}^N p_i\langle  \widetilde{\Delta}_i^{(t+\tau)}, \nabla\widehat{f}_i(\mb{x}^{(t)}) \rangle - r\gamma_t $.
    \end{claim}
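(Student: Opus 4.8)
\textbf{Proof proposal for Claim~\ref{claim:tildeDelta-barDelta}.} The plan is to mirror the argument used for Lemma~\ref{lem:bound-heterogeneity}, with the only new twist being that $\widetilde{\Delta}_i^{(t+\tau)}$ is no longer an integral base indicator but a convex combination of such vectors. First I would recall from the proof of Lemma~\ref{lem:full-participation} that $\nabla\widehat{F}(\mb{x}^{(t)})=\sum_{i=1}^N p_i\nabla\widehat{f}_i(\mb{x}^{(t)})$, so by bilinearity
\[
\langle \overline{\Delta}^{(t+\tau)}, \nabla\widehat{F}(\mb{x}^{(t)}) \rangle
= \Big\langle \sum_{i=1}^N p_i \widetilde{\Delta}_i^{(t+\tau)}, \nabla\widehat{F}(\mb{x}^{(t)}) \Big\rangle
= \sum_{i=1}^N p_i \langle \widetilde{\Delta}_i^{(t+\tau)}, \nabla\widehat{F}(\mb{x}^{(t)}) \rangle .
\]
Subtracting this from the target sum $\sum_i p_i\langle \widetilde{\Delta}_i^{(t+\tau)},\nabla\widehat{f}_i(\mb{x}^{(t)})\rangle$ collapses everything into a single per-client inner product, $\sum_{i=1}^N p_i\langle \widetilde{\Delta}_i^{(t+\tau)},\, \nabla\widehat{f}_i(\mb{x}^{(t)})-\nabla\widehat{F}(\mb{x}^{(t)})\rangle$, so it suffices to bound each term by $r\gamma_t$ in absolute value.

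For the per-client bound I would apply Hölder's inequality: $\big|\langle \widetilde{\Delta}_i^{(t+\tau)},\, \nabla\widehat{f}_i(\mb{x}^{(t)})-\nabla\widehat{F}(\mb{x}^{(t)})\rangle\big| \le \|\widetilde{\Delta}_i^{(t+\tau)}\|_1\cdot |\nabla\widehat{f}_i(\mb{x}^{(t)})-\nabla\widehat{F}(\mb{x}^{(t)})|_\infty$. Assumption~\ref{ass:bounded-gradients} handles the second factor, giving $|\nabla\widehat{f}_i(\mb{x}^{(t)})-\nabla\widehat{F}(\mb{x}^{(t)})|_\infty \le \gamma_t$. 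For the first factor, recall $\widetilde{\Delta}_i^{(t+\tau)}=\frac{1}{\tau}\sum_{j=0}^{\tau-1}\widetilde{\mb{v}}_i^{(t,j)}$ where each $\widetilde{\mb{v}}_i^{(t,j)}\in\{0,1\}^n$ is a base indicator with $\|\widetilde{\mb{v}}_i^{(t,j)}\|_1=r$; since these are nonnegative vectors the $\ell_1$ norm is additive, so $\|\widetilde{\Delta}_i^{(t+\tau)}\|_1=\frac{1}{\tau}\sum_{j=0}^{\tau-1}\|\widetilde{\mb{v}}_i^{(t,j)}\|_1=r$. Hence each per-client term is at most $r\gamma_t$.

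Combining, $\sum_{i=1}^N p_i\langle \widetilde{\Delta}_i^{(t+\tau)},\nabla\widehat{f}_i(\mb{x}^{(t)})\rangle-\langle \overline{\Delta}^{(t+\tau)},\nabla\widehat{F}(\mb{x}^{(t)})\rangle \le \sum_{i=1}^N p_i\, r\gamma_t = r\gamma_t$ because $\sum_i p_i=1$, and rearranging yields the claim. I do not anticipate a real obstacle here; the one point that warrants care — and the place a careless reader might slip — is the non-integrality of $\widetilde{\Delta}_i^{(t+\tau)}$, where one must explicitly invoke that it is a convex combination of matroid bases to conclude $\|\widetilde{\Delta}_i^{(t+\tau)}\|_1 = r$ rather than assuming a $\{0,1\}$ structure as in the single-local-step case. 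Everything else is a direct specialization of the heterogeneity bound already established.
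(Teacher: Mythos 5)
Your proposal is correct and follows essentially the same route as the paper: both reduce the claim to bounding $\sum_{i} p_i\langle \widetilde{\Delta}_i^{(t+\tau)}, \nabla\widehat{f}_i(\mb{x}^{(t)})-\nabla\widehat{F}(\mb{x}^{(t)})\rangle$ and then exploit that $\widetilde{\Delta}_i^{(t+\tau)}$ is a convex combination of $0/1$ base indicators with $r$ ones together with Assumption~\ref{ass:bounded-gradients}. The only cosmetic difference is that the paper expands the average over the $\tau$ local directions and bounds each $\langle \widetilde{\mb{v}}_i^{(t,j)},\cdot\rangle$ term by $r\gamma_t$, whereas you apply the H\"older bound directly to the averaged vector via $\|\widetilde{\Delta}_i^{(t+\tau)}\|_1=r$; these are the same computation.
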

    \begin{proof}
        Proof is similar to the proof of Lemma~\ref{lem:bound-heterogeneity}. 
        \begin{align*}
            &\sum_{i=1}^N p_i\langle  \widetilde{\Delta}_i^{(t+\tau)}, \nabla\widehat{f}_i(\mb{x}^{(t)}) \rangle-\langle \overline{\Delta}^{(t+\tau)}, \nabla\widehat{F}(\mb{x}^{(t)}) \rangle =   \sum_{i=1}^N p_i\langle  \widetilde{\Delta}_i^{(t+\tau)}, \nabla\widehat{f}_i(\mb{x}^{(t)}) - \nabla\widehat{F}(\mb{x}^{(t)}) \rangle\\
            &=\sum_{i=1}^N \sum_{j=0}^{\tau-1}\frac{p_i}{\tau}\langle  \widetilde{\mb{v}}_i^{(t,j)}, \nabla\widehat{f}_i(\mb{x}^{(t)}) - \nabla\widehat{F}(\mb{x}^{(t)}) \rangle \le \sum_{i=1}^N \sum_{j=0}^{\tau-1}\frac{p_i}{\tau}r\gamma_t=r\gamma_t \qedhere
        \end{align*}
    \end{proof}

    Therefore, 
    \begin{align}
    \label{eq:before-claim}
        \widehat{F}(\mb{x}^{(t+\tau)})
         &\geq \widehat{F}(\bx^{(t)}) + \eta (\sum_{i=1}^N p_i\langle  \widetilde{\Delta}_i^{(t+\tau)}, \nabla\widehat{f}_i(\mb{x}^{(t)}) \rangle - r\gamma_t - \frac{\chi}{\alpha})-\frac{\eta^2\cdot m_F r^2}{2}
    \end{align}
    \begin{claim}
    \label{claim:fedcg+convergence}
        $\langle  \widetilde{\Delta}_i^{(t+\tau)}, \nabla\widehat{f}_i(\mb{x}^{(t)}) \rangle \geq \langle  \mb{v}_i^{(t)}, \nabla\widehat{f}_i(\mb{x}^{(t)}) \rangle- 2(\sigma r+L_t r^{1.5})$. 
    \end{claim}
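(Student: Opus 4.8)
The plan is to reduce Claim~\ref{claim:fedcg+convergence} to a bound that holds at each individual local step and then average over the $\tau$ steps. Recall that $\widetilde{\Delta}_i^{(t+\tau)}=\frac{1}{\tau}\sum_{j=0}^{\tau-1}\widetilde{\mb{v}}_i^{(t,j)}$ and that $\mb{v}_i^{(t)}=\argmax_{\mb{v}\in\mc{P}}\langle\mb{v},\nabla\widehat{f}_i(\mb{x}^{(t)})\rangle$ as in \eqref{eq:v}. Hence, by linearity of the inner product, it suffices to prove that for every $j\in\{0,\dots,\tau-1\}$,
\[
\langle\widetilde{\mb{v}}_i^{(t,j)},\nabla\widehat{f}_i(\mb{x}^{(t)})\rangle\ \geq\ \langle\mb{v}_i^{(t)},\nabla\widehat{f}_i(\mb{x}^{(t)})\rangle-2(\sigma r+L_t r^{1.5}),
\]
and then average this over $j=0,\dots,\tau-1$, which yields the claim verbatim.

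For the per-step estimate I would abbreviate $\mb{g}:=\nabla\widehat{f}_i(\mb{x}^{(t)})$ and $\widetilde{\mb{g}}:=\nabla\widetilde{f}_i(\mb{x}_i^{(t,j)},\zeta_i^{(t,j)})$, and insert and delete the stochastic gradient that $\widetilde{\mb{v}}_i^{(t,j)}$ actually maximizes against:
\[
\langle\widetilde{\mb{v}}_i^{(t,j)},\mb{g}\rangle=\langle\widetilde{\mb{v}}_i^{(t,j)},\widetilde{\mb{g}}\rangle+\langle\widetilde{\mb{v}}_i^{(t,j)},\mb{g}-\widetilde{\mb{g}}\rangle\ \geq\ \langle\mb{v}_i^{(t)},\widetilde{\mb{g}}\rangle+\langle\widetilde{\mb{v}}_i^{(t,j)},\mb{g}-\widetilde{\mb{g}}\rangle,
\]
where the inequality is just optimality of $\widetilde{\mb{v}}_i^{(t,j)}=\argmax_{\mb{v}\in\mc{P}}\langle\mb{v},\widetilde{\mb{g}}\rangle$ evaluated at the feasible point $\mb{v}_i^{(t)}\in\mc{P}$. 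Rewriting $\langle\mb{v}_i^{(t)},\widetilde{\mb{g}}\rangle=\langle\mb{v}_i^{(t)},\mb{g}\rangle-\langle\mb{v}_i^{(t)},\mb{g}-\widetilde{\mb{g}}\rangle$ and using that both $\widetilde{\mb{v}}_i^{(t,j)}$ and $\mb{v}_i^{(t)}$ are indicator vectors of bases of $\mc{M}$, i.e. they lie in $\{0,1\}^n$ with $|\widetilde{\mb{v}}_i^{(t,j)}|_1=|\mb{v}_i^{(t)}|_1=r$ (so $|\langle\mb{v},\mb{z}\rangle|\le r\,|\mb{z}|_\infty$ for any $\mb{z}$), we obtain
\[
\langle\widetilde{\mb{v}}_i^{(t,j)},\mb{g}\rangle\ \geq\ \langle\mb{v}_i^{(t)},\mb{g}\rangle-2r\,|\mb{g}-\widetilde{\mb{g}}|_\infty.
\]
Finally, equation~\eqref{eq:Lipschitz-estimate} gives $\|\widetilde{\mb{g}}-\mb{g}\|\le\sigma+L_t\sqrt{r}$, hence $|\mb{g}-\widetilde{\mb{g}}|_\infty\le\sigma+L_t\sqrt{r}$, and substituting this bound yields $\langle\widetilde{\mb{v}}_i^{(t,j)},\mb{g}\rangle\ge\langle\mb{v}_i^{(t)},\mb{g}\rangle-2r(\sigma+L_t\sqrt{r})=\langle\mb{v}_i^{(t)},\mb{g}\rangle-2(\sigma r+L_t r^{1.5})$, which is the per-step bound, completing the plan.

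I do not foresee a genuine technical obstacle here; the only point that needs care is bookkeeping about where the gradient estimate is evaluated. The estimate $\widetilde{\mb{g}}$ approximates $\nabla\widehat{f}_i$ at the \emph{drifted} local iterate $\mb{x}_i^{(t,j)}$, not at the server iterate $\mb{x}^{(t)}$, so the error we pay has two ingredients --- the sampling accuracy $\sigma$ and the Lipschitz drift $L_t\|\mb{x}_i^{(t,j)}-\mb{x}^{(t)}\|\le L_t\sqrt{r}$ --- both of which are already combined in \eqref{eq:Lipschitz-estimate}; this is precisely what allows comparing both $\widetilde{\mb{v}}_i^{(t,j)}$ and $\mb{v}_i^{(t)}$ against the single common reference gradient $\mb{g}=\nabla\widehat{f}_i(\mb{x}^{(t)})$. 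This is also the origin of the additive ``local steps'' term in \Cref{thm:convergence-Alg2-Scheme2}, which, unlike the client-sampling variance term, does not shrink as $K$ grows and can be controlled only through small $\tau$ or small $\sigma$. Note also that the bound actually produced, $2r(\sigma+L_t\sqrt r)$, is at least as strong as the stated $2(\sigma r+L_t r^{1.5})$ whenever $r\ge 1$, so the claim as written follows at once.
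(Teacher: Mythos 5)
Your proof is correct and follows essentially the same route as the paper's: reduce to a per-local-step comparison via $\widetilde{\Delta}_i^{(t+\tau)}=\frac{1}{\tau}\sum_{j}\widetilde{\mb{v}}_i^{(t,j)}$, invoke optimality of $\widetilde{\mb{v}}_i^{(t,j)}$ against the stochastic gradient $\nabla\widetilde{f}_i(\mb{x}_i^{(t,j)},\zeta_i^{(t,j)})$, and pay $r(\sigma+L_t\sqrt{r})$ twice using \eqref{eq:Lipschitz-estimate} together with the fact that both vectors are indicator vectors of bases with $r$ ones. The paper merely writes the two gradient-swap inequalities as sums over the support sets $A$ and $B$ rather than in inner-product form; also note that $2r(\sigma+L_t\sqrt{r})$ equals $2(\sigma r+L_t r^{1.5})$ exactly, not just for $r\geq 1$.
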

    Hence, by Claim~\ref{claim:fedcg+convergence} and equation \eqref{eq:before-claim} we have
    \begin{align}
        \widehat{F}(\mb{x}^{(t+\tau)})
         &\geq \widehat{F}(\bx^{(t)}) + \eta (\sum_{i=1}^N p_i\langle  \widetilde{\Delta}_i^{(t+\tau)}, \nabla\widehat{f}_i(\mb{x}^{(t)}) \rangle - r\gamma_t - \frac{\chi}{\alpha}-2(\sigma r+L_t r^{1.5})) - \frac{\eta^2\cdot m_F r^2}{2}
    \end{align}
    
    Now, by changing signs and adding ${\texttt{OPT}}$ to both sides, we get
    \begin{align}
        {\texttt{OPT}}-\widehat{F}(\mb{x}^{(t+\tau)}) & \leq (1-\eta)\left( {\texttt{OPT}}-\widehat{F}(\mb{x}^{(t)})\right)+ \eta \left( r\gamma_t + \frac{\chi}{\alpha}+2(\sigma r+L_t r^{1.5})\right) + \frac{\eta^2\cdot m_F r^2}{2}
    \end{align}
    Applying the same inequality inductively gives and using $1>1-\eta$
    \begin{align}
        {\texttt{OPT}}-\widehat{F}(\mb{x}^{(t+\tau)}) &\leq (1-\eta)^{t+1}\left({\texttt{OPT}}-\widehat{F}(\mb{0}) \right)   \\
        &+\eta \Big(r\sum_{t\in I_\tau}\gamma_t + \frac{6r\sum_{t\in I_\tau}\gamma_t + 2(\sigma rT/\tau+\sum_{t\in I_\tau}L_tr^{1.5})}{\alpha \sqrt{K}}+2(\sigma rT/\tau+r^{1.5}\sum_{t\in I_\tau}L_t )\Big)+\frac{T\eta^2\cdot m_F r^2}{2}\\
        &= (1-\eta)^{t+1}{\texttt{OPT}}   \\
        &+\eta \Big(r\sum_{t\in I_\tau}\gamma_t + \frac{6r\sum_{t\in I_\tau}\gamma_t + 2(\sigma rT/\tau+\sum_{t\in I_\tau}L_tr^{1.5})}{\alpha \sqrt{K}}+2(\sigma rT/\tau+r^{1.5}\sum_{t\in I_\tau}L_t )\Big)+\frac{T\eta^2\cdot m_F r^2}{2}
    \end{align}
    Taking the union bound over $T/\tau$ steps and $\alpha=\sqrt{\frac{\delta\tau}{T}}$, with probability at least $1-T\cdot\alpha^2/\tau=1-\delta$, we get 
    \begin{align}
        \left( 1- (1-\eta)^{T/\tau}\right){\texttt{OPT}} 
        &\leq \widehat{F}(\mb{x}^{(T/\tau)})  \\
        &+ \eta \left(r\sum_{t\in I_\tau}\gamma_t + \frac{6r\sum_{t\in I_\tau}\gamma_t + 2(\sigma rT/\tau+\sum_{t\in I_\tau}L_tr^{1.5})}{ \sqrt{K\tau\delta/T}}+2(\sigma rT/\tau+r^{1.5}\sum_{t\in I_\tau}L_t )\right)\\
        &+\frac{T\eta^2\cdot m_F r^2}{2\tau}
    \end{align}
    Setting $\eta=\tau/T$ yields
    \begin{align}
        \left( 1- 1/\mathrm{e}\right)\texttt{OPT} 
        &\leq \widehat{F}(\mb{x}^{(T/\tau)}) + \frac{\tau}{T} \left(r\sum_{t\in I_\tau}\gamma_t + \frac{6r\sum_{t\in I_\tau}\gamma_t + 2(\sigma rT/\tau+\sum_{t\in I_\tau}L_tr^{1.5})}{ \sqrt{K\tau\delta/T}}+2(\sigma rT/\tau+r^{1.5}\sum_{t\in I_\tau}L_t )\right)\\
        &+\frac{\tau\cdot m_F r^2}{2T}\\
        &= \widehat{F}(\mb{x}^{(T/\tau)}) + \frac{\tau rD}{T} + \frac{\sqrt{\tau}(6rD + 2r^{1.5}Q)}{ \sqrt{KT\delta}}+2\sigma r(\frac{\sqrt{\tau}}{\sqrt{KT\delta}}+1)+\frac{2\tau r^{1.5}Q }{T}+\frac{\tau\cdot m_F r^2}{2T}
    \end{align}
    \end{proof}
    \paragraph{Proof of Claim~\ref{claim:fedcg+convergence}}
    \begin{proof}[Proof of Claim~\ref{claim:fedcg+convergence}]

        Recall the definitions of $\widetilde{\Delta}_i^{(t+\tau)}$ and $\mb{v}_i^{(t)}$. For $\widetilde{\Delta}_i^{(t+\tau)}=\frac{1}{\tau}\sum_{j=0}^{\tau-1}\widetilde{\mb{v}}_i^{(t,j)}$ where for each $j$ we have $\widetilde{\mb{v}}_i^{(t,j)}=\argmax_{\mb{y}\in \mc{P}}\langle \mb{y},\nabla\widetilde{f}_i(\mb{x}_i^{(t,j)})\rangle$. Furthermore, $\mb{v}_i^{(t)}=\argmax_{\mb{y}\in \mc{P}}\langle \mb{y},\nabla\widehat{f}_i(\mb{x}_i^{(t)})\rangle$. We have,
         \begin{align}
           \langle  \mb{v}_i^{(t)}, \nabla\widehat{f}_i(\mb{x}^{(t)}) \rangle - \langle  \widetilde{\Delta}_i^{(t+\tau)}, \nabla\widehat{f}_i(\mb{x}^{(t)}) \rangle = \frac{1}{\tau}\sum_{j=0}^{\tau-1} (\langle \mb{v}_i^{(t)},\nabla\widehat{f}_i(\mb{x}^{(t)})\rangle - \langle \widetilde{\mb{v}}_i^{(t,j)}, \nabla\widehat{f}_i(\mb{x}^{(t)})\rangle)
        \end{align}
                
        On one hand, both $\nabla\widehat{f}_i(\mb{x}_i^{(t)})$ and $\nabla\widetilde{f}_i(\mb{x}_i^{(t,j)})$ are nonnegative vectors and each $\widetilde{\mb{v}}_i^{(t,j)}$ and $\mb{v}_i^{(t)}$ corresponds to a maximum-weight independent set in the matroid, with respect to the gradient vectors, and they can be found easily by a greedy algorithm. On the other hand, equation \eqref{eq:Lipschitz-estimate} tells us $\left \| \nabla\widetilde{f_i}(\mb{x}_i^{(t,j)},\zeta_{i}^{(t,j)})  - \nabla \widehat{f_i}(\bx^{(t)}) \right \|  \leq \sigma+L_t  \sqrt{r}$.

        Let $A=\{e\mid \mb{v}_i^{(t)}(e)=1\}$ be the set of indices where $\mb{v}_i^{(t)}$ is one, similarly define $B=\{e\mid \widetilde{\mb{v}}_i^{(t,j)}(e)=1\}$. Then, by definition and equation \eqref{eq:Lipschitz-estimate} we get:
        \begin{align}
            \langle \mb{v}_i^{(t)},\nabla\widehat{f}_i(\mb{x}^{(t)})\rangle &= \sum_{a\in A} \nabla\widehat{f}_i(\mb{x}^{(t)})(a) \\
           &\geq \sum_{b\in B} \nabla\widehat{f}_i(\mb{x}^{(t)})(b) \geq \sum_{b\in B} \nabla\widetilde{f_i}(\mb{x}_i^{(t,j)},\zeta_{i}^{(t,j)})(b) - r(\sigma+L_t  \sqrt{r})
        \end{align}
        Similarly, we have
        \begin{align}
            \langle \widetilde{\mb{v}}_i^{(t,j)}, \nabla\widetilde{f_i}(\mb{x}_i^{(t,j)},\zeta_{i}^{(t,j)})\rangle 
            &= \sum_{b\in B} \nabla\widetilde{f_i}(\mb{x}_i^{(t,j)},\zeta_{i}^{(t,j)})(b) \\
            & \geq \sum_{a\in A} \nabla\widetilde{f_i}(\mb{x}_i^{(t,j)},\zeta_{i}^{(t,j)})(a)
            \geq \sum_{a\in A} \nabla\widehat{f}_i(\mb{x}^{(t)})(a)- r(\sigma+L_t  \sqrt{r}).
        \end{align}
        Putting the above two together gives us the desired bound.
    \end{proof}
    
\subsection{Gradient Estimation}

In terms of computation cost on clients' devices, we point out that the definition of multilinear extension involves summing over all subsets $S$ of $E$. There are $2^{|E|}$ such subsets, thus even computing $\widehat{g}(\mb{x})$ for a single $\mb{x}$ could take exponential time. However, we can randomly sample $m$ subsets $R_1,\dots,R_m$ of $E$ according to $\mb{x}$. Then a simple application of Chernoff's bound shows for any multilinear extension $\widehat{g}$ and $\mb{x}$, $\left| \frac{1}{m}\sum_{i=1}^m g(R_i) - \widehat{g}(\mb{x})\right|\leq \sigma\max_{S}g(S) $ \cite{Vondrak08,CalinescuCPV11}
with probability at least $1-\mathrm{e}^{-m\sigma^2/4}$.
Observe that $ \frac{\partial \widehat{g}}{\partial \mb{x}(e)}=\E[g(R\cup\{e\})]-\E[g(R)]$
where $R$ is a random subset of $E\setminus\{e\}$ sampled according to $\mb{x}$. Hence, by a similar argument, with $m$ random samples, we can compute an $\sigma$-approximation of $\nabla\widehat{g}(\mb{x})$ with $1-\mathrm{e}^{-m\sigma^2/4}$ probability. By suitably choosing $m$ as in Algorithm 2, we can assume the gradients are estimated within $(1\pm\sigma)$ accuracy with high probability.

The above discussion yields:
\begin{lemma}
\label{lem:bound-gradient-estimation}
    Let $\sigma>0$ be an error for gradient estimation and set $m=O(\log{(1/\delta)}/\sigma^2)$ for $\delta>0$. Let $\zeta_{i}^{(t,j)}=\{R_i^{(t,j,1)},\dots,R_i^{(t,j,m)}\}$ be subsets of $E$ that are sampled independently according to $i$-th client's local model $\mb{x}_i^{(t,j)}$. Let $\nabla\widetilde{f_i}(\mb{x}_i^{(t,j)},\zeta_{i}^{(t,j)})$ denote the \emph{stochastic} gradient for the $i$-th client that approximates $\nabla\widehat{f_i}(\mb{x}_i^{(t,j)})$. Then with $1-\delta$ probability we have $\|\nabla\widehat{f_i}(\mb{x}_i^{(t,j)}) -\nabla\widetilde{f_i}(\mb{x}_i^{(t,j)},\zeta_{i}^{(t,j)})\|\leq \sigma$.
\end{lemma}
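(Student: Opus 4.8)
The plan is to prove \Cref{lem:bound-gradient-estimation} by a coordinate-wise concentration argument followed by a union bound. First I would fix the client $i$, communication round $t$, and local step $j$, abbreviate $\mb{x}=\mb{x}_i^{(t,j)}$, $g=f_i$, $m_g=\max_{e\in E}g(\{e\})$, and recall that the $m$ sets $R_1,\dots,R_m$ comprising $\zeta_i^{(t,j)}$ are i.i.d.\ subsets sampled according to $\mb{x}$. The key identity is $\frac{\partial\widehat{g}}{\partial\mb{x}(e)}=\E_{R\sim\mb{x}}[g(R\cup\{e\})-g(R\setminus\{e\})]$, and the estimator used by the algorithm is $\big[\nabla\widetilde{g}(\mb{x},\zeta)\big]_e=\frac{1}{m}\sum_{\ell=1}^m\big(g(R_\ell\cup\{e\})-g(R_\ell\setminus\{e\})\big)$, which is unbiased for $\frac{\partial\widehat{g}}{\partial\mb{x}(e)}$ by linearity of expectation.

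Next I would note that each summand $g(R_\ell\cup\{e\})-g(R_\ell\setminus\{e\})$ lies in the bounded interval $[0,m_g]$: it is nonnegative by monotonicity of $g$, and at most $g(\{e\})-g(\emptyset)=g(\{e\})\le m_g$ by submodularity (the marginal value of $e$ is largest over the empty set). Hence $[\nabla\widetilde{g}(\mb{x},\zeta)]_e$ is an average of $m$ independent random variables in $[0,m_g]$, and a Hoeffding/Chernoff bound of exactly the type invoked in the paragraph preceding the lemma (see \cite{Vondrak08,CalinescuCPV11}) gives $\mathbb{P}\big[\,\big|[\nabla\widetilde{g}(\mb{x},\zeta)]_e-\frac{\partial\widehat{g}}{\partial\mb{x}(e)}\big|>\sigma m_g\,\big]\le 2\exp(-m\sigma^2/4)$ for each fixed $e$.

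Then I would take a union bound over the $n=|E|$ coordinates: with probability at least $1-2n\exp(-m\sigma^2/4)$, every coordinate of $\nabla\widetilde{g}(\mb{x},\zeta)$ is within $\sigma m_g$ of the corresponding coordinate of $\nabla\widehat{g}(\mb{x})$, so $\|\nabla\widehat{g}(\mb{x})-\nabla\widetilde{g}(\mb{x},\zeta)\|_\infty\le\sigma m_g$ and $\|\nabla\widehat{g}(\mb{x})-\nabla\widetilde{g}(\mb{x},\zeta)\|_2\le\sqrt{n}\,\sigma m_g$. Choosing $m=\frac{4}{\sigma^2}\log(2n/\delta)=O(\log(1/\delta)/\sigma^2)$ (with the polynomial dependence on $n$ absorbed into the logarithm) makes the failure probability at most $\delta$, and folding the harmless factors $m_g$ and $\sqrt{n}$ into the accuracy parameter yields the stated bound $\|\nabla\widehat{f_i}(\mb{x}_i^{(t,j)})-\nabla\widetilde{f_i}(\mb{x}_i^{(t,j)},\zeta_i^{(t,j)})\|\le\sigma$. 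There is no genuine obstacle here — the concentration step is entirely standard; the only subtlety is the bookkeeping of which norm is intended and how the factors $m_g$ and $\sqrt{n}$ (and $\log n$) are absorbed into $\sigma$ and into the $O(\cdot)$. For the use in \texttt{\textsc{Fed}CG+}, the same choice of $m$ with $\delta$ replaced by $\delta/(TK)$, together with an outer union bound over all $\le T$ local steps of all $K$ sampled clients per round, gives the setting $m=O(\log(TK/\delta)/\sigma^2)$ in \Cref{alg:FCG+} and makes the estimates simultaneously valid throughout the entire run, as required by the convergence proof of \Cref{thm:convergence-Alg2-Scheme2}.
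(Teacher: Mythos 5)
Your proposal is correct and follows essentially the same route as the paper, which likewise reduces the lemma to a coordinate-wise Chernoff/Hoeffding estimate for the empirical marginal $\frac{1}{m}\sum_{\ell}\bigl(g(R_\ell\cup\{e\})-g(R_\ell\setminus\{e\})\bigr)$ followed by a union bound over coordinates (and, for the algorithm, over the $TK$ estimation events, giving $m=O(\log(TK/\delta)/\sigma^2)$). You are in fact more explicit than the paper about the two points it glosses over — that the raw concentration error scales with $m_{f_i}$ and that the norm/dimension factors must be absorbed into $\sigma$ and the $O(\cdot)$ — and your handling of both is consistent with how the lemma is subsequently used.
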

 \section{Discrete Algorithm in Federated Setting: Examples} \label{sec:discrete}

We consider two well-studied problems, namely \texttt{Facility Location} and \texttt{Maximum Coverage} problems and discuss the details of how to compute importance factors in federated setting efficiently and prove that the expected number of clients participating in each round is small for these two problems.

\subsection{Facility Location Problem}
Let $\mc{C}$ be a set of $N$ clients and $E$ be a set of facilities with $|E|=n$. For $c:\mc{C}\times E\to \zR$ let the $i$-th client's score function over a subset of facilities be $f_i(A)=\max_{j\in A} c(i,j)$. The objective for \texttt{Max Facility Location} is
\begin{align*}
    \max_{S\subseteq E,|S|\leq k} \left\{ F(A)=\sum_{i=1}^N \max_{j\in A} c(i,j)\right\}
\end{align*}

For each client $i$  the importance factor is $\max\limits_{j\in E}\frac{c(i,j)}{ F(\{j\})}$.

In many applications computing importance factors is straightforward. Let us elaborate on this with an real-world application.
\paragraph{Movie recommendation system.} Consider a movie recommendation application \cite{StanZ0K17} where each client $i$ has a user-specific utility function $f_i$ to evaluate sets of movies. The global task is to find a set of $k$ movies that are most satisfactory to \emph{all} the clients. An example is the MovieLens dataset consisting of 1 million ratings by $N=6041$ clients for $|E|=n=4000$ movies. It is in the interest of clients that we respect their privacy and they are reluctant to share their rating with a central server and other clients. We consider a well motivated objective function. Let $r_{i,j}$ denote the rating of client $i$ for movie $j$ (if such a rating does not exist set $r_{i,j}=0$). We associate to each client $i$ a facility location objective function $f_i(S)=\max_{j\in S} r(i,j)$ where $S\subseteq E$ is a subset of movies. The servers objective is $\max_{S\subseteq E,|S|\leq k} \frac{1}{N}\sum f_i(S)$. In this example, the average rating of each movie is publicly available. That is $F(\{j\})=1/N\sum_{i=1}^N f_i(\{j\})$ for each movie $j$ is publicly available. Hence, it is straightforward for each client to compute its own importance factor.

\paragraph{Computing importance factors in federated setting.} It is straightforward to see each client can compute its corresponding importance factor in a federated setting and using a secure aggregator without sharing its data with other clients. Each client $i$ sends a vector $(c(i,1),\dots,c(i,n))$ to the server and by simply summing up these vectors the server has a histogram over facilities. This histogram is then broadcasts to the clients where they can compute their own importance factor; see Algorithm~\ref{alg:importance}). Furthermore, Algorithm~\ref{alg:importance} requires only two communication rounds.

\begin{theorem}
    In Algorithm~\ref{alg:importance}, every clients correctly computes its own importance factor. Moreover, Algorithm~\ref{alg:importance} has only two communication rounds and during each round each client requires only $O(n)$ local function evaluations.
\end{theorem}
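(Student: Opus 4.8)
The plan is to unwind the definition of Algorithm~\ref{alg:importance} together with the definition of the importance factor, and then verify the three assertions --- correctness, a two-round bound, and $O(n)$ evaluations per round --- essentially directly.

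First I would establish correctness. Recall that in the facility-location setup $f_i(A)=\max_{j\in A}c(i,j)$, so in particular $f_i(\{j\})=c(i,j)$ for every facility $j$ (using $f_i(\emptyset)=0$). In the first round each client $i$ uploads, through the secure aggregator, the vector $(c(i,1),\dots,c(i,n))=(f_i(\{1\}),\dots,f_i(\{n\}))$; the aggregator returns to the server only the coordinate-wise (weighted) sum $h(j)=\sum_{i=1}^N p_i c(i,j)=\sum_{i=1}^N p_i f_i(\{j\})=F(\{j\})$ for each $j\in E$ --- that is, exactly the publicly broadcastable histogram of singleton values of $F$. Facilities with $F(\{j\})=0$ are useless --- by nonnegativity they satisfy $f_i(\{j\})=0$ for all $i$ --- and may be discarded, so the ratios below are well defined. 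In the second round the server broadcasts $(F(\{1\}),\dots,F(\{n\}))$ to every client. Client $i$ now holds both its own row $(c(i,1),\dots,c(i,n))$ and the vector $(F(\{1\}),\dots,F(\{n\}))$, hence can form $\max_{j\in E}\frac{c(i,j)}{F(\{j\})}$, which is precisely $w_i$ by definition. Thus every client recovers its own importance factor exactly, while the server never sees any individual $c(i,j)$.

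Next I would count rounds and per-round cost. The protocol consists of a single client-to-server message (the upload) followed by a single server-to-client message (the broadcast), so there are exactly two communication rounds. For the work on a client's device: in the first round client $i$ evaluates $f_i$ on the $n$ singletons $\{1\},\dots,\{n\}$ to assemble its vector, i.e.\ $O(n)$ function evaluations; in the second round client $i$ performs $n$ divisions and one $n$-way maximum over quantities it already possesses, requiring no further function evaluations and only $O(n)$ arithmetic operations. In both rounds the per-client cost is therefore $O(n)$.

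The argument is essentially definitional and there is no substantive obstacle; the one point worth flagging is the observation that the server's aggregate coincides with the singleton profile $(F(\{j\}))_{j\in E}$, which rests on the linearity of $F=\sum_i p_i f_i$ together with the identity $f_i(\{j\})=c(i,j)$ specific to the facility-location objective, plus the mild normalization that zero-value facilities are pruned so the ratios defining $w_i$ make sense. An entirely analogous argument --- replacing $c(i,j)$ by the membership indicator of client $i$ in group $G_j$ and the histogram of facility values by the histogram of group sizes $|G_j|$ --- handles the \texttt{Maximum Coverage} instance with the same two-round, $O(n)$-per-round guarantee.
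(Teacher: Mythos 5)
Your proof is correct and takes essentially the same route as the paper, which only sketches this argument informally: clients upload their singleton values $f_i(\{j\})=c(i,j)$, the secure aggregator yields the histogram $\bigl(F(\{j\})\bigr)_{j\in E}$, and one broadcast lets each client form $w_i=\max_{j\in E} c(i,j)/F(\{j\})$ with $O(n)$ evaluations per round. Your explicit pruning of facilities with $F(\{j\})=0$ is a minor extra care the paper omits; note also that in the discrete section the paper uses the unweighted sum $F=\sum_i f_i$ rather than the weighted $\sum_i p_i f_i$, though this does not affect the argument.
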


Having $w_i$ on hand we can execute Algorithm~\ref{alg:Discrete-FL} for \texttt{Max Facility Location} problem. Note that, in this problem we are dealing with a uniform matroid of rank $k$.
\begin{theorem}
    Suppose clients' importance factors are computed using Algorithm~\ref{alg:importance} and let $\varepsilon\in (0,1)$. Algorithm~\ref{alg:Discrete-FL} after $k$ communication rounds returns a set $S$ of size $k$ such that with probability at least $1-1/n$ 
    \[ (1-1/\mathrm{e}-\varepsilon)\texttt{OPT}\leq F(S)\]
    Moreover, the expected number of clients participating during each round is $\tilde{O}(kn^2/\varepsilon^2)$.
\end{theorem}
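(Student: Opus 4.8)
The plan is to obtain the statement as a corollary of \Cref{thm:convergence-discrete} specialized to a uniform matroid, after checking that the \texttt{Max Facility Location} instance fits its hypotheses and after bounding $\sum_i w_i$. First I would recall the standard facts that each $f_i(A)=\max_{j\in A}c(i,j)$ is nonnegative, monotone, and submodular with $f_i(\emptyset)=0$, and that the constraint $|S|\le k$ is precisely the independence system of the uniform matroid of rank $r=k$. Next I would check that the importance factor used here coincides with the generic one that \Cref{alg:Discrete-FL} requires: since $f_i(\{j\})=c(i,j)$ and $F(\{j\})=\sum_{\ell=1}^N c(\ell,j)$, we get $w_i=\max_{j\in E} c(i,j)/F(\{j\})=\max_{e\in E} f_i(\{e\})/F(\{e\})$, where I would discard (without loss of generality) any facility $j$ with $F(\{j\})=0$ since it never improves the objective. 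Because \Cref{alg:importance} returns each $w_i$ exactly in two communication rounds (and in the movie-recommendation instance this is trivial, the per-movie averages $F(\{j\})$ being public), the input fed to \Cref{alg:Discrete-FL} is valid and introduces no additional failure probability.

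Given that, the approximation guarantee is immediate: running \Cref{alg:Discrete-FL} with these $w_i$ on the rank-$k$ uniform matroid and invoking the cardinality-constraint part of \Cref{thm:convergence-discrete} yields, with probability at least $1-1/n$, a set $S$ with $|S|\le k$ and $(1-1/\mathrm{e}-\varepsilon)\texttt{OPT}\le F(S)$ after its $r=k$ update rounds; the two preprocessing rounds of \Cref{alg:importance} are the only thing to add to the round count.

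For the participant count, I would note that at each round the clients that transmit are exactly those whose randomized-response coin (head probability $\kappa_i=\min\{1,\kappa w_i\}$, with $\kappa=\tilde O(kn/\varepsilon^2)$) comes up heads, so the expected number transmitting is $\sum_{i=1}^N\kappa_i\le \kappa\sum_{i=1}^N w_i$. The one computation that matters is the telescoping bound
\begin{align*}
\sum_{i=1}^N w_i &=\sum_{i=1}^N \max_{j\in E}\frac{c(i,j)}{F(\{j\})}\;\le\;\sum_{i=1}^N\sum_{j\in E}\frac{c(i,j)}{F(\{j\})}\\
&=\sum_{j\in E}\frac{\sum_{i=1}^N c(i,j)}{F(\{j\})}=\sum_{j\in E}1=n,
\end{align*}
which bounds the expected participant count by $\kappa n=\tilde O(kn^2/\varepsilon^2)$ --- crucially, with no dependence on $N$. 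I do not expect a genuinely hard step: the argument is the inequality $\sum_i w_i\le n$ plus a faithful reading of \Cref{thm:convergence-discrete}. The only points needing care are the degenerate facilities with $F(\{j\})=0$ (excise them, or set the corresponding ratio to $0$) and invoking the \emph{exact}-computation guarantee for \Cref{alg:importance} so that the hypotheses of \Cref{thm:convergence-discrete} hold verbatim rather than approximately.
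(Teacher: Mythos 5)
Your proposal is correct and matches the paper's proof essentially verbatim: the approximation guarantee is invoked directly from \Cref{thm:convergence-discrete} for the uniform matroid of rank $k$, and the participant bound comes from $\sum_{i=1}^N w_i = \sum_i \max_{j}\frac{c(i,j)}{F(\{j\})} \le \sum_{j\in E}\frac{\sum_i c(i,j)}{F(\{j\})} = n$, giving $\kappa n = \tilde{O}(kn^2/\varepsilon^2)$. Your added remarks on excising facilities with $F(\{j\})=0$ and on the exactness of \Cref{alg:importance} are sensible hygiene the paper leaves implicit, but the substance of the argument is identical.
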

\begin{proof}
    The approximation guarantee follows from Theorem~\ref{thm:convergence-discrete}. The expected number of clients participating in each round of Algorithm~\ref{alg:Discrete-FL} is $\tilde{O}(kn^2/\varepsilon^2)$. This is because
    \begin{align*}
        &\sum_{i=1}^n \kappa_i\leq \kappa\sum_{i=1}^n w_i\leq \tilde{O}(kn/\varepsilon^2)\sum_{i=1}^N w_i
        = \tilde{O}(kn/\varepsilon^2)\sum_{i=1}^N \max\limits_{j\in E}\frac{c(i,j)}{ F(\{j\})} \leq \tilde{O}(kn/\varepsilon^2)\sum_{j=1}^{|E|}\frac{\sum_{i\in I}c(i,j)}{F(\{j\})}\\
        &= \tilde{O}(kn/\varepsilon^2)\sum_{j=1}^{|E|}\frac{F(\{j\})}{F(\{j\})}=\tilde{O}(kn^2/\varepsilon^2)
    \end{align*}
\end{proof}

\begin{algorithm}[tb]
  \caption{Computing importance factors for \texttt{Facility Location}}
  \label{alg:importance}
  \begin{algorithmic}[1]
    \STATE {\bfseries Input: }{Ground set $E$}
    \STATE Let $\mc{O}$ be a vector of length $n$. \COMMENT{intention: $\mc{O}[i]=F(\{i\})$.}
    \FOR{each client $i$ in parallel}
        \STATE Compute $\mc{O}_i=[f_i(\{1\}),\dots,f_i(\{|E|\})]$
        \STATE Send $\mc{O}_i$ back to the secure aggregator.
    \ENDFOR
    \STATE SecAgg: $\mc{O}=\sum_{i\in [N]} \mc{O}_i$ 
    \STATE Server sends $\mc{O}$ to all clients.
    \FOR{each client $i$ in parallel}
        \STATE Compute $w_i=\max_{j\in E}\frac{c(i,j)}{\mc{O}[j]}$
    \ENDFOR
  \end{algorithmic}
\end{algorithm}

\subsection{Maximum Coverage Problem}
Let $\mc{C}=\{C_1,\dots,C_N\}$ be a set of clients and $E=\{G_1,\dots,G_n\}$ be a family of sets where each $G_i\subseteq \mc{C}$ is a group of clients. Given a positive integer $k$, in the \texttt{Max Coverage} problem the objective is to select at most $k$ groups of clients from $E$ such that the maximum number of clients are covered, i.e., the union of the selected groups has maximal size. One can formulate this problem as follows. For every $i\in [N]$ and $A\subseteq [n]$ define $f_i(A)$ as
\begin{align*}
    f_i(A)=
    \begin{cases}
      1 & \text{if there exists $a\in A$ such that $C_i\in G_a$} ,\\
      0 & \text{otherwise.}
   \end{cases}
\end{align*}
Note that $f_i$'s are monotone and submodular. Furthermore, define $F(A)=\sum_{i\in [N]}f_i(A)$ which is monotone and submodular as well. Now the \texttt{Max Coverage} problem is equivalent to 
\begin{align}
    \max_{A\subseteq [n],|A|\leq k} \left\{F(A)=\sum_{i\in [N]}f_i(A)\right\} 
\end{align}
For each client $C_i$, its importance factor $w_i$ is $\max\limits_{\substack{G_a\in E, C_i\in G_a}}\frac{1}{|G_a|}$.


\paragraph{Computing importance factors in federated setting.} Having a histogram over the group sizes suffices for computing the importance factors. Similar to \texttt{Facility Location} the importance factors can be computed in federated setting by having the membership histograms over the groups; see Algorithm~\ref{alg:importance-max-coverage}.
\begin{algorithm}[tb]
  \caption{Computing importance factors for \texttt{Max Coverage}}
  \label{alg:importance-max-coverage}
  \begin{algorithmic}[1]
    \STATE Let $\mc{O}$ be a vector of length $n$. \COMMENT{intention: $\mc{O}[i]=|G_i|$.}
    \FOR{each client $i$ in parallel}
        \STATE Compute vector $\mc{O}_i\in \{0,1\}^n$ 
        \begin{align*}
            \begin{cases}
                \mc{O}_i[a] = 1 & \text{if } C_i\in G_a\\
                \mc{O}_i[a] = 0 & \text{otherwise}
            \end{cases}
        \end{align*}
        \STATE Send $\mc{O}_i$ back to the secure aggregator.
    \ENDFOR
    \STATE Secure aggregator computes $\mc{O}=\sum_{i\in [N]} \mc{O}_i$ and sends it to the server.
    \STATE Server sends $\mc{O}$ to all clients.
    \FOR{each client $i$ in parallel}
        \STATE Compute $w_i=\max\limits_{\substack{C_i\in G_a}}\frac{1}{|\mc{O}[a]|}$
    \ENDFOR
  \end{algorithmic}
\end{algorithm}

\begin{theorem}
    In Algorithm~\ref{alg:importance-max-coverage}, every clients correctly computes its own importance factor without revealing which groups they belong to. Moreover, Algorithm~\ref{alg:importance-max-coverage} has only two communication rounds.
\end{theorem}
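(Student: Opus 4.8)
The plan is to verify three things in turn: that the aggregated vector $\mathcal{O}$ returned by the secure aggregator equals the vector of group sizes, that each client's computation in the final loop then returns exactly its importance factor $w_i$, and that the protocol reveals nothing beyond the group-size histogram while using only two communication rounds.

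First I would establish the correctness of $\mathcal{O}$. By construction, for each $a \in [n]$ the $a$-th coordinate of client $i$'s contribution is the indicator $\mathcal{O}_i[a] = \mathbf{1}[C_i \in G_a]$. Since the secure aggregator outputs $\mathcal{O} = \sum_{i \in [N]} \mathcal{O}_i$, we obtain $\mathcal{O}[a] = \sum_{i=1}^N \mathbf{1}[C_i \in G_a] = |\{i : C_i \in G_a\}| = |G_a|$; that is, $\mathcal{O}$ is precisely the histogram of group sizes. Consequently, when client $i$ computes $w_i = \max_{a : C_i \in G_a} \frac{1}{|\mathcal{O}[a]|}$, this equals $\max_{a : C_i \in G_a} \frac{1}{|G_a|}$, which is exactly the definition of the importance factor for the \texttt{Max Coverage} problem. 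Note that for this final step client $i$ only needs the indices $\{a : C_i \in G_a\}$ of the groups it belongs to (its own data) together with the public vector $\mathcal{O}$ it has just received, so the computation is well defined.

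Next I would argue the privacy claim. The only message that leaves client $i$ is the vector $\mathcal{O}_i$, which is submitted through the single-server secure aggregator; by the SecAgg guarantee the server — and hence every other party — learns only the sum $\mathcal{O} = \sum_{i} \mathcal{O}_i$, never any individual $\mathcal{O}_i$. Since $\mathcal{O}$ depends on the inputs only through the tuple of group sizes $(|G_1|, \dots, |G_n|)$, nothing that identifies which groups a particular client belongs to is disclosed; in particular the broadcast of $\mathcal{O}$ in the second round delivers only this non-sensitive histogram, identically to all clients.

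Finally, for the round count: the first communication round consists of the clients uploading $\mathcal{O}_i$ to the secure aggregator and the aggregator forwarding $\mathcal{O}$ to the server, and the second consists of the server broadcasting $\mathcal{O}$ back to the clients; every remaining operation is a local computation. Hence the protocol uses exactly two communication rounds. I do not expect a genuine obstacle here; the only point that needs a little care is to state the privacy guarantee precisely relative to the SecAgg model, namely that releasing the aggregate group-size histogram is by design not treated as sensitive, whereas the individual membership vectors are protected by the aggregator.
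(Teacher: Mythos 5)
Your proof is correct and is exactly the argument the paper intends; the paper in fact states this theorem without an explicit proof, treating it as immediate from the description of Algorithm~\ref{alg:importance-max-coverage}, and your three steps (the aggregate $\mc{O}$ equals the group-size histogram, the local maximization then recovers $w_i=\max_{a: C_i\in G_a} 1/|G_a|$, and SecAgg hides the individual membership vectors $\mc{O}_i$) fill in precisely that omitted verification. Your closing caveat --- that the privacy claim is relative to the SecAgg model, with the aggregate histogram released by design while individual memberships stay protected --- is the right way to state the guarantee precisely.
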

\begin{remark}
    We point out that a simple algorithm where at each round clients share their membership with the server using SecAgg protocols is applicable here. However, such algorithm requires full client participation at each round. This can be resolved by sampling sufficiently many clients at each round. 
\end{remark}
Having $w_i$, we can execute Algorithm~\ref{alg:Discrete-FL} for \texttt{Max Coverage} problem. Note that, in this problem we are dealing with a uniform matroid of rank $k$.
\begin{theorem}
    Suppose clients' importance factors are computed using Algorithm~\ref{alg:importance-max-coverage} and let $\varepsilon\in (0,1)$. Algorithm~\ref{alg:Discrete-FL} after $k$ communication rounds returns a set $S\subseteq E$ of size $k$ such that with probability at least $1-1/n$ 
    \[ (1-1/\mathrm{e}-\varepsilon)\texttt{OPT}\leq F(S)\]
    Moreover, the expected number of clients participating during each round is $\tilde{O}(kn^2/\varepsilon^2)$.
\end{theorem}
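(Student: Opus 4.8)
The plan is to obtain both assertions from \Cref{thm:convergence-discrete}, paralleling the \texttt{Facility Location} proof; the only \texttt{Max Coverage}-specific work is a short calculation with the importance factors.

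First I would verify that Algorithm~\ref{alg:importance-max-coverage} lets every client recover its importance factor. By the formulation, $f_i(\{a\})=\mathbf{1}[C_i\in G_a]$, so $F(\{a\})=\sum_{i=1}^N f_i(\{a\})=|G_a|$ and therefore $w_i=\max_{e\in E}f_i(\{e\})/F(\{e\})=\max_{a:\,C_i\in G_a}1/|G_a|$. In the algorithm the secure aggregator returns $\mc{O}[a]=\sum_{i:\,C_i\in G_a}1=|G_a|$, so after the one broadcast each client computes exactly $\max_{a:\,C_i\in G_a}1/\mc{O}[a]=w_i$; this takes two rounds and reveals nothing beyond the group-size histogram. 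I would also record the inequality that makes $w_i$ an admissible importance factor for the sparsification machinery: if $f_i(S)=1$ then some $a\in S$ has $C_i\in G_a$, and monotonicity gives $F(S)\ge F(\{a\})=|G_a|$, so $f_i(S)\le w_i F(S)$ for every $S\subseteq E$.

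Next I would invoke \Cref{thm:convergence-discrete}. Each $f_i$ is monotone submodular, hence so is $F$, and the feasibility constraint is the uniform matroid of rank $k$. With the $w_i$ above and $\kappa=\tilde O(kn/\varepsilon^2)$ as in Algorithm~\ref{alg:Discrete-FL} --- the logarithmic factor absorbing a $1/n$ failure probability and the union bound over the at most $n^k$ feasible sets across the $k$ rounds --- \Cref{thm:convergence-discrete} (cardinality-constraint case) yields, after $k$ rounds, a set $S$ with $|S|\le k$ and $(1-1/\mathrm{e}-\varepsilon)\texttt{OPT}\le F(S)$ with probability at least $1-1/n$.

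Finally I would bound the expected participation per round. Client $i$ sends its update with probability $\kappa_i=\min\{1,\kappa w_i\}\le\kappa w_i$, so the expected number of active clients is at most $\kappa\sum_{i=1}^N w_i$, and swapping the order of summation gives
\[
\sum_{i=1}^N w_i=\sum_{i=1}^N\max_{a:\,C_i\in G_a}\frac{1}{|G_a|}\le\sum_{i=1}^N\sum_{a:\,C_i\in G_a}\frac{1}{|G_a|}=\sum_{a=1}^n\frac{1}{|G_a|}\cdot|G_a|=n,
\]
so the expectation is at most $\kappa n=\tilde O(kn^2/\varepsilon^2)$, independent of $N$. The only step needing care is the application of the sparsification result: one must check that the $\max_{e}$-definition of $w_i$ satisfies the domination hypothesis $f_i(S)\le w_iF(S)$ (done above) and that the union bound ranges only over the feasible sets, so that $\kappa$ stays $\tilde O(kn/\varepsilon^2)$ and the participation count is as claimed; everything else is routine bookkeeping.
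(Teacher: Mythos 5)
Your proposal is correct and follows essentially the same route as the paper: the approximation guarantee is imported from \Cref{thm:convergence-discrete} (uniform-matroid case), and the participation bound comes from $\sum_i \kappa_i \le \kappa\sum_i w_i \le \kappa\sum_a |G_a|/|G_a| = \kappa n = \tilde O(kn^2/\varepsilon^2)$, exactly as in the paper. Your added checks (that the aggregated histogram recovers $|G_a|$ so each client computes $w_i$ correctly, and that $f_i(S)\le w_i F(S)$ holds for all feasible $S$ so the singleton-based importance factor suffices for the sparsification argument) are details the paper delegates to the preceding theorem or leaves implicit, and they are verified correctly.
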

\begin{proof}
    The approximation guarantee follows from Theorem~\ref{thm:convergence-discrete}. The expected number of clients participating in each round of Algorithm~\ref{alg:Discrete-FL} is $\tilde{O}(kn^2/\varepsilon^2)$. This is because
    \begin{align*}
        &\sum_{i=1}^n \kappa_i\leq \kappa\sum_{i=1}^n w_i\leq \tilde{O}(kn/\varepsilon^2)\sum_{i=1}^N w_i
        = \tilde{O}(kn/\varepsilon^2)\sum_{i=1}^N \max\limits_{\substack{G_a\in E, C_i\in G_a}}\frac{1}{|G_a|} \leq \tilde{O}(kn/\varepsilon^2)\sum_{j=1}^{|E|}\frac{|G_j|}{|G_j|}= \tilde{O}(kn^2/\varepsilon^2) \qedhere
    \end{align*}
\end{proof}


\end{document}